
\documentclass[journal,onecolumn]{IEEEtran}
\ifCLASSINFOpdf
  % \usepackage[pdftex]{graphicx}
  % declare the path(s) where your graphic files are
  % \graphicspath{{../pdf/}{../jpeg/}}
  % and their extensions so you won't have to specify these with
  % every instance of \includegraphics
  % \DeclareGraphicsExtensions{.pdf,.jpeg,.png}
\else
  % or other class option (dvipsone, dvipdf, if not using dvips). graphicx
  % will default to the driver specified in the system graphics.cfg if no
  % driver is specified.
  % \usepackage[dvips]{graphicx}
  % declare the path(s) where your graphic files are
  % \graphicspath{{../eps/}}
  % and their extensions so you won't have to specify these with
  % every instance of \includegraphics
  % \DeclareGraphicsExtensions{.eps}
\fi

\usepackage{amsthm}

\usepackage{amssymb}
\usepackage{mathtools}
\usepackage{cases}
\usepackage{autobreak}

\usepackage{booktabs}
\usepackage{multirow}

\usepackage{color}
\usepackage{hyperref}
\usepackage{cite}

\usepackage{amsmath}
\usepackage{extarrows}

\usepackage{comment}

\usepackage{algorithmic}
\usepackage{algorithm}

\usepackage{bbm}

\usepackage{pdfpages}
\usepackage{subfig}

\newcommand{\E}{\mathbb{E}}

\newcommand{\Var}{\mathrm{Var}}
\renewcommand\Re{\operatorname{Re}}

\newcommand{\diag}{\mathrm{diag}}

%For generally correlated paper
\newcommand{\LR}{{\bold{R}^{\frac{1}{2}}}}
\newcommand{\RT}{{\bold{T}^{\frac{1}{2}}}}
\newcommand{\MS}{{\bold{S}^{\frac{1}{2}}}}

\newcommand{\FR}{{\bold{R}}}
\newcommand{\FT}{{\bold{T}}}
\newcommand{\FS}{{\bold{S}}}
\newcommand{\BQ}{{\bold{Q}}}
\newcommand{\BH}{{\bold{H}}}
\newcommand{\BI}{{\bold{I}}}
\newcommand{\BX}{{\bold{X}}}
\newcommand{\BE}{{\bold{E}}}
\newcommand{\BZ}{{\bold{Z}}}
\newcommand{\BY}{{\bold{Y}}}

\newcommand{\BG}{{\bold{G}}}

\newcommand{\BU}{{\bold{U}}}

\newcommand{\BA}{{\bold{A}}}
\newcommand{\BB}{{\bold{B}}}
\newcommand{\BC}{{\bold{C}}}
\newcommand{\BD}{{\bold{D}}}
\newcommand{\BT}{{\bold{T}}}
\newcommand{\BS}{{\bold{S}}}

\newcommand{\BF}{{\bold{F}}}

\newcommand{\BO}{{\mathcal{O}}}
\newcommand{\Ba}{{\bold{a}}}
\newcommand{\Bh}{{\bold{h}}}
\newcommand{\By}{{\bold{y}}}

\DeclareMathOperator{\Tr}{Tr}

\newcommand{\RNum}[1]{\uppercase\expandafter{\romannumeral #1\relax}}

% *** Do not adjust lengths that control margins, column widths, etc. ***
% *** Do not use packages that alter fonts (such as pslatex).         ***
% There should be no need to do such things with IEEEtran.cls V1.6 and later.
% (Unless specifically asked to do so by the journal or conference you plan
% to submit to, of course. )

\newtheorem{theorem}{Theorem}
\newtheorem{lemma}{Lemma}
\newtheorem{proposition}{Proposition}

% correct bad hyphenation here
\hyphenation{op-tical net-works semi-conduc-tor}

\begin{document}
%
% paper title
% Titles are generally capitalized except for words such as a, an, and, as,
% at, but, by, for, in, nor, of, on, or, the, to and up, which are usually
% not capitalized unless they are the first or last word of the title.
% Linebreaks \\ can be used within to get better formatting as desired.
% Do not put math or special symbols in the title.
\title{Fundamental Limits of Non-Centered Non-Separable
Channels and Their Application in Holographic MIMO Communications}
%
%
% author names and IEEE memberships
% note positions of commas and nonbreaking spaces ( ~ ) LaTeX will not break
% a structure at a ~ so this keeps an author's name from being broken across
% two lines.
% use \thanks{} to gain access to the first footnote area
% a separate \thanks must be used for each paragraph as LaTeX2e's \thanks
% was not built to handle multiple paragraphs
%

\author{Xin~Zhang,~{\textit{Graduate Student Member,~IEEE}},~Shenghui~Song,~\IEEEmembership{Senior Member,~IEEE}, and~Khaled B. Letaief,~\IEEEmembership{Fellow,~IEEE}
% <-this % stops a space
\thanks{The authors are with the Department of Electronic and Computer Engineering, The Hong Kong University of Science and Technology, Hong Kong
(e-mail: xzhangfe@connect.ust.hk; \{eeshsong, eekhaled\}@ust.hk).}}

\maketitle

% As a general rule, do not put math, special symbols or citations
% in the abstract or keywords.
%Although engaging results regarding the fundamental limits of multiple-input multiple-output (MIMO) systems with finite-blocklength (FBL) have been obtained, the effect of channel rank-deficiency has not been well investigated. 
%Holographic MIMO communication was proposed to sufficiently exploit the propagation characteristics of electromagnetic channels and boost the channel capacity. ß
%Unfortunately, the use of electromagnetically large (compared to the wave-length) antenna arrays leads to a non-separable channel correlation structure for the small-scale fading. 
%characterizing the fundamental limits of non-centered, non-separable MIMO channels, which exhibit this correlation structure, remains a significant challenge. Surprisingly, 

\begin{abstract} 
The classical Rician Weichselberger channel and the emerging holographic multiple-input multiple-output (MIMO) channel share a common characteristic of non-separable correlation, which captures the interdependence between transmit and receiver antennas. However, this correlation structure makes it very challenging to characterize the fundamental limits of non-centered (Rician), non-separable MIMO channels. In fact, there is a dearth of existing literature that addresses this specific aspect, underscoring the need for further research in this area. In this paper, we investigate  the mutual information (MI) of non-centered non-separable MIMO channels, where both the line-of-sight and non-line-of-sight components are considered. By utilizing random matrix theory (RMT), we set up a central limit theorem for the MI and give the closed-form expressions for its mean and variance. The derived results are then utilized to approximate the ergodic MI and outage probability of holographic MIMO channels. Numerical simulations validate the accuracy of the theoretical results.
\end{abstract}

% Note that keywords are not normally used for peerreview papers.
\begin{IEEEkeywords}
Mutual information (MI), Holographic MIMO, Weichselberger correlation, Random matrix theory (RMT).
\end{IEEEkeywords}

% For peer review papers, you can put extra information on the cover
% page as needed:
% \ifCLASSOPTIONpeerreview
% \begin{center} \bfseries EDICS Category: 3-BBND \end{center}
% \fi
%
% For peerreview papers, this IEEEtran command inserts a page break and
% creates the second title. It will be ignored for other modes.
\IEEEpeerreviewmaketitle

% The very first letter is a 2 line initial drop letter followed
% by the rest of the first word in caps.
% 
% form to use if the first word consists of a single letter:
% \IEEEPARstart{A}{demo} file is ....
% 
% form to use if you need the single drop letter followed by
% normal text (unknown if ever used by the IEEE):
% \IEEEPARstart{A}{}demo file is ....
% 
% Some journals put the first two words in caps:
% \IEEEPARstart{T}{his demo} file is ....
% 
% Here we have the typical use of a "T" for an initial drop letter
% and "HIS" in caps to complete the first word.
\section{Introduction}
Multiple-input multiple-output (MIMO) technology has attracted tremendous interests due to its capability to improve the spectral efficiency and reliability of wireless communications. The fundamental limits of wireless communication systems, i.e., the ergodic mutual information (EMI) and outage probability, indicate the best performance that can be achieved and play a very important role in system design. To characterize the fundamental limits, we need proper channel models to describe the statistical behavior of the propagation environment. In particular, the EMI and outage probability are determined by various channel parameters, e.g., spatial correlation, line-of-sight (LoS) components, noise, interference, Doppler effects, etc~\cite{gao2009statistical}. %The spatial correlation reduces the capacity of MIMO systems~\cite{chuah2002capacity}. 

%The accurate model of wireless radio propagation and its analysis is crucial for MIMO systems since it is an essential prerequisite the performance evaluation and system design. Furthermore, for systems with large number of antennas, Monte-Carlo simulations become challenging, thus the performance evaluation is important. The performance of practical MIMO systems is , which arises when antennas are not spaced sufficiently far apart and  considers mutual dependence of correlation between both transmitter and receiver and

The spatial correlation between antennas reduces the capacity of MIMO systems~\cite{chuah2002capacity} and tremendous efforts have been put in describing the correlated MIMO channels. The most famous one is the correlated Rayleigh model with the Kronecker correlation structure~\cite{kermoal2002stochastic}, which models the correlation at the transmitter and the receiver separately and is mathematically represented by the separable variance profile~\cite{hachem2008new}. However, the Kronecker structure fails to capture the dependence between the transceivers and is not sufficient to describe some realistic indoor MIMO channels~\cite{ozcelik2003deficiencies}. To this end, the Weichselberger model was proposed and validated for both indoor office and suburban outdoor areas~\cite{weichselberger2006stochastic}. Compared with the Kronecker model, the Weichselberger model provides a more general structure in representing a variety of channel scenarios~\cite{wen2011sum,gao2009statistical}. Specifically, it introduces the non-separable structure to allow for arbitrary coupling between the transmit and receive antennas, which takes the separable structure as a special case~\cite{ozcelik2005makes,wen2011sum}. This general model was later utilized to model intelligent reflecting surface-aided channels~\cite{zheng2023}.

Recently, the non-separable model was also utilized in modeling holographic MIMO systems, which were proposed to fully exploit the propagation characteristics of the electromagnetic channel~\cite{wei2022multi,ji2023extra,wan2023can,wan2023mutual,moustakas2022reconfigurable}. In the electromagnetically large regime, the holographic MIMO channel is approximated by the Fourier plane-wave series expansion obtained by the uniform discretization of the Fourier spectral representation for the stationary electromagnetic random field~\cite{pizzo2022spatial}. The corresponding channel in the angular domain is represented by a random matrix with zero mean and non-separable variance profile. Furthermore, as the LoS component becomes more dominating in the millimeter wave and terahertz bands while the small-scale fading is able to characterize the environments with common propagation properties~\cite{marzetta2018spatially,pizzo2020spatially,pizzo2022fourier}, it is necessary to consider both the LoS and non-line-of-sight (NLoS) components in holographic MIMO channels. However, the fundamental limits of non-centered (with both LoS and NLoS components) and non-separable MIMO channel are not yet available in the literature. 
%This indicates that the correlation between the transmit and receive antennas can no longer be ignored for holographic MIMO systems~\cite{pizzo2020holographic}. 
%This non-separable structure brings challenges to the performance evaluation of holographic MIMO channels. Thus, the characterization of fundamental limits of MIMO channels with non-separable correlation is important for practical systems.

%The asymptotic random matrix theory (RMT) has been shown to be powerful in tackling the performance evaluation over MIMO channels. By assuming that the number of the transceiver antennas go to infinity with the same pace, asymptotic RMT can be utilized to obtain a strikingly simple approximation, which avoids heavy Monte Carlo simulations and facilitates physical insights analysis~\cite{hachem2008new,zhang2022asymptotic}. Specifically, the EMI and MI distribution can be obtained by the deterministic equivalents and central limit theorem (CLT) for the MI, respectively. With the increasing size of MIMO antenna arrays, especially for holographic MIMO channels, asymptotic results obtained by RMT will be more accurate. In fact, RMT has been widely used in the performance evaluation of large-dimension MIMO systems and shown to be accurate even for small-dimension systems~\cite{hachem2008new,moustakas2022reconfigurable,zhang2023secrecy}. Luckily, with the aid of RMT, there have been engaging results regarding the first and second-order analysis for MIMO channels.

\subsection{Existing Works}
There have been many engaging results regarding the first order and second order analysis for centered/non-centered, separable/non-separable MIMO channels, as shown in Table~\ref{ilu_intro}. In the following, we briefly introduce these related works. 
\begin{table}
\caption{Summary of Related Works.}
\label{ilu_intro}
\centering
\begin{tabular}{|c|c|c|c|c| }
\hline
&  Centered and separable & Non-centered and separable & Centered and non-separable & Non-centered and non-separable \\
\hline 
First-order analysis  &~\cite{tulino2005impact, wei2022multi, pizzo2022fourier}& ~\cite{dumont2010capacity,zhang2021bias}  &  ---  & ~\cite{hachem2007deterministic,wen2011sum,lu2016free} \\
\hline
Second-order analysis &~\cite{hachem2008new,hu2019central,bao2015asymptotic} &  ~\cite{hachem2012clt}  & ~\cite{hachem2008clt}  & This work
\\
\hline
\end{tabular}
\vspace{-0.3cm}
\end{table}

\textbf{First-order analysis:} The first order analysis has covered several scenarios. For the centered and separable case, Tulino~\textit{et al.} performed the EMI analysis for point-to-point MIMO channels in~\cite{tulino2005impact}. For the non-centered and separable case, Dumont~\textit{et al.} derived the closed-form evaluation for the EMI over Rician Kronecker point-to-point MIMO channels and showed that the convergence rate for the deterministic approximation is $\BO(M^{-1})$, where $M$ is the number of transmit antennas~\cite{dumont2010capacity}. In~\cite{zhang2021bias}, Zhang~\textit{et al.} investigated the EMI of non-centered non-Gaussian MIMO fading channels with separable structure. For the non-centered and non-separable case, Hachem~\textit{et al.} derived the deterministic equivalent for the MI over point-to-point MIMO channels in~\cite{hachem2007deterministic}. Considering the non-centered non-separable (Rician Weichselberger) channels, Wen~\textit{et al.} derived the uplink EMI for multi-user MIMO systems in~\cite{wen2011sum} and Lu~\textit{et al.} evaluated the EMI for multiple access channels with distributed sets of correlated antennas in~\cite{lu2016free}. Although the recent holographic MIMO channels have the general non-separable correlation~\cite{pizzo2022fourier}, existing works only focused on the simple separable correlation case. In~\cite{pizzo2022fourier}, Pizzo~\textit{et al.} analyzed the EMI with the separable correlation structure. In~\cite{wei2022multi}, Wei~\textit{et al.} extended the Fourier model in~\cite{pizzo2022fourier} to a multi-user scenario and derived the lower bound for the spectral efficiency of the system with the separable correlation structure. %In fact, the EMI evaluation for non-separable holographic MIMO channels can also be solved by the results in~\cite{hachem2007deterministic} with appropriate adaptations.

\textbf{Second-order analysis:} The MI distribution for centered and separable (including uncorrelated) MIMO channels was investigated by~Hachem~\textit{et al.}~\cite{hachem2008new}, Bao~\textit{et al.}~\cite{bao2015asymptotic}, Hu~\textit{et al.}~\cite{hu2019central}, through setting up a central limit theorem (CLT) for the MI. For the non-centered and separable case, Hachem~\textit{et al.} set up a CLT for point-to-point MIMO channels with separable
variance profile in~\cite{hachem2012clt}. For the centered and non-separable case, Hachem~\textit{et al.} investigated the asymptotic distribution of MI for MIMO channels with a non-separable variance
profile in~\cite{hachem2008clt}. However, the MI distribution and the CLT for the MI over non-centered and non-separable MIMO channels is not available in the literature. In particular, the MI distribution for holographic MIMO channels has not been investigated in the literature. 

%the EMI evaluation for non-separable holographic MIMO channels can be solved by~\cite{hachem2007deterministic} but

%Although the EMI evaluation for non-separable holographic MIMO channels can be solved by~\cite{hachem2007deterministic}, to the best of the authors' knowledge, the MI distribution of non-centered non-separable MIMO channel has not been investigated in the literature. 

%the analysis for the non-centered non-separable MIMO channel is still in its infancy because 
%Assuming the wrapped Gaussian azimuth angle distribution and truncated Laplacian elevation angle distribution with the separable structure, the authors of~\cite{zhang2023capacity} derived the variance profile for the holographic MIMO channels and analyzed the degree of freedom (DoF). Considering the potential of evanescent waves in near-field holographic communications and assuming the separable structure, the authors of~\cite{ji2023extra} analyzed the EMI and validated the improvement of the DoF by simulations. 

%However, the closed-form evaluation for the fundamental limits of MIMO channels with non-separable correlation structure, i.e., EMI and outage probability, with non-separable variance profiles, are not yet available in the literature. In the electromagnetically large regime, the channel matrix will have large dimensions in the angular domain
In this paper, we will characterize the fundamental limits of non-centered and non-separable MIMO channels by utilizing the asymptotic random matrix theory (RMT). With the increasing size of MIMO antenna arrays, especially for holographic MIMO channels, RMT becomes a very promising method for the related performance evaluation. In fact, RMT has been widely used in the performance evaluation of large-dimension MIMO systems and shown to be accurate even for small-dimension systems~\cite{hachem2008new,moustakas2022reconfigurable,zhang2023secrecy}. Specifically, we will utilize RMT to investigate the distribution of the MI for non-centered and non-separable MIMO channels by setting up a CLT for the MI and the results will be applied to characterize the outage probability of Rician Weichselberger and holographic MIMO channels. 

%holographic Rician MIMO channels with the non-separable correlation structure by setting up a CLT for the MI. 

\subsection{Challenges}
Due to the non-centered and non-separable channel structure, the evaluation of the asymptotic variance of the MI and proof of the Gaussianity are much involved. Specifically, different from the separable case that requires only a system of two equations to characterize the MI, the non-separable correlation requires a system of $M+N$ equations, where $M$ and $N$ denote the number of antennas at the transmitter and receiver, respectively. Such a large number of parameters makes the EMI and variance evaluation challenging. To show the Gaussianity, we employ the martingale method~\cite{bai2008clt} by decomposing the centered MI into the summation of martingales so that the variance evaluation resorts to the sum of squares of martingales. However, due to the existence of both the LoS component and non-separable variance profile, the evaluation for the sum of squares of martingales is very challenging.

\subsection{Contribution}
%\begin{itemize}
%\item
The contributions of this paper are summarized as follows. 

\begin{itemize}
\item
By utilizing RMT, we set up the CLT for the MI of non-centered and non-separable MIMO channels. The result generalizes the CLT for the MI of centered separable channels in~\cite{hachem2008clt} by adding the LoS component. Meanwhile, the derived results generalizes the CLT for the MI with non-centered separable channels in~\cite{hachem2012clt} by considering the non-separable variance profile. The theoretical results can be utilized to evaluate the outage probability over both the Rician Weichselberger MIMO channels~\cite{weichselberger2006stochastic} and holographic MIMO channels~\cite{pizzo2022fourier}. 

\item
In the proof of the CLT, we show that the evaluation for the square of martingales resorts to the evaluation for the second-order resolvents terms, which can be characterized by a system of $M+N$ equations. Then, we solve for the terms using Cramer's rule and show that the sum can be approximated by a $\log\det$ term. The evaluation for the second-order resolvents can also be utilized for the finite-blocklength analysis over holographic MIMO channels~\cite{hoydis2015second,zhang2022second}.
\end{itemize}

%\item
%We have extended the Fourier model in~\cite{pizzo2022fourier} by considering the mutual coupling (MC) effect among antennas, which is non-negligible for dense arrays. Specifically, we give the CLT for the MI with MC and the closed-form outage probability evaluation.  Numerical results validate the accuracy of the proposed evaluation.
%\end{itemize}
%}}

%2). Based on the derived results, we evaluate the impact of the antenna spacing on the fundamental limits of holographic channels, with interesting physical insights. First, the EMI increases as the antenna spacing decreases and the associated scaling law is derived for planar arrays with a given physical size. Second, when the transmit and receive arrays have different physical sizes, the variance of the MI remains almost the same while the EMI increases as the antenna spacing decreases, resulting in a smaller outage probability. Numerical simulations validate the theoretical results.
\subsection{Paper Organization}
The rest of this paper is organized as follows. Section~\ref{sec_model} presents the system model and problem formulation. Section~\ref{com_section} introduces the preliminary results including the approximation for the ergodic MI and gives the main results of this paper including the CLT for the MI, which is also used for the approximation of the outage probability. Section~\ref{proof_main_clt} presents the proof of the main results. The theoretical results are validated by numerical simulations in Section~\ref{sec_simu}. Finally, Section~\ref{sec_con} concludes the paper. 
%some preliminary results on the IRS-aided MIMO system and presents Section~\ref{sec_dmt}

\textit{Notations}:  The bold upper case letters and bold lower case letters denote the matrix and vector, respectively, and $\mathrm{Re} \left\{ \cdot \right\}$ represents the real part of a complex number. The probability measure is represented by $\mathbb{P}(\cdot)$. The space of $N$-dimensional vectors and $M$-by-$N$ matrices are represented by $\mathbb{C}^{N}$ and $\mathbb{C}^{M \times N}$, respectively. The conjugate transpose and the element-wise square root of matrix $\BA$ are given by $\bold{A}^{H}$ and $\BA^{\circ\frac{1}{2}}$, respectively. The $(i,j)$-th entry of $\bold{A}$ and element-wise product of matrices are denoted by $[\bold{A}]_{i,j}$ and $\odot$, respectively. The trace and the spectral norm of $\bold{A}$ are denoted by $\Tr\bold{A} $ and $\|\bold{A} \|$, respectively. The expectation operator and the cumulative distribution function (CDF) of standard Gaussian distribution are denoted by $\E$ and $\Phi(x)$, respectively. The circularly complex Gaussian and real Gaussian distribution are represented by $\mathcal{CN}$ and $\mathcal{N}$, respectively.  The indicator function is denoted by $\mathbbm{1}_{(\cdot)}$. The the almost sure convergence, the convergence in distribution, and the convergence in probability are denoted by
 $\xrightarrow[N \rightarrow \infty]{{a.s.}} $, $ \xrightarrow[N \rightarrow \infty]{\mathcal{D}} $, and $ \xrightarrow[N \rightarrow \infty]{\mathcal{P}} $ respectively. The Big-O, the Little-o are denoted by $O(\cdot)$ and $o(\cdot)$, respectively. Specifically, $f(n)\in O(g(n))$ if and only if there exists a positive constant $c$ and a nonnegative integer $n_{0}$ such that $f(n) \le cg(n)$ for all $n \ge n_{0}$. $f(n)\in o(g(n))$ if and only if there exists a nonnegative integer $n_{0}$ such that $f(n) \le cg(n)$ for all $n \ge n_{0}$ for all positive $c$~\cite{cormen2009introduction}. Here $[N]$ denotes the set $\{1,2,...,N \}$.

 %${\mathrm{d} \cdot}$ denotes the differential operator.

%$f(n) \in \Theta(g(n))$ if and only if there exist positive $c_1$ and $c_2$ and nonnegative integer $n_{0}$ such that $c_1 g(n) \le f(n) \le c_2 g(n)$ for all $n\ge n_{0}$

\section{System Model and Problem Formulation}
\label{sec_model}

\subsection{MIMO Communications}
Consider a point-to-point MIMO system with $N_R$ receive antennas and $N_T$ transmit antennas and assume that perfect channel information (CSI) is available at the receiver. The receive signal $\bold{y}\in \mathbb{C}^{N_{R}}$ is given by
\begin{equation}
\label{sig_mod}
\bold{y}=\BH_{s}\bold{x}+\bold{n},
\end{equation}
where $\bold{x}\in \mathbb{C}^{N_{S}} \sim \mathcal{CN}(0,\BI_{N_S})$ and $\bold{n}\in \mathbb{C}^{N_{R}} \sim \mathcal{CN}(0,\bold{I}_{N_{R}})$ represent the transmit signal and the additive white Gaussian noise (AWGN), respectively. Here $\BH_s \in \mathbb{C}^{N_R\times N_T}$ denotes the channel matrix and the subscript $s$ is utilized to represent different channel models.

\subsection{Non-Separable Correlation}
\label{chamodel}
In this part, we introduce two non-separable correlated models, i.e., Rician Weichselberger and holographic MIMO channels, to illustrate the non-separable correlation. Despite the common non-separable correlation structure, these two channel models were proposed with different motivations. The details are given below.

\subsubsection{Rician Weichselberger Channels}
The Rician Weichselberger was proposed to compensate for the Rician Kronecker model by considering the joint correlation between the transmit antennas and receive antennas. Hence, we start by introducing the Rician Kronecker model. The Rician channel with Kronecker correlation model can be represented by~\cite{dumont2010capacity}
\begin{equation}
\label{kron_rician}
\BH_{K}= \BA_{K}+ \LR \BY_{K}\RT,%\bold{U}\left( \bold{\Sigma}^{\circ \frac{1}{2}} \odot  \BY \right) \bold{V}^{H},
\end{equation}
where $\FR \in \mathbb{C}^{N_R \times N_R}$, $\FT \in \mathbb{C}^{N_T \times N_T}$, and $\BA_{K} \in \mathbb{C}^{N_R\times N_T}$ denote the correlation matrices at the receiver and transmitter, and the LoS component, respectively. $\BY_{K} \in \mathbb{C}^{N_R \times N_T}$ is a random matrix consisting of independent and identically distributed (i.i.d.) entries with zero mean and variance $N_{T}^{-1}$. The channel model in~(\ref{kron_rician}) can be equivalently represented by~\cite{weichselberger2006stochastic}
\begin{equation}
\label{kron_rician2}
\BH_{K}= \BA_{K}+ \BU_{K} ( \bold{D}^{\frac{1}{2}}\BY_{K}\widetilde{\bold{D}}^{\frac{1}{2}} ) \bold{V}^{H}_{K}= \BA_{K}+ \BU_{K} (  (\bold{d}_{K}\widetilde{\bold{d}}^{T}_{K} )^{\circ \frac{1}{2}} \odot \BX_{K}  ) \bold{V}^{H}_{K},
\end{equation}
where the correlation matrices have the eigen-decomposition $\FR=\BU_{K} \BD_{K}\BU^{H}_{K} $ and $\FT=\bold{V}_{K} \widetilde{\BD}_{K} \bold{V}^{H}_{K} $ with $\bold{d}_K=\diag(\BD_{K})$ and $\widetilde{\bold{d}}_K=\diag(\widetilde{\BD}_{K})$. Here $\BX_{K}=\BU^{H}_{K}\BY_{K} \bold{V}_{K}$ has the same distribution as $\BY_{K}$ due to the unitary invariant attribute of Gaussian matrices. It can be observed from~(\ref{kron_rician2}) that the correlation at the transmitter does not have impact on the receiver side. 

The Weichselberger model was proposed to alleviate the restriction in~(\ref{kron_rician2}) and describe the joint spatial structure of the channel. The Rician Weichselberger Model can be represented by
\begin{equation}
\label{weich_cha}
\BH_{W}=  \BA_{W}+\bold{U}_W\left( \bold{\Sigma}^{\circ \frac{1}{2}}_{W} \odot  \BX_{W} \right) \bold{V}^{H}_W=\bold{U}_W\left( \overline{\BA}_{W}+\bold{\Sigma}^{\circ \frac{1}{2}}_{W} \odot  \BX_{W} \right) \bold{V}^{H}_W=\bold{U}_W\overline{\BH}_{W} \bold{V}^{H}_W,
\end{equation}
where $\overline{\BH}_{W}= \overline{\BA}_{W}+\bold{\Sigma}^{\circ \frac{1}{2}}_{W} \odot  \BX_{W} $, $ \overline{\BA}_{W}=\BU^{H}_W  {\BA}_{W}\bold{V}_W$, $\bold{U}_W \in \mathbb{C}^{N_R\times N_R }$ and $\bold{V}_W \in \mathbb{C}^{N_T\times N_T }$ are deterministic unitary matrices and $\BA_{W} \in \mathbb{R}^{N_R}$ is the LoS component. Matrix $\BX_{W}\in\mathbb{C}^{N_R\times N_T}$ is a random matrix consisting of i.i.d. entries with zero mean and variance $N_{T}^{-1}$. The variance profile matrix $\bold{\Sigma}^{\circ \frac{1}{2}}_{W} \in \mathbb{R}^{N_R\times N_T}$, consisting of non-negative entries, is called the ``coupling matrix'' since $[\bold{\Sigma}^{\circ \frac{1}{2}}_{W}]_{i,j}$ characterizes the energy coupled between the $i$-th eigenvector of $\BU_W$ and the $j$-th eigenvector of $\bold{V}_W$~\cite{kermoal2002stochastic}. 

\textit{Non-separable correlation in Weichselberger model:} Mathematically, the separability is determined by whether the ``coupling matrix'' $\bold{\Sigma}_W$ can be written as $\bold{\Sigma}_{W}=\bold{d}\widetilde{\bold{d}}^{T}$, where $\bold{d} \in \mathbb{R}^{N_R}$ and $\widetilde{\bold{d}} \in \mathbb{R}^{N_T}$. In fact, $\bold{\Sigma}_W$ is generally non-separable and the Kronecker correlation model in~(\ref{kron_rician2}) is a special case of~(\ref{weich_cha}) when $\bold{\Sigma}_{W}= \bold{d}\widetilde{\bold{d}}^{T}$.

\subsubsection{Holographic Channels} Inspired by the very high energy and spectral efficiency of massive MIMO systems~\cite{ngo2013energy,bjornson2017massive}, the dense and electromagnetically large (compared with the wavelength) antenna array, referred to as the ``Holographic array'', was proposed as a promising technology to further boost the performance limits of wireless communications~\cite{marzetta2018spatially}. ``Holographic'', which literately means ``describe everything'' in the ancient Greek, refers to the regime where the MIMO system is designed to fully exploit the propagation characteristics of the electromagnetic channel~\cite{dardari2021holographic}. %Based on the electromagnetic principle and the fact that the spherical wave can be decomposed exactly into an infinite number of plane waves, the authors of~\cite{marzetta2018spatially,pizzo2020spatially,pizzo2022fourier} showed that the wave propagation can be formulated by plane waves irrespective of the scattering condition and the distance between the transceivers. In such statistical models, the scattering mechanism is reflected by the angular response, which is modeled as random variables and utilized to characterize the propagation from the transmit direction to the receive direction~\cite{saxon1955tensor,nieto1986generalized}. 

%The above-mentioned stochastic models were proposed to characterize the environment with common propagation properties, e.g., spatial correlations~\cite{marzetta2018spatially,pizzo2020spatially,pizzo2022fourier}. 

%In this section, we will first introduce the system model and then give the problem formulation. 
Fig.~\ref{sys_fig} shows the holographic MIMO channel consisting of two parallel planar arrays that are perpendicular to the $z$-axis, where $\bold{s}_{j}=[s_{x_j},s_{y_j},s_{z_j}]^{T}$ and $\bold{r}_i=[r_{x_i},r_{y_i},r_{z_i}]^{T}$ represent the coordinates of the $j$-th transmit antenna and $i$-th receive antenna, respectively. The two planar arrays span the rectangular areas in the $xy$-plane with the size $L_{S,x}\times L_{S,y}$ at the transmit array and $L_{R,x}\times L_{R,y}$ at the receive array. There are $N_S$ transmit and $N_R$ receive antennas and they are deployed uniformly with spacing $\Delta_{S}$ and $\Delta_{R}$ at the transmit and receive array, respectively, which refer to the distance between centers of adjacent antennas. The wave with wavelength $\lambda$ propagates in a homogeneous and infinite medium without polarization. In the following, we will use $\BH_s$ and $\BH_a$ to represent the channel matrix in the spatial and angular domain, respectively, and use $\BH_h$ to denote the discretized version of $\BH_a$.

\begin{figure}[t!]
\centering\includegraphics[width=0.45\textwidth]{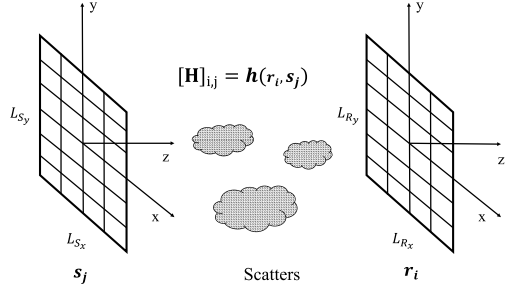}
\caption{Holographic MIMO System with Planar Arrays.}
\label{sys_fig}
\end{figure}

%Different from the model in~\cite{pizzo2022fourier}, which focused on the characterizing of small-scale fading, the line-of-sight channel is considered.

%Here, the entry $[\BH_{s}]_{i,j}$ denotes the propagation coefficient between the $j$-th transmit antenna and $i$-th receive antenna. 

%The channel model considered in this paper is mainly based on the Fourier plane-wave expansion model in~\cite{pizzo2022fourier}, which is introduced as follows.

%The independent Rayleigh model is unable to characterize the channel when the spacing of antennas is less than half wavelength and the small-scale fading is modeled as a spatially correlated random field~\cite{haghighatshoar2016massive}. In this case, the correlation structure has important impact of the accuracy of the model so a proper correlation model is very important. To achieve this goal, a Fourier planewave representation was proposed in~\cite{marzetta2018spatially,pizzo2022fourier}, which complies with the physics of wave propagation and incorporates spatial correlation effects due to directionality of the field generated by the scattering. 

%The classical approach is to develop physically-meaningful models for R from which realizations of H are then obtained.

%\subsection{Channel Model}
\begin{figure}[t!]
\centering\includegraphics[width=0.45\textwidth]{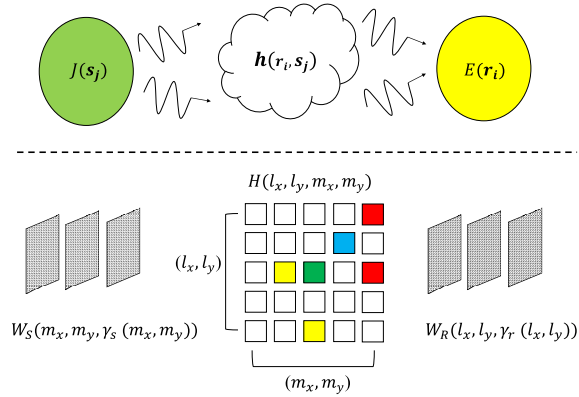}
\caption{Plane-wave representations.}
\label{plane_wave_repre}
\vspace{-0.3cm}
\end{figure}
\begin{figure}[t!]
\centering\includegraphics[width=0.45\textwidth]{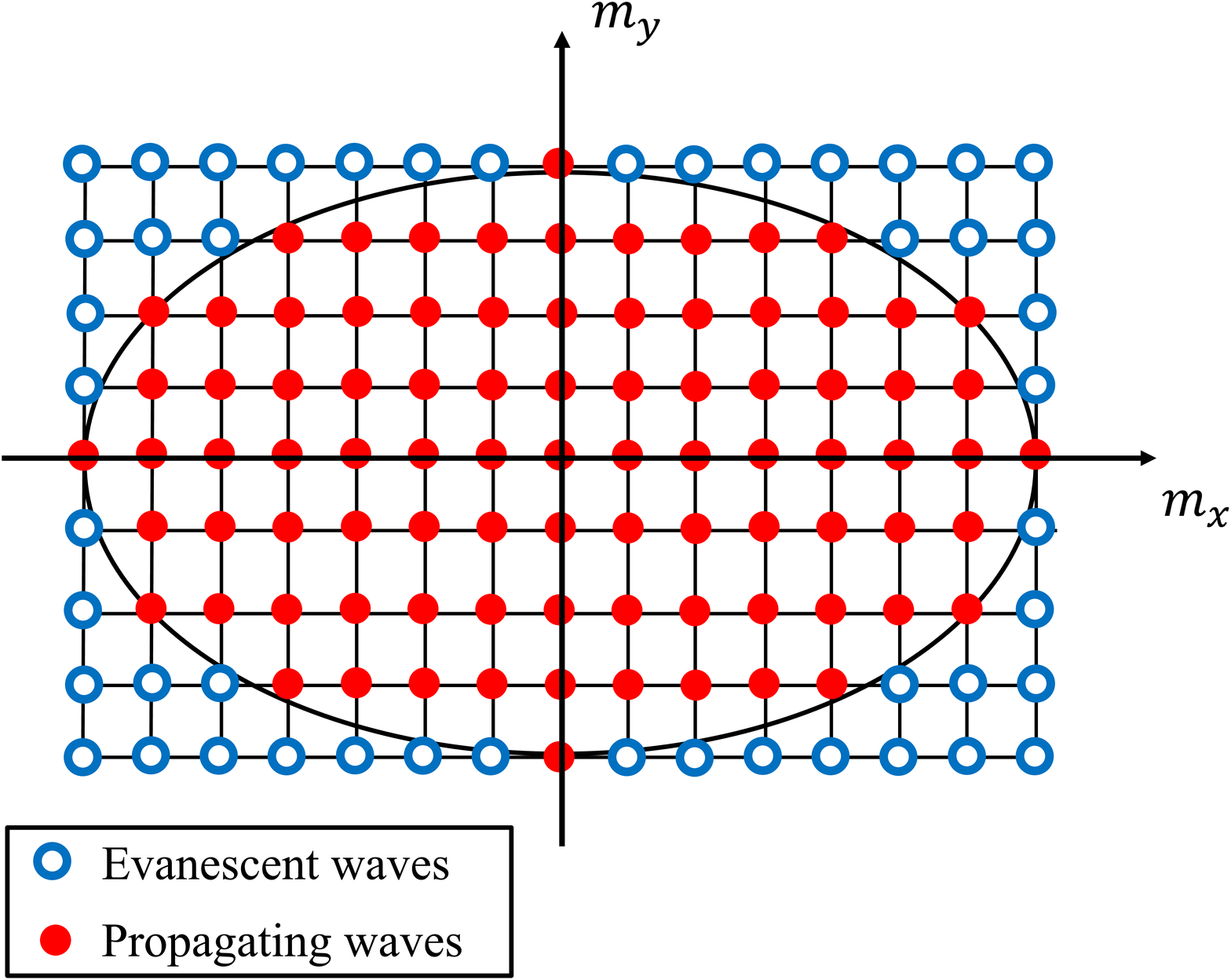}
\caption{Propagating and evanescent waves.}
\label{eva_plane}
\vspace{-0.3cm}
\end{figure}

We focus on the \textit{Fourier plane-wave representation} of holographic MIMO channels~\cite{pizzo2022fourier,pizzo2020holographic}. As shown by Fig.~\ref{plane_wave_repre}, for each pair of receive and transmit antennas $(\bold{r}_i,\bold{s}_j)$, by using Fourier expansion, both the transmit field $J(\bold{s}_j)$ and receive field $E(\bold{r}_i)$ passing through the channel $h(\bold{r}_{i},\bold{s}_j)$ can be represented by plane-waves and parameterized by the horizontal wavenumbers $(\frac{2\pi m_{x}}{L_{S,x}},\frac{2\pi m_{y}}{L_{S,y}})$ and $(\frac{2\pi l_{x}}{L_{R,x}},\frac{2\pi l_{y}}{L_{R,y}})$, respectively. The receive plane-wave coefficients $W_{R}(l_x,l_y,\gamma_R(l_x,l_y))$ can be obtained by taking a four-variable scattering kernel $H(l_x,l_y,m_x,m_y)$ to the transmit plane-wave coefficients $W_{S}(m_x,m_y,\gamma_S(m_x,m_y))$. As shown in Fig.~\ref{eva_plane}, the plane-waves for the transmit field include the propagating waves and evanescent waves, which are parameterized by pairs $(m_x,m_y)$ inside and outside the elliptical area, respectively. Here we assume that the evanescent waves are negligible since they decay exponentially with $z$\footnote{Some researchers think that the evanescent waves can alway be neglected~\cite{franceschetti2017wave} but some think it is not yet settled~\cite{ji2023extra}.}. For LoS and non-LoS channels, the kernel $H(l_x,l_y,m_x,m_y)$ is modeled as deterministic and random matrices, respectively~\cite{pizzo2022fourier,pizzo2020holographic}, and the associated field is stationary horizontally. The mathematical formulation of the channel model is omitted here and given in Appendix~\ref{fourier_details} for interested readers. 

%whose rows and columns are indexed by $(l_x,l_y)$ and $(m_x,m_y)$, respectively.The Fourier plane-wave representation of electromagnetic channels was investigated by [12] 

%a) both transmitted field and received field can be exactly expanded in plane-waves parameterized by horizontal wave-number; b) the plane-wave expansions entail both ordinary plane-waves and evanescent waves (explain that these decay exponentially fast in z, and carry only reactive power in z); c) irrespective of l.o.s. or arbitary scattering, the receive plane-wave coefficients are related to the transmit plane-wave coefficients through a four-variable scattering kernel; d) evanescent waves are assumed negligible; e) small-scale fading is modeled by assuming that the scattering kernel constitutes four-variable complex-valued white noise, and that the resulting field is stationary horizontally. A figure or two would be beneficial. Different from~\cite{pizzo2022fourier,pizzo2020holographic}, 

According to~\cite[Theorem 2]{pizzo2022fourier}, in the~\textit{electromagnetically large regime}, i.e.,
\begin{equation}
\label{large_exp}
\min\{\frac{L_{S,x}}{\lambda},\frac{L_{S,y}}{\lambda},\frac{L_{R,x}}{\lambda},\frac{L_{R,x}}{\lambda} \} \gg 1,
\end{equation}
$\BH_s$ can be approximated by the discretized Fourier spectral representation through uniformly sampling $(\kappa_x,\kappa_y)$ at the direction $(\frac{2\pi m_x}{L_{S,x}},  \frac{2\pi m_y}{L_{S,y}})$ and $(k_x,k_y)$ at the direction $(\frac{2\pi l_x}{L_{R,x}},  \frac{2\pi l_y}{L_{R,y}})$, respectively. It is given by
\begin{equation}
\label{hs_fourier}
\begin{aligned}
&[\BH_{s}]_{ij}=h_s(\bold{r}_i,\bold{s}_j)= \sum_{(l_x,l_y)\in \mathcal{E}_{R} }\sum_{(m_x,m_y)\in \mathcal{E}_{S} } 
%\\
%&
{H}(l_x,l_y,m_x,m_y)
a_{r}(l_x,l_y,\bold{r}_i)a_{s}^{*}(m_x,m_y,\bold{s}_j),
\end{aligned}
\end{equation}
where the discretized Fourier coefficient ${H}(l_x,l_y,m_x,m_y)$, named as coupling coefficients~\cite{miller2000communicating}, is the angular response and only related to the $xy$-coordinates since $k_z, \kappa_z$ can be determined by $\gamma(k_x,k_y)$ and $\gamma(\kappa_x,\kappa_y)$, respectively. In this case, the region in Fig.~\ref{eva_plane} can be represented by the lattice ellipse $\mathcal{E}_{S}=\{ (m_x,m_y)\in \mathbb{Z}^{2}  | (\frac{m_x}{L_{S,x}})^2+(\frac{m_y}{L_{S,y}})^2  \le 1   \}$ and $\mathcal{E}_{R}=\{ (l_x,l_y)\in \mathbb{Z}^{2}  | (\frac{l_x}{L_{R,x}})^2+(\frac{l_y}{L_{R,y}})^2  \le 1   \}$. The cardinality of $\mathcal{E}_{R}$ and $\mathcal{E}_{S}$ can be evaluated by 
\begin{equation}
\begin{aligned}
n_R & =\lceil  \frac{\pi L_{R,x}L_{R,y} }{\lambda^2}  \rceil +o(\frac{ L_{R,x}L_{R,y} }{\lambda^2}), 
\\
n_S & =\lceil  \frac{\pi L_{S,x}L_{S,y} }{\lambda^2}  \rceil +o(\frac{ L_{S,x}L_{S,y} }{\lambda^2}),
\end{aligned}
\end{equation}
 such that $(l_x,l_y)$ and $(m_x,m_y)$ can be indexed by $[n_R]$ and $[n_S]$, respectively. Different from~\cite{pizzo2022fourier,pizzo2020holographic}, both the LoS and NLoS components are considered in this paper such that the channel matrix in the spatial domain can be represented by
\begin{equation}
\label{H_HA}
\BH_{s}={\BA}_{s}+\BY_{s}=\sqrt{G_R G_S  N_R N_S } \bold{\Phi}_{R}{ \BH}_h\bold{\Phi}_{S}^{H},
\end{equation}
where
\begin{equation}
\label{H_reform}
\BH_h=\BA_h+\BY_h=\BA_h+\bold{\Sigma}^{\circ \frac{1}{2}}_h \odot \BX_h.
\end{equation}
Here, ${\BA}_s\in \mathbb{C}^{N_R\times N_{S}}$ and $\BY_s \in \mathbb{C}^{N_R\times N_{S}}$ denote the LoS and NLoS components, respectively, and ${\BA_h}\in \mathbb{C}^{n_R\times n_{S}}$ and $\BY_h \in \mathbb{C}^{n_R\times n_S}$ are the corresponding representations in the wave number domain. The coefficient of the LoS component can be obtained by~\cite[Eq. (12)]{pizzo2020holographic} and the coefficient of the NLoS component is given by $[\BY_h]_{i,j}=Y(l_x,l_y,m_x,m_y)\sim\mathcal{CN}(0, \frac{\sigma^2(l_x,l_y,m_x,m_y)}{n_{S}})$ such that the coefficient matrix of the small-scale fading can be written as ${\BY_h}= \bold{\Sigma}^{\circ \frac{1}{2}}_h\odot \BX_h
$ with $[\bold{\Sigma}_h]_{i,j}=\sigma^2(l_x,l_y,m_x,m_y)$ denoting the variance profile of the coupling coefficients. Matrix $\BX_h \in \mathbb{C}^{n_R\times n_S}$ is an i.i.d. random matrix with circularly-symmetric complex-Gaussian random entries whose variance is $\frac{1}{n_S}$. $\bold{\Phi}_{R}$ and $\bold{\Phi}_{S}$ are semi-unitary matrices consisting of the Fourier orthogonal bases. The coefficient $\sqrt{N_R N_S}$ rises from the normalization of the basis. $G_S$ and $G_R$ represent the patch antenna gain at the transmitter and receiver, respectively, with
\begin{equation}
\label{antenna_gain}
G=\frac{4\pi \tau S}{\lambda^2},
\end{equation}
where $S$ is the antenna area and $\tau<1$ denotes the antenna efficiency~\cite{balanis2016antenna}. 

\textit{Non-separable correlation in holographic MIMO channels:} In the holographic MIMO systems, the small-scale fading can be also divided into non-separable and separable cases according to the structure of $\bold{\Sigma}$. The variance of the small-scale fading is given by
\begin{equation}
\begin{aligned}
\label{var_A}
&\sigma^2(l_x,l_y,m_x,m_y)=
%\\
%&
 \iiiint\limits_{\Omega_{S}(m_x,m_y)\times \Omega_{R}(l_x,l_y)} S^2(\theta_{R},\phi_R,\theta_S,\phi_S) \mathrm{d} \Omega_{R} \mathrm{d} \Omega_{S},
\end{aligned}
\end{equation}
where $(\theta_{R},\phi_R,\theta_S,\phi_S)$ represents the spherical coordinates of $(l_x,l_y,m_x,m_y)$, and $\Omega_{S}(m_x,m_y)$ and $\Omega_{R}(l_x,l_y)$ denote the corresponding areas of $\mathcal{E}_{m_x,m_y}$ and $\mathcal{E}_{l_x,l_y}$ in spherical coordinates, respectively. $S^2(\theta_{R},\phi_R,\theta_S,\phi_S)$ represents the bandlimited 4D power spectral density of $h_{s}(\bold{r},\bold{s})$. In general, $\sigma^2(l_x,l_y,m_x,m_y)$ can not be decomposed as a product of the functions of receive wavenumbers $(l_x,l_y)$ and transmit wavenumbers $(m_x,m_y)$, i.e., $\sigma^2(l_x,l_y,m_x,m_y) \neq \sigma^2_{S}(m_x,m_y)\sigma^2_{R}(l_x,l_y)$. This model is referred to as the~\textit{non-separable profile}. The~\textit{separable profile} refers to the case with
\begin{equation}
\label{sep_def}
\sigma^2(l_x,l_y,m_x,m_y)=\sigma^2_{S}(m_x,m_y)\sigma^2_{R}(l_x,l_y),
\end{equation}
where $\sigma^2_{S}(m_x,m_y)$ and $\sigma^2_{R}(l_x,l_y)$ represent the channel power transfer at the transmitter and receiver, respectively. The separable model can be obtained from the non-separable case by taking $S^2(\theta_{R},\phi_R,\theta_S,\phi_S) =1$. Under such circumstances, $\bold{\Sigma}_h$ can be represented by $\bold{\Sigma}_h=\bold{d}\widetilde{\bold{d}}^{T}$,
where $\bold{d}\in \mathbb{R}^{n_R}=[\sigma^2_{R}(l_{x,1},l_{y,1}),...,\sigma^2_{R}(l_{x,n_R},l_{y,n_R})]^{T}$ and $\widetilde{\bold{d}}\in\mathbb{R}^{n_S}=[\sigma^2_{S}(m_{x,1},m_{y,1}),...,\sigma^2_{S}(m_{x,n_S},m_{y,n_S})]^{T}$. This indicates that $\bold{\Sigma}^{\circ \frac{1}{2}}_h \odot \BX_h= \bold{D}^{\frac{1}{2}}\BX_h\widetilde{\bold{D}}^{\frac{1}{2}}$, where $\bold{D}=\diag(\bold{d})$ and $\widetilde{\bold{D}}=\diag(\widetilde{\bold{d}})$. It can be observed that the separable structure is a special case of the non-separable structure and the performance of non-LoS MIMO channels with separable profile has been evaluated in~\cite{pizzo2022fourier}. However, the fundamental limits of holographic MIMO systems with non-separable correlation structure is not yet available in the literature and will be the focus of this paper.

\begin{comment}
\begin{equation}
\label{sep_sigma}
\bold{\Sigma}^{\circ \frac{1}{2}}\odot \BX= \bold{D}^{\frac{1}{2}}\BX\widetilde{\bold{D}}^{\frac{1}{2}},
\end{equation}
%by assuming $S^2(\theta_{R},\phi_R,\theta_S,\phi_S)=1$~\cite{pizzo2022fourier}, which corresponds to the isotropic scattering case
%It can be observed from~(\ref{var_A}) that the transmit wavenumber coordinate $(m_x,m_y)$ and the receive wavenumber coordinate $(l_x,l_y)$ are coupled in $\sigma^2(l_x,l_y,m_x,m_y)$. 
\begin{equation}
\begin{aligned}
\bold{d}&=[\sigma^2_{R}(l_{x,1},l_{y,1}),...,\sigma^2_{R}(l_{x,n_R},l_{y,n_R})]^{T},
\\
\widetilde{\bold{d}}&=[\sigma^2_{S}(m_{x,1},m_{y,1}),...,\sigma^2_{S}(m_{x,n_S},m_{y,n_S})]^{T}. 
\end{aligned}
\end{equation}
\end{comment}
%where $\BH \in \mathbb{C}^{N_R\times N_T}=\BA+\bold{\Sigma}^{\circ \frac{1}{2}} \odot  \BY $ and $\BA=\bold{U}^{H}\BA_s\bold{V}$. 

\subsection{Problem Formulation} 
In this section, we show that the MI for both the Rician Weichselberger model and holographic MIMO channels resorts to the MI with general non-centered non-separable channels. The MI of the MIMO system is given by
\begin{equation}
\label{mi_first}
C(\sigma^2)=\log\det(\bold{I}_{N_R}+\frac{1}{\sigma^2}\BH_{s}\BH_{s}^{H}),
\end{equation}
where $\frac{1}{\sigma^2}$ denotes the signal-to-noise ratio (SNR). For the Rician Weichselberger model in~(\ref{weich_cha}), the MI in~(\ref{mi_first}) can be rewritten as
\begin{equation}
\label{rician_mi}
C(\sigma^2)=\log\det(\bold{I}_{N_R}+\frac{1}{\sigma^2}\overline{\BH}_{W}\overline{\BH}^{H}_{W}),
\end{equation}
where we have utilized the identity $\det(\bold{I}+\BA\BB)=\det(\bold{I}+\BB\BA)$.
For holographic MIMO channels in~(\ref{H_HA}),~(\ref{sig_mod}) can be rewritten as  
\begin{equation}
\bold{y}_{a}=\sqrt{G_R G_S N_R N_S}\BH_h \bold{x}_a+\bold{n}_a,
\end{equation}
where $\bold{y}_a \in \mathbb{C}^{n_R}=\bold{\Phi}_{R}^{H}\bold{y}$ and $\bold{x}_a \in \mathbb{C}^{n_S}=\bold{\Phi}_{S}^{H}\bold{x}$ represent the receive and transmit signal in the angular domain. By the unitary-invariant attribute of the Gaussian random vector, we have $\bold{n}_a \in \mathbb{C}^{n_R}\sim\mathcal{CN}(0,\sigma^2\bold{I}_{n_R})$. By the identity $\det(\bold{I}+\BA\BB)=\det(\bold{I}+\BB\BA)$, the MI in~(\ref{mi_first}) can be rewritten as
\begin{equation}
\label{mi_second}
C(\frac{\sigma^2}{G_{R}G_{S} N_R N_S} )=\log\det(\bold{I}_{n_R}+  \frac{G_{R}G_{S} N_R N_S}{\sigma^2} {\BH}_h { \BH}^{H}_h),
\end{equation}
where $\BH_h \in \mathbb{C}^{n_R \times n_S}$ is given in~(\ref{H_reform}). 

According to~(\ref{rician_mi}) and~(\ref{mi_second}), the MI for both Rician Weichselberger and holographic MIMO models can be generally formulated as
\begin{equation}
\label{mi_third}
C_M(\zeta)=\log\det(\bold{I}_{M}+  \zeta^{-1} {\BH} { \BH}^{H}),
\end{equation}
where $\BH \in\mathbb{C}^{N\times M}=\BA+\bold{\Sigma}^{\circ \frac{1}{2}}\odot \BX$. We have $N=N_R$, $M=N_T$, $\zeta=\sigma^2$ and $N=n_{R}$, $M=n_S$, $\zeta=\frac{\sigma^2}{G_{R}G_{S} N_R N_S}$ for Rician Weichselberger and holographic MIMO models, respectively, such that the MI for both cases resort to the investigation of $C_{M}(\zeta)$ in~(\ref{mi_third}). Here ${\BH}$ is the sum of a deterministic matrix and a random matrix whose entries have heterogeneous variances, and is referred to as the non-centered non-separable channel. Due to the complex structure, the characterization of the MI is a difficult problem. In the following, we will investigate the distribution of $C_{M}(\zeta)$ by RMT in the asymptotic regime where the dimensions of $\BH$ go to infinity with the same pace.

\section{Main Results}
\label{com_section}
In this section, we will first present the assumptions and existing first-order result that will be utilized to derive the asymptotic distribution of the MI. Then, we will give the main results of this paper, which are based on the following assumptions:

\textbf{A.1}: $0<\lim \inf\limits_{M \ge 1}  \frac{M}{N} \le \frac{M}{N}  \le \lim \sup\limits_{M \ge 1}  \frac{M}{N} <\infty$. 

\textbf{A.2}: $0<\min\limits_{i,j} \sigma^2_{i,j}\le \sigma^2_{i,j}\le\max\limits_{i,j} \sigma^2_{i,j}< \sigma^2_{max}<\infty $, $\inf\limits_{M\ge 1} \frac{\sum_{i=1}^{N}\sigma^2_{i,j}}{M}>0$, $\inf\limits_{M \ge 1} \frac{\sum_{j=1}^{M}\sigma^2_{i,j}}{M}>0$.

\textbf{A.3}: $\|\BA \|<a_{max}<\infty$. %and $\inf\limits_{M} \| \bold{a}_i \|_{2}>0$, where $\bold{a}_i$ denotes the $i$-th column of $\BA$.

Here $a_{max}$ and $\sigma^2_{max}$ represent the upper bound of $\|\BA \|$ and $\bold{\Sigma}_{i,j}$, which do not go to infinity with $M$ and we use $\sigma^2_{min}$ to represent the lower bound of $\bold{\Sigma}_{i,j}$. \textbf{A.1} indicates the asymptotic regime that the dimensions $M$ and $N$ go to infinity with the same pace. For holographic MIMO systems, by the definition in~(\ref{large_exp}), we know $M$ and $N$ are large and~\textbf{A.1} holds true for the case when the physical size of the planar array of one side (transmit or receive) is not overwhelmingly larger than that of the other side.~\textbf{A.2} is used to guarantee that the asymptotic variance is well defined, which is justified by the boundness of the variance for coefficients~\cite{pizzo2022fourier}.~\textbf{A.3} implies that $\|\bold{a}_i \|_{2}$ is uniformly bounded.~\textbf{A.3} also indicates that the rank of the LoS component $ \BA $ increases with $M$ and $N$ at the same pace~\cite{zhang2021bias}. We denote the limit $M\rightarrow \infty$ and $N\rightarrow \infty$ with $\frac{N}{M}=d$ by ${M \xlongrightarrow{d} \infty}$ .

%\subsection{Justification of the Assumptions~\textbf{A.1}-\textbf{A.3}}
%Now we turn to validate that assumptions~\textbf{A.1}-\textbf{A.3} hold true for the holographic MIMO channel in the electromagnetically large regime. By the definition in~(\ref{large_exp}), we can obtain than $M=n_S$ and $N=n_R$ are large and~\textbf{A.1} holds true for the case when the physical size of the planar array of one side (transmit or receive) is not overwhelmingly larger than that of the other side. 

%(Electromagnetically large) Arrays are electromagnetically large such that $\min(L_{s,x},L_{s,y}) \gg 1$ $\min(L_{r,x},L_{r,y})\gg 1$. 

%Assumption 1 is required for the accuracy of the Fourier series expansion. Assumption 2 guarantees that the variance is finite, which is justified by the discussion below~(\ref{}). Assumption 3. A.3 implies that the columns of $\| \BA\|$, i.e., $\bold{a}_i$'s, are uniformly bounded and the rank of the LoS component $\| \BA \|$ increases with the number of antennas at the same pace~\cite{zhang2021bias}.

\subsection{First-order Analysis}

\label{sec_pre}

\subsubsection{Deterministic Equivalent of the Normalized MI}
%The EMI for holographic MIMO systems is not available in the literature, 
The first order analysis of the MI over non-centered non-separable MIMO channels can be obtained by utilizing the deterministic equivalent (DE)~\cite{hachem2007deterministic}. The DE provides an accurate large system approximation and is widely used in the analysis of large MIMO systems~\cite{hachem2007deterministic,zhang2022outage}. The parameters in the DE are determined by a system of equations. Denote $\bold{\Sigma}^{(i)}$ and $\bold{\Sigma}^{[j]}$ as the $i$-th row and $j$-th column of $\bold{\Sigma}$ and define matrices $\bold{D}_{j}\in \mathbb{R}^{M\times M}=\diag(\bold{\Sigma}^{[j]})$ and $\widetilde{\bold{D}}_{i} \in \mathbb{R}^{N\times N}=\diag(\bold{\Sigma}^{(i)})$. For the model in~(\ref{H_reform}),  we consider the following system of $N+M$ equations
\begin{equation} 
\label{sys_eq}
\begin{cases}
  &\delta_j=\frac{\Tr\BD_j \BT(z)}{M},~j=1,2,...,M,
 \\
 &\widetilde{\delta}_i=\frac{\Tr\widetilde{\BD}_i \widetilde{\BT}(z)}{M},~i=1,2,...,N.
\end{cases}
\end{equation}
with 
\begin{equation} 
\begin{aligned}
\label{tz_def}
\FT(z)&=\left( \boldsymbol{\psi}^{-1}(z)-z\BA\widetilde{\boldsymbol{\psi}}(z)\BA^{H}  \right)^{-1},
\\
\widetilde{\FT}(z)&=\left( \widetilde{\boldsymbol{\psi}}^{-1}(z)-z\BA^{H}{\boldsymbol{\psi}}(z)\BA  \right)^{-1},
\\
\boldsymbol{\psi}(z)&=\diag(\psi_{i}(z),~1\le i \le N),
\\
\widetilde{\boldsymbol{\psi}}(z)&=\diag(\widetilde{\psi}_{j}(z),~1\le j \le M),
\end{aligned}
\end{equation}
and
\begin{equation} 
\begin{aligned}
\psi_{i}(z)&=-z^{-1}(1+\widetilde{\delta}_i)^{-1}
\\
&
=-z^{-1}(1+\frac{1}{M}\Tr\widetilde{\BD}_{i}\widetilde{\FT}(z))^{-1},~1 \le i \le N,
\\
\widetilde{\psi}_{j}(z)&=-z^{-1}(1+{\delta}_j)^{-1}
\\
&=-z^{-1}(1+\frac{1}{M}\Tr{\BD}_{j}{\FT}(z))^{-1}, 1 \le j \le M.
\end{aligned}
\end{equation}
The existence and uniqueness of the solution for~(\ref{sys_eq}) have been shown in~\cite[Theorem 2.4]{hachem2007deterministic} and the system of equations can be solved by the fixed-point algorithm shown in Algorithm~\ref{sol_fund}.
\begin{algorithm} 
\caption{ Fixed-point algorithm for $\delta_j, \widetilde{\delta}_i$  } 
\label{sol_fund} 
\begin{algorithmic}[1] 
\REQUIRE  $z$, $\bold{A}$, $\bold{\Sigma}$, $\delta^{(0)}_{j}>0$, $j=1,2,...,M$, $\widetilde{\delta}^{(0)}_i>0$, $i=1,2,...,N$, and set $t=1$.
\REPEAT
\STATE $\delta^{(t)}_{j}$ $\leftarrow$ $\frac{\Tr\BD_j \BT^{(t-1)}(z)}{M}$, $j=1,2,...,M$.
\STATE $\widetilde{\delta}^{(t)}_i$ $\leftarrow$ $\frac{\Tr\widetilde{\BD}_i \widetilde{\BT}^{(t)}(z)}{M}$, $i=1,2,...,N$. 
\STATE $t \leftarrow  t+1$.
\UNTIL Convergence.
\ENSURE  $\delta_j, \widetilde{\delta}_i$.
\end{algorithmic}
\end{algorithm}
The first-order analysis of the MI can be investigated by the method in~\cite{hachem2007deterministic} and is given by the following lemma.
\begin{lemma} 
\label{mean_the}
(\hspace{-0.01cm}\cite[Theorem 4.1]{hachem2007deterministic}) 
Given assumptions~\textbf{A.1}-\textbf{A.3}, the following convergence holds true
\begin{equation} 
\begin{aligned}
\frac{1}{N} C_{M}(\zeta) \xrightarrow[N \xlongrightarrow{d} \infty]{a.s.} \frac{1}{N}\overline{C}_{M}(\zeta),
\end{aligned}
\end{equation}
where 
\begin{equation}
\label{mean_exp} 
\begin{aligned}
\overline{C}_{M}(\zeta)&=\log\det[\zeta^{-1}\FT^{-1}(-\zeta)] -\log\det [\zeta \widetilde{\boldsymbol{\psi}}(-\zeta)]
%\\
%&
-\frac{\zeta}{M}\sum_{i,j}\sigma^2_{ij}[\FT(-\zeta)]_{i,i}[\widetilde{\FT}(-\zeta)]_{j,j} .
\end{aligned}
\end{equation}
\end{lemma}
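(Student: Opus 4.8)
\medskip
\noindent\textbf{Proof strategy (proposed).}
The plan is to reduce the claim to the already available first--order analysis of the resolvent by means of an integral representation of the mutual information. Write $\BQ(\omega)=(\BH\BH^{H}+\omega\BI_{N})^{-1}$ for the resolvent of $\BH\BH^{H}$ evaluated at $-\omega$, and note $C_{M}(\zeta)=\log\det(\BH\BH^{H}+\zeta\BI_{N})-N\log\zeta$. Since $C_{M}(\zeta)\to0$ as $\zeta\to\infty$ and $\partial_{\zeta}C_{M}(\zeta)=\Tr\BQ(\zeta)-N/\zeta$, one obtains
\begin{equation}
\label{plan_int_rep}
\frac{1}{N}C_{M}(\zeta)=\int_{\zeta}^{\infty}\Big(\frac{1}{\omega}-\frac{1}{N}\Tr\BQ(\omega)\Big)\,\mathrm{d}\omega .
\end{equation}
If one shows separately (see the last paragraph) that $\frac{1}{N}\overline{C}_{M}(\zeta)=\int_{\zeta}^{\infty}\big(\frac{1}{\omega}-\frac{1}{N}\Tr\FT(-\omega)\big)\,\mathrm{d}\omega$, then the claim reduces to passing the known convergence $\frac{1}{N}\Tr\BQ(\omega)-\frac{1}{N}\Tr\FT(-\omega)\to0$ under the integral sign.

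\smallskip
The second step invokes the resolvent counterpart of Lemma~\ref{mean_the} established in~\cite{hachem2007deterministic}: for every fixed $\omega>0$, $\frac{1}{N}\Tr\BQ(\omega)-\frac{1}{N}\Tr\FT(-\omega)\xrightarrow[N \xlongrightarrow{d}\infty]{a.s.}0$, where $\FT(-\omega)\succ0$ with $\FT^{-1}(-\omega)\succeq\omega\BI_{N}$ by~\eqref{tz_def}. To move this under the integral I would first upgrade the pointwise convergence to locally uniform convergence on $(0,\infty)$: the maps $\omega\mapsto\frac{1}{N}\Tr\BQ(\omega)$ and $\omega\mapsto\frac{1}{N}\Tr\FT(-\omega)$ have derivatives bounded by $\omega^{-2}$, hence are equicontinuous in $N$, so almost--sure convergence on a countable dense set propagates to locally uniform almost--sure convergence. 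For the tail, assumptions \textbf{A.1}--\textbf{A.3} provide a deterministic $K$ with $\|\BH\BH^{H}\|\le K$ almost surely for all large $M$ (bounded variance profile plus $\|\BA\|\le a_{max}$), so every eigenvalue $\lambda$ of $\BH\BH^{H}$ satisfies $(K+\omega)^{-1}\le(\lambda+\omega)^{-1}\le\omega^{-1}$ and therefore $0\le\frac{1}{\omega}-\frac{1}{N}\Tr\BQ(\omega)\le K\omega^{-2}$; the same estimate holds for $\FT(-\omega)$ because $\FT^{-1}(-\omega)$ is also bounded above, by $(\omega+\sigma^{2}_{max}+a_{max}^{2})\BI_{N}$ in view of~\eqref{tz_def}. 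Hence both integrands in the two representations are dominated on $[\zeta,\infty)$ by the integrable function $\omega\mapsto K'\omega^{-2}$, and dominated convergence applied pathwise on the almost--sure event yields $\frac{1}{N}C_{M}(\zeta)-\frac{1}{N}\overline{C}_{M}(\zeta)\xrightarrow[N \xlongrightarrow{d}\infty]{a.s.}0$.

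\smallskip
The remaining step, which I expect to be the main obstacle, is the purely deterministic identity $\partial_{\zeta}\overline{C}_{M}(\zeta)=-\big(N/\zeta-\Tr\FT(-\zeta)\big)$ together with $\overline{C}_{M}(\zeta)\to0$ as $\zeta\to\infty$; these two facts give exactly the integral representation of $\overline{C}_{M}$ used above. The boundary behaviour is routine: as $\zeta\to\infty$ one has $\delta_{j},\widetilde{\delta}_{i}\to0$, whence $\zeta\widetilde{\boldsymbol{\psi}}(-\zeta)\to\BI_{M}$, $\zeta^{-1}\FT^{-1}(-\zeta)\to\BI_{N}$ and $[\FT(-\zeta)]_{i,i},[\widetilde{\FT}(-\zeta)]_{j,j}=O(\zeta^{-1})$, so all three terms of~\eqref{mean_exp} tend to $0$. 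For the derivative, I would regard $\overline{C}_{M}(\zeta)=F\big(\zeta,(\delta_{j}(\zeta))_{j},(\widetilde{\delta}_{i}(\zeta))_{i}\big)$ and exploit the structure of the fixed--point system~\eqref{sys_eq}: its defining equations are precisely the stationarity conditions $\partial F/\partial\delta_{j}=0$ and $\partial F/\partial\widetilde{\delta}_{i}=0$, so by the envelope principle $\mathrm{d}\overline{C}_{M}/\mathrm{d}\zeta=\partial F/\partial\zeta$, and a direct computation of the explicit $\zeta$--derivatives of the two $\log\det$ terms and of the bilinear correction term in~\eqref{mean_exp} collapses to $-N/\zeta+\Tr\FT(-\zeta)$. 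The bookkeeping--heavy core is to check that those partial derivatives do vanish, which requires differentiating $\FT,\widetilde{\FT},\boldsymbol{\psi},\widetilde{\boldsymbol{\psi}}$ through their defining relations in~\eqref{tz_def} and using the identities linking $\FT(-\zeta)$ and $\widetilde{\FT}(-\zeta)$. Putting the three steps together proves the lemma; alternatively, once $\mathbb{E}\,C_{M}(\zeta)=\overline{C}_{M}(\zeta)+o(N)$ has been obtained from the same representation applied to expectations, the almost--sure statement also follows from any standard concentration inequality for $C_{M}(\zeta)$ around its mean.
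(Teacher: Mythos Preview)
The paper does not give its own proof of this lemma; it is simply quoted from \cite[Theorem~4.1]{hachem2007deterministic}. Your strategy --- integrate the resolvent identity $C_M(\zeta)=\int_\zeta^\infty(N/\omega-\Tr\BQ(\omega))\,\mathrm{d}\omega$, invoke the already--known convergence $\frac{1}{N}\Tr\BQ(\omega)-\frac{1}{N}\Tr\FT(-\omega)\to0$ from that reference, pass it under the integral by domination, and check the deterministic identity $\partial_\zeta\overline{C}_M(\zeta)=\Tr\FT(-\zeta)-N/\zeta$ via the envelope structure of~\eqref{sys_eq} --- is precisely the route used in the cited paper, so there is nothing to compare.

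One small point worth cleaning up: you justify the tail domination by asserting that \textbf{A.1}--\textbf{A.3} give a deterministic $K$ with $\|\BH\BH^{H}\|\le K$ almost surely for all large $M$. That bound is true for this model, but it is a nontrivial edge--of--spectrum statement, not an immediate consequence of the assumptions as you phrase it. A lighter and fully self--contained alternative is
\[
0\le \frac{1}{\omega}-\frac{1}{N}\Tr\BQ(\omega)=\frac{1}{N}\sum_k\frac{\lambda_k}{\omega(\lambda_k+\omega)}\le \frac{1}{\omega^2}\cdot\frac{1}{N}\Tr(\BH\BH^{H}),
\]
and $\frac{1}{N}\Tr(\BH\BH^{H})=\frac{1}{N}\|\BA\|_F^2+\frac{2}{N}\Re\Tr(\BA^{H}\BY)+\frac{1}{N}\sum_{i,j}|Y_{ij}|^2$ is almost surely bounded by a single constant (independent of $N$) by \textbf{A.2}--\textbf{A.3} and the strong law of large numbers. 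This yields the same $O(\omega^{-2})$ integrable envelope without appealing to the operator norm.
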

Lemma~\ref{mean_the} gives the DE for the normalized MI with a non-separable variance profile. When $\sigma_{i,j}^2=d_i \widetilde{d}_j$ and $\BA=\bold{0}$,~(\ref{mean_exp}) degenerates to~\cite[Eq. (58)]{pizzo2022fourier} for the centered, separable case. Although Lemma~\ref{mean_the} does not prove that $\E C_{M}(\zeta)\xrightarrow {M \xlongrightarrow{d} \infty}  \overline{C}_{M}(\zeta)$, it indicates that $\overline{C}_{M}(\zeta)$ is a good approximation. In the following, we will focus on characterizing the distribution of the MI.

%In fact, $\E C_{M}(\zeta)\xrightarrow {M \xlongrightarrow{c} \infty}  \overline{C}_{M}(\zeta)$ can be proved by the approach in~\cite{hachem2008new}. 

\subsection{Second-order Analysis}
\label{sec_main}
In this section, we will set up the CLT for non-centered non-separable correlation MIMO channels, which is the main contribution of this paper. The result will then be utilized to derive a closed-form approximation for the outage probability. For that purpose, we first introduce some notations that will be used for deriving the asymptotic variance of the MI.

\subsubsection{Notations}
In the following, we will omit $(z)$ in $\BT(z)$ and $\widetilde{\BT}(z)$ for simplicity. We first introduce four matrices $\bold{\Pi}_{i}, \bold{\Xi}_{i}, \bold{\Gamma}_{i}, \widetilde{\bold{\Lambda}} \in \mathbb{R}^{i\times i}$, $i \le M$, $j,k=1,2,...,i$, with
\begin{equation}
\label{coef_def}
\begin{aligned}
[\bold{\Pi}_{i}]_{j,k}&=\Pi_{j,k}=\frac{ \Ba^{H}_k \BT\BD_{j}\BT\Ba_{k}}{M(1+\delta_k)^{2}},
\\
[\bold{\Xi}_{i}]_{j,k}&=\Xi_{j,k}=\frac{ \Ba^{H}_k \BT\Ba_{j} \Ba_{j}^{H} \BT\Ba_{k}}{(1+\delta_{j})^{2}(1+\delta_{k})^2 }\mathbbm{1}_{k\neq j},
\\
[\bold{\Gamma}_{i}]_{j,k}&=\Gamma_{j,k}=\frac{\Tr\BD_j\BT\BD_k\BT}{M^2},
\\
\widetilde{\bold{\Lambda}}_{i}&=\rho^2\diag(\widetilde{t}_{1,1}^2,\widetilde{t}_{2,2}^2,...,\widetilde{t}_{i,i}^2),
\end{aligned}
\end{equation}
where $\delta_{i}$ and $\FT$ are given in~(\ref{sys_eq}) and~(\ref{tz_def}), respectively, $\widetilde{t}_{i,i}$ is the diagonal entry of $\widetilde{\bold{T}}$, and $\bold{a}_i$ denotes the $i$-th column of $\BA$. Further define matrix $\bold{B}_{i} \in \mathbb{R}^{2i\times 2i}$ as
\begin{equation}
\label{def_BB}
\bold{B}_{i}=
\begin{bmatrix}
\bold{\Pi}_{i} & \bold{\Gamma}_{i}
\\
\bold{\Xi}_{i}+\widetilde{\bold{\Lambda}}_{i} &\bold{\Pi}_i^{T}
\end{bmatrix},
\end{equation}
which will be used in the derivation of the asymptotic variance.

%The parameters above can be obtained by the solution of~(\ref{sys_eq}).

\subsubsection{Asymptotic Distribution of the MI}
The following theorem characterizes the asymptotic distribution of the MI for non-centered non-separable MIMO systems.
\begin{theorem} (The CLT for ${C}_{M}(\zeta)$) 
\label{main_clt} If assumptions~\textbf{A.1}-\textbf{A.3} hold true, the MI ${C}_{M}(\zeta)$ satisfies 
\begin{equation} 
\frac{{C}_{M}(\zeta)-\E {C}_{M}(\zeta)}{\sqrt{V_{M}(\zeta)}} \xrightarrow[M \xlongrightarrow{d} \infty]{\mathcal{D}} \mathcal{N}(0, 1),
\end{equation}
where 
\begin{equation}
\label{variance_exp}
V_{M}(\zeta)=-\log\det(\bold{I}_{M}-\bold{B}_{M}),
\end{equation}
with $\bold{B}_{M}$ defined in~(\ref{def_BB}).
\end{theorem}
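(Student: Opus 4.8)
## Proof Proposal for Theorem 1 (CLT for $C_M(\zeta)$)

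The plan is to follow the classical martingale-difference approach of Bai and Silverstein \cite{bai2008clt}, adapting it to the non-centered non-separable setting. First I would write the centered MI as a telescoping sum over a suitable filtration: let $\mathbb{E}_k$ denote conditional expectation given the first $k$ columns (say) of $\BX$, so that $C_M(\zeta)-\E C_M(\zeta) = \sum_{k} (\mathbb{E}_{k}-\mathbb{E}_{k-1}) C_M(\zeta) =: \sum_k Y_k$. Each $Y_k$ is a martingale difference. Using the resolvent identity and the rank-one perturbation formula (Sherman--Morrison) to isolate the contribution of the $k$-th column, one expresses $Y_k$ up to negligible error as $(\mathbb{E}_k - \mathbb{E}_{k-1})$ applied to a $\log(1 + \text{quadratic form in the }k\text{-th column})$ term; a Taylor expansion of the logarithm and concentration of quadratic forms (trace lemmas) reduce $Y_k$ to a linear-plus-bilinear functional of the $k$-th column entries with coefficients measurable w.r.t. $\mathbb{E}_{k-1}$. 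Then I would invoke the martingale CLT (the Lindeberg/Lyapunov version): it suffices to verify (i) the conditional Lyapunov condition $\sum_k \E |Y_k|^4 \to 0$, which follows from \textbf{A.1}--\textbf{A.3} and moment bounds on quadratic forms, and (ii) that the sum of conditional variances $\sum_k \mathbb{E}_{k-1}|Y_k|^2$ converges in probability to a deterministic limit $V_M(\zeta)$.

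The core of the argument — and the main obstacle — is step (ii): identifying the limit of $\sum_k \mathbb{E}_{k-1}|Y_k|^2$ in closed form and showing it equals $-\log\det(\bold{I}_M - \bold{B}_M)$. After the reductions above, $\mathbb{E}_{k-1}|Y_k|^2$ becomes a sum of second-order resolvent quantities — traces of products like $\Tr \BD_j \BT \BD_k \BT$, quadratic forms $\Ba_k^H \BT \BD_j \BT \Ba_k$, and $\Ba_k^H \BT \Ba_j \Ba_j^H \BT \Ba_k$, plus a term from the fourth cumulant / variance $\rho^2$ of the entries carried by the diagonal of $\widetilde{\BT}$ — precisely the entries of the matrices $\bold{\Pi}_M$, $\bold{\Xi}_M$, $\bold{\Gamma}_M$, $\widetilde{\bold{\Lambda}}_M$ in \eqref{coef_def}. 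These second-order resolvent terms are themselves coupled: perturbing one column changes $\BT$, which feeds back through the system \eqref{sys_eq}. I would therefore set up an auxiliary linear system of $M+N$ equations for these quantities (the "second-order" analogue of \eqref{sys_eq}), obtained by differentiating the fixed-point relations, and solve it. The key algebraic claim is that this linear system has coefficient matrix $\bold{I} - \bold{B}_M$ (up to reindexing the $2M$ unknowns into the block structure of \eqref{def_BB}), so by Cramer's rule the accumulated conditional variance telescopes — using $\frac{d}{dt}\log\det \bold{M}(t) = \Tr(\bold{M}^{-1}(t)\bold{M}'(t))$ along a suitable interpolation — into the single $\log\det$ expression $-\log\det(\bold{I}_M - \bold{B}_M)$. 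Verifying that $\bold{I}_M - \bold{B}_M$ is invertible (indeed that its eigenvalues lie in a region making the $\log\det$ well-defined and positive) is part of this step and uses \textbf{A.2} to keep the variance profile bounded away from degeneracy.

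Along the way I would need several auxiliary estimates, all standard in RMT but requiring care here: (a) deterministic-equivalent bounds controlling $\|\BT\|$, $\|\widetilde{\BT}\|$, and $\max_i \delta_i$, $\max_j\widetilde\delta_j$ uniformly in $M$, which follow from \cite{hachem2007deterministic} together with \textbf{A.1}--\textbf{A.3}; (b) convergence of the random second-order quantities to their deterministic equivalents with $O(M^{-1})$-type fluctuations, so that replacing $\BT$ by its deterministic counterpart in $\sum_k \mathbb{E}_{k-1}|Y_k|^2$ is legitimate; (c) negligibility of the cross terms coupling the LoS component $\BA$ with the random part beyond what is captured in $\bold{\Pi}_M$ and $\bold{\Xi}_M$ — this is where the non-centered structure genuinely complicates matters relative to \cite{hachem2008clt}, since the LoS part contributes the $\bold{\Xi}_M$ block and modifies the others through $\BT(z) = (\boldsymbol\psi^{-1}(z) - z\BA\widetilde{\boldsymbol\psi}(z)\BA^H)^{-1}$. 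I expect the hardest single piece to be the bookkeeping that shows the second-order linear system collapses exactly to $\bold{I}_M - \bold{B}_M$ with the off-diagonal block $\bold{\Xi}_M + \widetilde{\bold{\Lambda}}_M$ in the correct position; the rest is a combination of Bai--Silverstein machinery and the first-order results already available from Lemma~\ref{mean_the}.
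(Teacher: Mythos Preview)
Your overall strategy matches the paper's: martingale decomposition via column-by-column conditioning, Sherman--Morrison rank-one perturbation, Taylor-expansion of the $\log$ term, Lyapunov condition via moment bounds on quadratic forms (Lemma~\ref{lemma_ext}), and identification of the conditional-variance sum through a linear system solved by Cramer's rule. So the backbone is right.

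Three points where your sketch diverges from what the paper actually does and could cause trouble in execution. First, the symbol $\rho$ in $\widetilde{\bold\Lambda}_i=\rho^2\diag(\widetilde t_{1,1}^2,\dots)$ is \emph{not} a fourth cumulant: it is simply $\rho=-z=\zeta$, the noise parameter. Since the entries are Gaussian there is no fourth-cumulant correction here; the $\rho^2\widetilde t_{j,j}^2$ factor arises from $\widetilde q_{j,j}\approx\widetilde t_{j,j}$ and the identity $\bold h_j^H\BQ=\rho\widetilde q_{j,j}\bold h_j^H\BQ_j$. Second, the auxiliary linear system is not an $(M+N)$-dimensional ``second-order analogue of \eqref{sys_eq}'' obtained by differentiating the fixed-point relations. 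Rather, for each martingale index $j$ the paper builds a $2j\times 2j$ system $(\bold I_{2j}-\bold B_j)\bold p_j=\bold q_j+\bold v_j$ in the unknowns $\psi_{j,i}$ and $\Theta_{j,i}=\rho^2\widetilde t_{i,i}^2\theta_{j,i}$ ($i\le j$), set up by evaluating $\E\Ba_k^H(\E_j\BQ)\BD_j\BQ\Ba_k$ in two different ways and evaluating $\frac{1}{M}\E\Tr(\E_j\BQ)\BD_j\BQ\BD_i$ via resolvent identities. The nested structure $\bold B_1\subset\bold B_2\subset\cdots\subset\bold B_M$ is essential. Third, the emergence of the $\log\det$ is not via an interpolation and $\frac{d}{dt}\log\det\bold M(t)$; it is purely algebraic. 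Cramer's rule gives $p_{j,(j)}=\det(\bold S_{j,(j)})/\det(\bold S_j)$ and $p_{j,(2j)}=\det(\bold S_{j,(2j)})/\det(\bold S_j)$, and a sequence of explicit row/column manipulations (this is the laborious Lemma~\ref{sum_mart}) shows
\[
\rho^2\widetilde t_{j,j}^2 p_{j,(j)}+2p_{j,(2j)}=\frac{\det(\bold S_{j-1})-\det(\bold S_j)}{\det(\bold S_j)}+o(1),
\]
which after summation in $j$ and $\log(1+x)\approx x$ telescopes to $-\log\det(\bold S_M)=-\log\det(\bold I_{2M}-\bold B_M)$. The bookkeeping you anticipate being the hardest part is indeed hard, but it is determinant algebra rather than calculus.
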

\begin{proof} The proof of Theorem~\ref{main_clt} is given in Appendix~\ref{proof_main_clt}.
\end{proof}
Theorem~\ref{main_clt} indicates that the asymptotic distribution of the MI is a Gaussian distribution, whose mean and variance are given by the parameters determined from~(\ref{sys_eq}). With Lemma~\ref{mean_the} and Theorem~\ref{main_clt}, and $\zeta=\frac{\sigma^2}{G_R G_S N_R N_S}$, we can obtain the large system approximation for the outage probability of holographic MIMO systems.
\begin{proposition} \label{outage_prop}
(Outage probability of holographic MIMO systems) Given a rate threshold $R$, the outage probability $P_{out}(R)$ of the holographic system can be approximated by
\begin{equation} 
\label{outage_exp}
P_{out}(R)\approx \phi\left(\frac{R-\overline{C}_{M}(\frac{\sigma^2}{G_R G_S N_R N_S})}{\sqrt{V_M(\frac{\sigma^2}{ G_R G_S N_R N_S})} }\right).
\end{equation}
\end{proposition}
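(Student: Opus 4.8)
The plan is to read the statement off from the CLT of Theorem~\ref{main_clt} together with the deterministic equivalent of Lemma~\ref{mean_the}, after converting the Gaussian limit into a statement about the CDF of the MI. By the reduction in Section~\ref{sec_model} (see~\eqref{mi_second}), for the holographic channel the MI equals $C_M(\zeta)$ with $\zeta=\sigma^2/(G_RG_SN_RN_S)$, $M=n_S$, $N=n_R$, so by definition the outage probability at rate threshold $R$ is $P_{out}(R)=\mathbb{P}\big(C_M(\zeta)<R\big)$. Under \textbf{A.1}--\textbf{A.3} the variance $V_M(\zeta)=-\log\det(\BI_M-\BB_M)$ appearing in~\eqref{variance_exp} is strictly positive (this is implicit in Theorem~\ref{main_clt}, whose limiting law is a nondegenerate Gaussian), so the centering and scaling below are legitimate.

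First I would standardize and invoke the CLT: set $X_M:=\big(C_M(\zeta)-\E C_M(\zeta)\big)/\sqrt{V_M(\zeta)}$, so that Theorem~\ref{main_clt} gives $X_M\xrightarrow{\mathcal{D}}\mathcal{N}(0,1)$. Since the limiting CDF $\Phi$ is continuous, P\'olya's theorem upgrades this to the uniform convergence $\sup_{x\in\mathbb{R}}\big|\mathbb{P}(X_M\le x)-\Phi(x)\big|\to 0$. Evaluating at the deterministic point $x_M:=\big(R-\E C_M(\zeta)\big)/\sqrt{V_M(\zeta)}$ then yields
\[
\Big|P_{out}(R)-\Phi(x_M)\Big|=\Big|\mathbb{P}(X_M\le x_M)-\Phi(x_M)\Big|\longrightarrow 0,
\]
i.e.\ $P_{out}(R)\approx\Phi\big((R-\E C_M(\zeta))/\sqrt{V_M(\zeta)}\big)$.

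It then remains to replace $\E C_M(\zeta)$ by its deterministic equivalent $\overline{C}_M(\zeta)$ inside $\Phi$. For this I would use the bias bound $\E C_M(\zeta)-\overline{C}_M(\zeta)=O(1/M)$, which is a byproduct of the proof of Theorem~\ref{main_clt}: the martingale decomposition there controls the fluctuation $C_M-\E C_M$, while the accompanying analysis of the resolvent $\BT(z)$ controls the residual bias $\E C_M-\overline{C}_M$ (the standard second-order refinement of the first-order result in Lemma~\ref{mean_the}). Since $V_M(\zeta)$ stays bounded away from $0$, we get $\big(\E C_M(\zeta)-\overline{C}_M(\zeta)\big)/\sqrt{V_M(\zeta)}\to 0$, and the uniform continuity of $\Phi$ then gives $\Phi(x_M)-\Phi\big((R-\overline{C}_M(\zeta))/\sqrt{V_M(\zeta)}\big)\to 0$. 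Chaining this with the previous display establishes the approximation~\eqref{outage_exp} (here $\phi\equiv\Phi$ denotes the standard Gaussian CDF).

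The main obstacle is precisely this last substitution. Lemma~\ref{mean_the} only asserts almost-sure convergence of the \emph{normalized} MI, $\tfrac1N C_M(\zeta)\to\tfrac1N\overline{C}_M(\zeta)$, which on its own does not control the \emph{un-normalized} bias $\E C_M(\zeta)-\overline{C}_M(\zeta)$ that actually enters $\Phi$; one genuinely needs the $O(1/M)$ bias estimate (hence the resolvent bounds developed inside the proof of Theorem~\ref{main_clt}) together with a quantitative lower bound on $V_M(\zeta)$ to divide by. Everything else --- the standardization, P\'olya's theorem, and the continuity of $\Phi$ --- is routine.
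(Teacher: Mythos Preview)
Your argument is correct and follows exactly the route the paper intends: the paper does not give a separate proof of Proposition~\ref{outage_prop} but simply presents it as the immediate consequence of combining Theorem~\ref{main_clt} (the CLT) with Lemma~\ref{mean_the} (the deterministic equivalent for the mean), after the substitution $\zeta=\sigma^2/(G_RG_SN_RN_S)$ from~\eqref{mi_second}. Your use of P\'olya's theorem and the explicit treatment of the bias $\E C_M(\zeta)-\overline{C}_M(\zeta)$ are in fact more careful than what the paper writes; the paper itself remarks just after Lemma~\ref{mean_the} that the convergence $\E C_M(\zeta)\to\overline{C}_M(\zeta)$ is not established there, and accordingly states~\eqref{outage_exp} only as an approximation ($\approx$) rather than as a precise limit.
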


 %so \textbf{Assumption 2} holds true.

%The Fourier representation model holds true for electromagnetically large arrays so that Assumption 1 of Theorem~\ref{main_clt} is obviously satisfied since $N=\frac{L_{R,x}L_{R,y}}{\lambda^2}$ and $M=\frac{L_{S,x}L_{S,y}}{\lambda^2}$. $\BW\in \mathbb{C}^{n_R\times n_T}$ is an i.i.d. random matrix whose entries follow $\mathcal{CN}(0,\frac{1}{N_S})$ and 

\subsubsection{Comparison with Existing Works}\hfill\\
\indent\textbf{1. Separable Correlation with LoS.} The CLT of the MI over channels with separable NLoS and LoS components can be obtained by the method from~\cite[Theorem 2.2]{hachem2012clt}, which is a special case of Theorem~\ref{main_clt}. In fact, when $\bold{\Sigma}$ is separable,~(\ref{sys_eq}) will become a system of two equations with respect to the parameters $\delta$ and $\widetilde{\delta}$, with $\delta_i={d}_{j}\delta $ and $\widetilde{\delta}_i=\widetilde{d}_{i}\widetilde{\delta} $ such that $\bold{B}_{M}$ in~(\ref{def_BB}) degenerates to a matrix in $\mathbb{R}^{2\times 2}$. Then, the result in Theorem~\ref{main_clt} will degenerate to~\cite[Theorem 2.2]{hachem2012clt}.

\textbf{2. Non-separable Correlation without LoS.} The CLT of the MI over channels with non-separable profile but without LoS components can be obtained by the method from~\cite[Theorem 3.2]{hachem2008clt}, which is also a special case of Theorem~\ref{main_clt}. Without LoS, i.e., $\bold{A}=\bold{0}$, $\bold{B}_{M}$ in~(\ref{def_BB}) degenerates to an $M$-by-$M$ matrix and Theorem~\ref{main_clt} will degenerate to~\cite[Theorem 3.2]{hachem2008clt}.

\textbf{3. Deterministic Model.} 
When $\BY=\bold{0}$, $\BH$ degenerates to the deterministic model in~\cite[Eq. (12)]{pizzo2020holographic} and there is no fluctuation for the MI. In this case, the MI in~(\ref{mi_second}) is an approximation of the MI in~\cite[Eq. (18)]{wan2023can} by using the Fourier expansion and can be obtained by computing the logarithm of the deterministic determinant.

%Note that the non-separable structure coincides with the virtual channel representation in the Weichselberger model~\cite{weichselberger2006stochastic} for the NLoS component but provides more physical meaning.

\section{Proof of Theorem~\ref{main_clt}}
\label{proof_main_clt}
In this section, we will utilize the martingale approach to show the asymptotic Gaussianity of $C_{M}(\rho)-\E C_{M}(\rho)$. The martingale method can be traced back to Girko’s REFORM (REsolvent, FORmula and Martingale) method~\cite{girko2003thirty} and is widely used~\cite{hachem2008clt,hachem2012clt,zhang2022outage}. Specifically, we first decompose $C_{M}(\rho)-\E C_{M}(\rho)$ into a sum of martingale differences. Based on the CLT for the martingale (Lemma~\ref{clt_lemma}), we verify the CLT conditions (Section~\ref{lyn_veri}). Then, in the process of deriving the closed-form asymptotic variance, we set up a system of equations to compute the intermediate quantities based on the resolvent evaluation (Section~\ref{eva_var}). To approximate the random quantities, a large amount of involved computation relies on the quadratic form of random vectors (Lemma~\ref{lemma_ext}).

The key steps of proving Theorem~\ref{main_clt} follow from the CLT for the martingales given in Lemma~\ref{clt_lemma}.
\begin{lemma}\label{clt_lemma} (CLT for the martingale~\cite[Theorem 35.12]{billingsley2008probability}) \label{clt_lemma} Let $\gamma_{1}^{(n)}$, $\gamma_{2}^{(n)}$, ..., $\gamma_{n}^{(n)}$ be a sequence of martingale differences with respect to the increasing filtration $\mathcal{F}_{1}^{(n)}$, $\mathcal{F}_{2}^{(n)}$, ..., $\mathcal{F}_{n}^{(n)}$. Assume that there exists a sequence of real positive numbers $s_{n}^2$ with $\lim \inf\limits_{n} s_{n}^2 > 0$ such that 
the Lyapunov's condition holds true
\begin{equation} 
\label{lyya_condi}
 \exists \delta >0, ~\sum_{i=1}^{n}\E |\gamma_{i}^{(n)}|^{2+\delta}  \xrightarrow{n \rightarrow \infty} 0,
\end{equation}
 and 
\begin{equation} 
\label{var_mart_dep}
\sum_{i=1}^{n}\E_{j-1} [ \gamma_{j}^{(n)}   ]^2 -s_{n}^2 \xrightarrow{n \rightarrow \infty} 0.
\end{equation}
Then, $\sum_{j=1}^{n}\frac{\gamma_{j}^{(n)}}{s_n} $ converges to $\mathcal{N}(0,1)$.
\end{lemma}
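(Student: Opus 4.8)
The plan is to prove convergence of the characteristic function of the normalized sum to that of a standard Gaussian and then invoke L\'evy's continuity theorem. First I would normalize by setting $Y_{j}^{(n)}:=\gamma_{j}^{(n)}/s_{n}$ and $S_{n}:=\sum_{j=1}^{n}Y_{j}^{(n)}$, so that $S_{n}$ is exactly the quantity in the conclusion. Since $\liminf_{n}s_{n}^{2}>0$, dividing \eqref{lyya_condi} and \eqref{var_mart_dep} by suitable powers of $s_{n}^{2}$ shows that $\{Y_{j}^{(n)}\}_{j=1}^{n}$ is again a martingale-difference array with respect to $\{\mathcal{F}_{j}^{(n)}\}$ satisfying the two standard hypotheses $\sum_{j}\E|Y_{j}^{(n)}|^{2+\delta}\to 0$ and $\sum_{j}\E_{j-1}[(Y_{j}^{(n)})^{2}]\to 1$. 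The objective then reduces to showing $\E[\exp(\Iunit t\,S_{n})]\to e^{-t^{2}/2}$ for every real $t$.

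Before the main estimate I would extract three consequences of the Lyapunov bound. By Markov's inequality, for every $\eps>0$ the Lindeberg condition $\sum_{j}\E[(Y_{j}^{(n)})^{2}\mathbbm{1}_{|Y_{j}^{(n)}|>\eps}]\le\eps^{-\delta}\sum_{j}\E|Y_{j}^{(n)}|^{2+\delta}\to 0$ holds; combined with the bound $\max_{j}(Y_{j}^{(n)})^{2}\le\eps^{2}+\sum_{j}(Y_{j}^{(n)})^{2}\mathbbm{1}_{|Y_{j}^{(n)}|>\eps}$ this yields uniform asymptotic negligibility $\E[\max_{j}(Y_{j}^{(n)})^{2}]\to 0$. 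Finally, decomposing $\sum_{j}(Y_{j}^{(n)})^{2}=\sum_{j}\E_{j-1}[(Y_{j}^{(n)})^{2}]+\sum_{j}\big((Y_{j}^{(n)})^{2}-\E_{j-1}[(Y_{j}^{(n)})^{2}]\big)$, the first sum tends to $1$ by the normalized \eqref{var_mart_dep} while the second is a martingale-difference sum whose $L^{2}$ norm is controlled (after truncation at the negligibility scale) by the Lyapunov bound, so that $\sum_{j}(Y_{j}^{(n)})^{2}\to 1$ in probability.

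For the characteristic function I would introduce the auxiliary product $T_{n}(t)=\prod_{j=1}^{n}(1+\Iunit t\,Y_{j}^{(n)})$. The martingale-difference property gives $\E_{j-1}[1+\Iunit t\,Y_{j}^{(n)}]=1$, whence iterated conditioning yields $\E[T_{n}(t)]=1$. Taking logarithms and expanding $\log(1+\Iunit t y)=\Iunit t y+\tfrac{t^{2}}{2}y^{2}+O(|ty|^{3})$ gives the pointwise identity
\begin{equation}
\exp(\Iunit t\,S_{n})=T_{n}(t)\exp\Big(-\tfrac{t^{2}}{2}\textstyle\sum_{j}(Y_{j}^{(n)})^{2}+\textstyle\sum_{j}O(|tY_{j}^{(n)}|^{3})\Big).
\end{equation}
Because $\sum_{j}|Y_{j}^{(n)}|^{3}\le(\max_{j}|Y_{j}^{(n)}|)\sum_{j}(Y_{j}^{(n)})^{2}\to 0$ and $\sum_{j}(Y_{j}^{(n)})^{2}\to 1$ in probability, the factor $R_{n}:=\exp(\Iunit t\,S_{n})/T_{n}(t)$ converges to $e^{-t^{2}/2}$ in probability and is bounded in modulus by $1$ (since $|T_{n}(t)|^{2}=\prod_{j}(1+t^{2}(Y_{j}^{(n)})^{2})\ge 1$). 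Writing $\E[\exp(\Iunit t\,S_{n})]-e^{-t^{2}/2}=\E[T_{n}(t)(R_{n}-e^{-t^{2}/2})]$ and applying Cauchy--Schwarz with an $L^{2}$ bound on $T_{n}(t)$ together with bounded convergence for $R_{n}$ forces the right-hand side to $0$, after which L\'evy's continuity theorem concludes.

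I expect the main obstacle to be the rigorous control of $T_{n}(t)$, namely showing that $\E|T_{n}(t)|^{2}=\E\prod_{j}(1+t^{2}(Y_{j}^{(n)})^{2})$ stays bounded so that the Cauchy--Schwarz step closes. The clean route is to truncate each $Y_{j}^{(n)}$ at the vanishing scale $\max_{j}|Y_{j}^{(n)}|\to 0$, bound the truncated product by $\exp(t^{2}\sum_{j}(Y_{j}^{(n)})^{2})$, which concentrates, and absorb the truncation error through the Lindeberg consequence of the preceding paragraph. Arranging these truncations so that the Taylor remainder in the logarithm and the $L^{2}$ bound on $T_{n}(t)$ vanish simultaneously is where the bulk of the technical care lies; the remainder is the standard L\'evy-continuity packaging of this classical result~\cite[Theorem 35.12]{billingsley2008probability}.
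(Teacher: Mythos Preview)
The paper does not supply a proof of this lemma: it is quoted verbatim as \cite[Theorem 35.12]{billingsley2008probability} and then used as a black box to drive the proof of Theorem~\ref{main_clt}. So there is nothing to compare your argument against in the paper itself.

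That said, your sketch is a faithful outline of the classical characteristic-function proof of the martingale CLT, and the technical crux you flag---the $L^{2}$ control of $T_{n}(t)=\prod_{j}(1+\Iunit t\,Y_{j}^{(n)})$---is exactly the right place to focus. One caution: the bound $\prod_{j}(1+t^{2}(Y_{j}^{(n)})^{2})\le\exp\bigl(t^{2}\sum_{j}(Y_{j}^{(n)})^{2}\bigr)$ holds without truncation, but $\sum_{j}(Y_{j}^{(n)})^{2}\to 1$ in probability alone does not give a uniform bound on the expectation of the exponential; you need either a stopping-time argument or a prior truncation of the $Y_{j}^{(n)}$ at a vanishing level (which your Lindeberg consequence permits) so that $\sum_{j}(Y_{j}^{(n)})^{2}$ is deterministically bounded on the truncated event. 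Billingsley's own proof organizes exactly this truncation upfront; once that is in place, the remainder of your argument goes through as written.
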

In fact, $s_n^2$ is the asymptotic variance of $\sum_{j=1}^{n}\gamma_{j}^{(n)}$, which can be computed by evaluating the sum of martingale differences in~(\ref{var_mart_dep}). Lemma~\ref{clt_lemma} indicates that there are three main steps to prove Theorem~\ref{main_clt}: 

1. Validate the Lyapunov's condition in~(\ref{lyya_condi}). 

2. Compute the asymptotic variance by~(\ref{var_mart_dep}). 

3. Validate $\lim \inf\limits_{n} s_{n}^2 > 0$.

Before we proceed, we first introduce the resolvent matrices of $\BH\BH^{H}$ and $\BH^{H}\BH$, given by
\begin{equation}
\begin{aligned}
\BQ(z)&=\left(-z\bold{I}_{N}+\BH\BH^{H} \right)^{-1},
\\
\widetilde{\BQ}(z)&=\left(-z\bold{I}_{M}+\BH^{H}\BH \right)^{-1}.
\end{aligned}
\end{equation}
$\BQ_j(z)$ represents the rank-one perturbations of $\BQ(z)$, given by
\begin{equation}
\begin{aligned}
\BQ_j(z)&=\left(-z\bold{I}_{N}+\BH_j\BH_j^{H} \right)^{-1},
\end{aligned}
\end{equation}
where $\BH_j$ is obtained by removing the $j$-th column from $\BH$. The diagonal entry of $\widetilde{Q}(z)$ can be obtained as
\begin{equation}
\label{qii_exp}
\widetilde{q}_{i,i}=\frac{1}{-z(1+\Bh_i^{H}\BQ_i\Bh_i)}.
\end{equation}
In the following, we will omit $(z)$ in $\BQ(z)$ and use the notation $\rho=-z$.

Define the $\sigma$-filed $\mathcal{F}_{j}=\sigma(\bold{y}_{l}, j \le l \le M )$ generated by $\bold{y}_{j}$, $\bold{y}_{j+1}$,..., $\bold{y}_{M}$ and denote the conditional expectation with respect to $\mathcal{F}_{j}$ by $\E_{j}=\E (\cdot|\mathcal{F}_{j})$. $C_{M}(\rho)-\E C_{M}(\rho)$ can be decomposed into a sum of martingale differences as follows
\begin{equation} 
\label{martin_decomp}
\begin{aligned}
&\log\det(\BH\BH^{H}+\rho\bold{I}_{N})-\E\log\det(\BH\BH^{H}+\rho\bold{I}_{N})
\\
&\overset{(a)}{=}-\sum_{j=1}^{M} (\E_{j}-\E_{j+1})\log\left(\frac{\det(\BH_{j}\BH_{j}^{H}+\rho\bold{I}_{N})}{\det(\BH\BH^{H}+\rho\bold{I}_{N})} \right) 
\\
&\overset{(b)}{=}-\sum_{j=1}^{M} (\E_{j}-\E_{j+1})\log\left(\frac{\det(\BH_{j}^{H}\BH_{j}+\rho\bold{I}_{M-1})}{\det(\BH^{H}\BH+\rho\bold{I}_{M-1})} \right) 
%\end{aligned}
%\end{equation}
%\begin{equation}
%\begin{aligned}
\\
&\overset{(c)}{=}-\sum_{j=1}^{M} (\E_{j}-\E_{j+1})\log[\widetilde{\bold{Q}}]_{j,j}
\\
&\overset{(d)}{=}-\sum_{j=1}^{M} (\E_{j}-\E_{j+1})\log(1+\Bh_{j}\BQ_{j}\Bh_{j})
\\
&\overset{(e)}{=}-\sum_{j=1}^{M} (\E_{j}-\E_{j+1})\log(1+\zeta_{j})
\\
&=-\sum_{j=1}^{M}\gamma_{j}
,
\end{aligned}
\end{equation}
where $\zeta_{j}=\frac{\Bh_{j}\BQ_{j}\Bh_{j}-\frac{1}{M}\Tr\BD_j\BQ_j-\Ba_{j}^{H}\BQ_j\Ba_j}{1+\frac{1}{M}\Tr\BD_j\BQ_j+\Ba_{j}^{H}\BQ_j\Ba_j}$ and $\gamma_{j}= (\E_{j}-\E_{j+1})\log(1+\zeta_{j})$. Step $(a)$ in~(\ref{martin_decomp}) follows from $\E_{j}\log\det(\BH_{j}\BH_{j}^{H}+\rho\bold{I}_{N})=\E_{j+1}\log\det(\BH_{j}\BH_{j}^{H}+\rho\bold{I}_{N})$, step $(b)$ follows from the identity $\log\det(\bold{I}+\bold{A}\bold{B})=\log\det(\bold{I}+\bold{B}\bold{A})$, steps $(c)$ and $(d)$ can be obtained by using~the matrix inversion formula and~(\ref{qii_exp}), respectively, and step $(e)$ is derived by adding $-\log(1+\frac{1}{M}\Tr\BD_j\BQ_j+\Ba_{j}^{H}\BQ_j\Ba_j)$ to $\E_j$ and $\E_{j+1}$. By far, we have decomposed $C_{M}(\rho)-\E C_{M}(\rho) $ into the summation of the sequence $\gamma_{M}$, $\gamma_{M-1}$, ..., $\gamma_1$, which is a martingale difference with respect to the increasing filtration $\mathcal{F}_{M}$, $\mathcal{F}_{M-1}$,..., $\mathcal{F}_{1}$. Next, we will finish the three steps of the proof.

\subsection{Step 1: Validation of Lyapunov's condition}
\label{lyn_veri}
By~(\ref{martin_decomp}), we need to validate $\sum_{j=1}^{M}\E|\gamma_{j}|^{2+\delta} \xrightarrow{M \rightarrow \infty} 0$, where the left hand side can be evaluated as,
\begin{equation} 
\begin{aligned}
\label{lyn_bnd}
&(\E|\gamma_{j}|^{2+\delta} )^{\frac{1}{2+\delta}} \le (\E |\E_j \log(1+\zeta_j)   |^{2+\delta}  )^{\frac{1}{2+\delta}}
%\\
%&
+(\E |\E_{j+1} \log(1+\zeta_j)   |^{2+\delta}  )^{\frac{1}{2+\delta}}
\\
&
\overset{(a)}{\le} 2 (\E | \log(1+\zeta_j)   |^{2+\delta}  )^{\frac{1}{2+\delta}}.
\end{aligned}
\end{equation}
The inequality $(a)$ in~(\ref{lyn_bnd}) follows from $(|\E_j\log(1+\zeta_j)|)^{2+\delta} \le  (\E_j| \log(1+\zeta_j)|)^{2+\delta}    \le \E_{j} ( |\log(1+\zeta_j)|)^{2+\delta}$ according to Jensen's inequality. By Assumptions~\textbf{A.1}-\textbf{A.3} and the trace inequality in~(\ref{trace_inq}), it follows
\begin{equation} 
\frac{1}{M}\Tr\BD_j \BQ_j +\Ba_{j}^{H}\BQ_j\Ba_j \le \rho^{-1}( c \sigma^2_{max}+ a_{max}^2):=U_1.
\end{equation}
By the non-decreasing attribute of $f(x)=\frac{x}{1+x}$,  we have
\begin{equation} 
\label{zeta_bnd}
\begin{aligned}
&\zeta_j \ge - \frac{\frac{1}{M}\Tr\BD_j\BQ_j+\Ba_{j}^{H}\BQ_j\Ba_j}{1+\frac{1}{M}\Tr\BD_j\BQ_j+\Ba_{j}^{H}\BQ_j\Ba_j} 
%\\
%&
\ge  -\frac{U_1}{1+U_1}:=-U_2 >-1.
\end{aligned}
\end{equation}
By the non-decreasing attribute of $g(x)=\frac{\log(1+x)}{x},~x\in (-1,\infty)$, we have
\begin{equation} 
\label{zeta_bnd2}
\frac{\gamma_j}{\zeta_j} \le \frac{\log(1-U_2)}{U_2}:=U_3 .
\end{equation}
Therefore, according to~(\ref{lyn_bnd}),~(\ref{zeta_bnd}), and~(\ref{zeta_bnd2}), if $\delta \in (0, 6]$ and $\E |X_{i,j}|^{2+\delta}\le \infty$, we have 
%\begin{equation} 
\begin{align}
\label{lya_vad}
&\E |\gamma_j|^{2+\delta} \le (2 U_3)^{2+\delta} \E |\zeta_j|^{2+\delta} \le (2 U_3)^{2+\delta} 
%\\
%&
%\times
\E |  \Bh_{j}^{H}\BQ_{j}\Bh_{j}-\frac{1}{M}\Tr\BD\BQ_j-\Ba_{j}^{H}\BQ_j\Ba_j|^{2+\delta}
\overset{(a)}{\le} \frac{U}{M^{\frac{2+\delta}{2}}},%\notag,
\end{align}
%\end{equation}
where $(a)$ in~(\ref{lya_vad}) follows from~Lemma~\ref{lemma_ext} in Appendix~\ref{appendix_lemmas}. Therefore, $\sum_{j=1}^{M}\E|\gamma_{j}|^{2+\delta} \xrightarrow{M \rightarrow \infty} 0$ and the Lyapunov's condition is validated.

\subsection{Step 2: Evaluation of the asymptotic variance}
In this step, we will first derive the asymptotic variance by evaluating the sum of the mean square for the martingale difference.

\subsubsection{Evaluation of the sum of conditional variances}
\label{eva_var}
In the following, we will show the following convergence
\begin{equation} 
\label{first_app_var}
\sum_{j=1}^{M}\E_{j+1}\gamma_j^2 \xrightarrow[M\rightarrow \infty]{\mathcal{P}}\sum_{j=1}^{M}\E_{j+1}(\E_j \zeta_j)^2.
\end{equation}
This convergence intuitively follows from the first-order Taylor expansion
\begin{equation}
\begin{aligned}
&\E_j \log(1+\zeta_j)=\E_j \zeta_j 
+[\E_j \log(1+\zeta_j)\mathbbm{1}_{|\zeta_j|\le U_2}-\E_j \zeta_j]
%\\
%&
+ \E_j\log(1+\zeta_j)
=\E_j \zeta_j 
+\varepsilon_{j,1}+\varepsilon_{j,2},
\end{aligned}
\end{equation}
where $U_2$ is given in~(\ref{zeta_bnd}). Now we will show that $\sum_{j=1}^{M}\varepsilon_{j,1}$ and $\sum_{j=1}^{M}\varepsilon_{j,2}$ vanish as $M$ approaches infinity. By the Taylor expansion of $\log(1+x)$, we have
\begin{equation}
\begin{aligned}
&
|\varepsilon_{j,1}|=| \E_j (\sum_{m=1}^{\infty} \frac{\zeta_j \mathbbm{1}_{|\zeta_j |<U_2}}{m }  - \zeta_j )|
\le \E_j \zeta_j \mathbbm{1}_{\zeta_j >U_2} 
+
%\\
%&
\sum_{m=2}^{\infty}\E_j |\zeta_j|^{m}\mathbbm{1}_{|\zeta_j| \le U_2}
\le  \E_j \zeta_j \mathbbm{1}_{\zeta_j >U_2} + \frac{\E_j \zeta_j^2\mathbbm{1}_{|\zeta_j | \le U_2}  }{1-U_2}.
\end{aligned}
\end{equation}
By the inequality $(a+b)^2\le 2(a^2+b^2)$, we can obtain
\begin{equation}
\begin{aligned}
\label{varep_1}
\E \varepsilon_{j,1}^2 & \le 2 \left(\E \zeta_j^2 \mathbbm{1}_{\zeta_j >U_2}+\frac{ \E \zeta_j^4\mathbbm{1}_{|\zeta_j | \le U_2}  }{1-U_2}   \right)
\\
&
\overset{(a)}{\le } 2(\frac{\E \zeta_j^4}{U_2^2}+\frac{\E \zeta_j^4}{(1-U_2)^2})
\le 2(\frac{1}{U_2^2}+\frac{1}{(1-U_2)^2})
%\\
%&
%\times
\E (\Bh_i^{H}\BQ_j\Bh_i-\frac{1}{M}\Tr\BD_j\BQ_j -\Ba_i^{H}\BQ_j\Ba_i)^4 \overset{(b)}{\le} K M^{-2},
\end{aligned}
\end{equation}
where the inequality $(a)$ in~(\ref{varep_1}) follows from $\zeta_j^2 \mathbbm{1}_{\zeta_j >U_2}  \le   \zeta_j^2 \frac{\zeta_j^2 }{U_2} \mathbbm{1}_{\zeta_j >U_2} $ and step (b) in~(\ref{varep_1}) follows from~(\ref{qua_ext}) in Lemma~\ref{lemma_ext} when $\E |X_{i,j}|^{8} \le \infty$, which holds true for Gaussian entries. $\E \varepsilon_{j,2}^2$ can be handled similarly as
\begin{equation}
\E \varepsilon_{j,2}^2 \le \E \zeta_j^2 \mathbbm{1}_{\zeta_j >U_2} \le U_2^{-2} \E \zeta_j^4 \mathbbm{1}_{\zeta_j >U_2} \le K' M^{-2}.
\end{equation}
By far, we have obtained $\E_j \log(1+\zeta_j)=\E_j \zeta_j +\BO(M^{-2}) $ and we can similarly obtain $\E_{j+1} \log(1+\zeta_j)=\E_j+1 \zeta_j +\BO(M^{-2}) $. Since $\E_{j+1}\zeta_j=0 $, we have $\gamma_j=\E_j\zeta_j+\varepsilon_{z,j}$ and $\E \varepsilon_{z,j}^2=\BO(M^{-2})$ to further obtain
\begin{equation}
\begin{aligned}
&
\E| \E_{j+1}(\gamma_j)^2-\E_{j+1} (\E_j\zeta_j)^2|
%\\
%&
=\E|\E_{j+1}\varepsilon_{z,j}^2+2\E_{j+1} |\varepsilon_{z,j}\E_j\zeta_j||
\\
&
\le \E \varepsilon_{z,j}^2+ 2\E^{\frac{1}{2}} \varepsilon_{z,j}^2 \E^{\frac{1}{2}}|\E_j\zeta_j|^2 
%\\
%&
\le \E \varepsilon_{z,j}^2+ 2\E^{\frac{1}{2}} \varepsilon_{z,j}^2 \E^{\frac{1}{2}}\zeta_j^4  
=\BO(M^{-\frac{3}{2}}).
\end{aligned}
\end{equation}
Therefore, we have 
\begin{equation}
\sum_{j=1}^{M}\E | \E_{j+1}\gamma_j^2-\E_{j+1}(\E_j\zeta_j)^2 |\le K_{c} M^{-\frac{1}{2}}.
\end{equation}
By the Markov's inequality, we can conclude~(\ref{first_app_var}). Therefore, the variance evaluation turns to be the evaluation of $\sum_{j=1}^{M}\E_{j+1}(\E_j \zeta_j)^2$, which will be handled in the following.

\subsubsection{Evaluation of $\sum_{j=1}^{M}\E_{j+1}(\E_j \zeta_j)^2$}
Before we proceed, we introduce
\begin{equation}
\widetilde{b}_{i}=\rho^{-1}\left(1+\Ba_{i}^{H}\BQ_{i}\Ba_{i}+\frac{1}{M}\Tr\BD_{i}\BQ_{i} \right)^{-1},
\end{equation}
which can be regarded as an intermediate approximation for $\widetilde{q}_{i,i}$. Then, we have $\E_{j}\zeta_j=\E \rho \widetilde{b}_{j} e_j$, where
\begin{equation}
e_j=\Bh_j^{H}\BQ_j\Bh_j-\Ba_{j}^{H}\BQ_{j}\Ba_{j}-\frac{1}{M}\Tr\BD_{j}\BQ_{j}.
\end{equation}
According to~(\ref{qjj_approx}) in Lemma~\ref{appro_diag_ele} and~(\ref{qua_ext}) in Lemma~(\ref{lemma_ext}), we can obtain $\E |\widetilde{b}_{j}-\widetilde{t}_{j,j}|\le 2 (\E | \widetilde{b}_{j}-\widetilde{q}_{j,j} |^2 +\E | \widetilde{t}_{j,j}-\widetilde{q}_{j,j} |^2) =o(1) $ so that
\begin{equation}
\begin{aligned}
\E |  \E_{j-1} (\E_{j}\rho  \widetilde{b}_{j} e_j  )^2 - \E_{j-1} \rho^2\widetilde{t}_{j,j}^2 (\E_{j-1} e_j)^2 | 
%\\
%&
&\le \E \rho^2|\E_{j}(\widetilde{b}_{j}-\widetilde{t}_{j,j})e_{j}(\E_{j}(\widetilde{b}_{j}+\widetilde{t}_{j,j})e_{j}) |
\\
&
 \le K \E^{\frac{1}{2}}|\widetilde{b}_{j}-\widetilde{t}_{j,j}  |^2   \E^{\frac{1}{2}} e_{j}^4=o(M^{-1}).
\end{aligned}
\end{equation}
Therefore, we can obtain $\E \sum_{j=1}^{M} \E_{j-1}(\E_{j}\rho\widetilde{b}_{j}e_{j} )^2 -\E \sum_{j=1}^{M} \E_{j-1}(\E_{j}\rho\widetilde{t}_{,jj}e_{j} )^2\xrightarrow{M \rightarrow \infty} 0 $ and conclude
\begin{equation}
 \sum_{j=1}^{M} \E_{j-1}(\E_{j}\rho\widetilde{b}_{j}e_{j} )^2\xrightarrow[M \rightarrow \infty]{\mathcal{P}}  \sum_{j=1}^{M} \E_{j-1}(\E_{j}\rho\widetilde{t}_{j,j}e_{j} )^2,
\end{equation}
by the Markov's inequality. Then, the evaluation of the asymptotic variance resorts to the evaluation of $\sum_{j=1}^{M}\rho^2\widetilde{t}_{j,j}^2 \E_{j-1} (\E_{j}e_{j})^2$. %In fact, by Lemma~\ref{lemma_ext}, we have
\begin{figure*}
\begin{equation}
\label{sum_asymp_var}
\begin{aligned}
&\sum_{j=1}^{M}\rho^2\widetilde{t}_{j,j}^2 \E_{j-1} (\E_{j}e_{j})^2
%\\
%&
=\sum_{j=1}^{M} \rho^2\widetilde{t}_{j,j}^2 \left( \frac{\Tr\BD_{j}(\E_j \BQ_j) \BD_{j}(\E_j \BQ_j)}{M^2}+\frac{2\Ba_j^{H}(\E_j \BQ_j)\BD_j \Ba_j^{H}(\E_j \BQ_j) }{M}  \right)
%\\
%&
%+ \frac{\kappa}{M^2}\sum_{j=1}^{M} \rho^2\widetilde{t}_{j,j}^2\sum_{i=1}^{N}\sigma^4_{i,j} [\E_j \BQ_j ]_{i,i}^2
%+\frac{4\rho^2\widetilde{t}_{j,j}^2}{M}\sum_{j=1}^{M} \Re\left(\lambda M^{-\frac{1}{2}} \Ba_{j}(\E_j \BQ_j)\BD_{j}^{\frac{3}{2}} \mathrm{vdiag}(\E_{j}\BQ_j)  \right)
%\\
%&
:=T.%T_{1}%+T_{2}+T_{3}.
\end{aligned}
\end{equation}
\hrulefill
\end{figure*}
\begin{comment}
where $\lambda=\E (|X_{i,j}|^2X_{i,j})$. We first show that $X_{3}$ vanishes in probability as $M \rightarrow \infty$. In fact, b
\begin{equation}
\begin{aligned}
&\E |T_3| \le \frac{K}{M} \sum_{j=1}^{M}\E | M^{-\frac{1}{2}} \Ba_{j}^{H}( \BQ_j)\BD_{j}^{\frac{3}{2}} \mathrm{vdiag}(\E_{j}\BQ_j)  |
\\
&
\le  \frac{K}{M}  \sum_{j=1}^{M}\E | M^{-\frac{1}{2}} \Ba_{j}^{H}(\E_j \BQ_j)\BD_{j}^{\frac{3}{2}} \mathrm{vdiag}(\BT)  |
\\
&
+ \frac{K}{M}  \sum_{j=1}^{M}\E | M^{-\frac{1}{2}} \Ba_{j}^{H}(\E_j \BQ_j)\BD_{j}^{\frac{3}{2}} \mathrm{vdiag}(\BQ_j-\BQ)  |
\\
&
+\frac{K}{M}  \sum_{j=1}^{M}\E | M^{-\frac{1}{2}} \Ba_{j}^{H}(\E_j \BQ_j)\BD_{j}^{\frac{3}{2}} \mathrm{vdiag}(\BQ-\BT)  |.
\end{aligned}
\end{equation}
\end{comment}
%Since $X_{i,j}$ are Gaussi, $\kappa=0$ so that $T_{3}=0$.

By similar lines as in~\cite{hachem2012clt}, it can be proved that $\Var(\Tr\BD_{j}(\E_j \BQ_j) \BD_{j}(\E_j \BQ_j))=\BO(1)$ and $\Var(\Ba_j^{H}(\E_j \BQ_j)\BD_j \Ba_j^{H}(\E_j \BQ_j))=\BO(M^{-1})$. Therefore, by the Chebyshev's inequality, we have the following approximation
\begin{equation}
T \xlongrightarrow[M\rightarrow \infty]{\mathcal{P}}  \sum_{j=1}^{M} (\frac{\rho^2\widetilde{t}_{j,j}^2 \psi_{j,j}}{M}+\frac{2\Theta_{j,j}}{M} ):=K_{M},
\end{equation}
where $ \psi_{j,j}=\frac{\E\Tr\BD_{j}(\E_j \BQ_j) \BD_{j} \BQ_j}{M}$ and $\Theta_{j,j}=\frac{\rho^2\widetilde{t}_{j,j}^2\E \Ba_j^{H}(\E_j \BQ_j)\BD_j \Ba_j^{H} \BQ_j}{M}$. $K_M$ can be evaluated by the following lemma.
\begin{lemma}
\label{sum_mart}
Given Assumptions~\textbf{A.1}-\textbf{A.3}, the following evaluation holds true,
\begin{equation} 
\label{sum_pjj}
 \sum_{j=1}^{M} (\frac{\rho^2\widetilde{t}_{j,j}^2 \psi_{j,j}}{M}+\frac{2\Theta_{j,j}}{M} ) \xlongrightarrow[]{M\xrightarrow{} \infty} -\log\det(\bold{I}_{2M}-\BB_M) ,
\end{equation}
where $\BB_M$ is defined in~(\ref{def_BB}).
\end{lemma}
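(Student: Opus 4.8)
The plan is to prove~(\ref{sum_pjj}), i.e. that $K_{M}\to-\log\det(\BI_{2M}-\BB_{M})$, in three stages: (i) replace the random quantities $\psi_{j,j},\Theta_{j,j}$ that make up $K_{M}$ by their deterministic equivalents; (ii) evaluate the resulting deterministic second-order resolvent quantities through a closed linear system of $2M$ equations with coefficient matrix $\BI_{2M}-\BB_{M}$; and (iii) solve that system by Cramer's rule and collapse the sum over $j$ into $-\log\det(\BI_{2M}-\BB_{M})$. For stage (i), note that under \textbf{A.1}--\textbf{A.3} the scalars $\rho^{2}\widetilde{t}_{j,j}^{2}$, $(1+\delta_{j})^{\pm1}$ and the norms $\|\BT\|,\|\widetilde{\BT}\|,\|\BQ\|\le\rho^{-1}$ are uniformly bounded, so it is enough to bound, uniformly in $j$, the difference between $\psi_{j,j},\Theta_{j,j}$ and the quantities obtained after substituting $\BT$ for $\E_{j}\BQ_{j}$ and for $\BQ_{j}$. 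The rank-one perturbation inequality handles $\BQ-\BQ_{j}$ at order $M^{-1}$; the concentration of quadratic forms (Lemma~\ref{lemma_ext}), the diagonal-resolvent estimates (Lemma~\ref{appro_diag_ele}) and the first-order control of Lemma~\ref{mean_the} make each difference $o(1)$; after multiplying by bounded factors and averaging by $\tfrac{1}{M}\sum_{j}$, the total error is negligible. This reduces $K_{M}$ to a deterministic functional of $\BT,\widetilde{\BT},\BA,\bold{\Sigma}$ built from $\tfrac{1}{M}\E\Tr\BD_{k}\BQ\BD_{j}\BQ$ and $\E\,\Ba_{k}^{H}\BQ\BD_{j}\BQ\Ba_{k}$.

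For stage (ii) --- the core --- fix the ``source'' index $j$ and collect into a vector $\chi^{(j)}\in\mathbb{R}^{2M}$ the trace-type quantities $\tfrac{1}{M}\E\Tr\BD_{k}\BQ\BD_{j}\BQ$ and the line-of-sight bilinear quantities $(1+\delta_{k})^{-2}\E\,\Ba_{k}^{H}\BQ\BD_{j}\BQ\Ba_{k}$, $k=1,\dots,M$. Applying the Gaussian integration-by-parts identity to the entries of $\BX$ in each component, together with the resolvent identity $\rho\BQ=\BI_{N}-\BQ\BH\BH^{H}$ and the identity~(\ref{qii_exp}) relating $\widetilde{\BQ},\Bh_{k},\BQ_{k}$, and then discarding the terms that are $o(1)$ by estimates analogous to stage (i), closes the system as
\begin{equation}\label{second_order_system_sketch}
(\BI_{2M}-\BB_{M})\,\chi^{(j)}=\mathbf{c}^{(j)},
\end{equation}
where $\BB_{M}$ is exactly the matrix of~(\ref{def_BB}) and $\mathbf{c}^{(j)}$ is the explicit first-order vector with entries $\tfrac{1}{M}\Tr\BD_{k}\BT\BD_{j}\BT$ and $(1+\delta_{k})^{-2}\Ba_{k}^{H}\BT\BD_{j}\BT\Ba_{k}$ --- equivalently $M$ times the $(M+j)$-th column of $\BB_{M}$. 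The naive computation also produces $N$ auxiliary ``column-side'' unknowns of the form $\tfrac{1}{M}\E\Tr\widetilde{\BD}_{i}\widetilde{\BQ}\cdots\widetilde{\BQ}$; these occur linearly and are removed by a Schur complement against the row-side equations, and it is exactly this elimination that produces the off-diagonal blocks $\bold{\Pi}_{M},\bold{\Gamma}_{M}$, the rank-one correction $\bold{\Xi}_{M}$ and the diagonal term $\widetilde{\bold{\Lambda}}_{M}$. Since the spectral radius of $\BB_{M}$ is strictly below $1$ under \textbf{A.1}--\textbf{A.3} --- the same fact required to verify $\liminf_{n}s_{n}^{2}>0$ in Step~3 --- the matrix $\BI_{2M}-\BB_{M}$ is invertible and~(\ref{second_order_system_sketch}) is solved by Cramer's rule.

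For stage (iii), reading off components yields $\psi_{j,j}=[\chi^{(j)}]_{j}$ and $\Theta_{j,j}$ proportional to $[\chi^{(j)}]_{M+j}$, so $K_{M}$ becomes a sum over $j$ of ratios all having the common denominator $\det(\BI_{2M}-\BB_{M})$. To identify this sum with $-\log\det(\BI_{2M}-\BB_{M})$, I would introduce a one-parameter family $\bold{\Sigma}\mapsto t\bold{\Sigma}$, $t\in[0,1]$ (equivalently interpolate in the spectral parameter $\rho$), for which $\det(\BI_{2M}-\BB_{M}(t))\to1$ and $K_{M}(t)\to0$ as $t\to0$, and then verify the differential identity $\tfrac{d}{dt}K_{M}(t)=-\tfrac{d}{dt}\log\det(\BI_{2M}-\BB_{M}(t))$ by substituting~(\ref{second_order_system_sketch}) and using $\tfrac{d}{d\rho}\BQ=-\BQ^{2}$ (so that the second-order resolvent quantities are derivatives of first-order ones); integrating in $t$ then gives~(\ref{sum_pjj}). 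Consistency with the literature is built in: for separable $\bold{\Sigma}$ the system~(\ref{second_order_system_sketch}) collapses to two equations and the identity reduces to~\cite[Theorem~2.2]{hachem2012clt}, while for $\BA=\mathbf{0}$ it collapses to $M$ equations and reduces to~\cite[Theorem~3.2]{hachem2008clt}.

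The main obstacle is stage (ii): closing the system with \emph{exactly} the matrix $\BB_{M}$ of~(\ref{def_BB}). Because the variance profile is non-separable there is no reduction to finitely many scalar parameters, so each application of Gaussian integration by parts produces a vector identity in $\mathbb{R}^{M}$ or $\mathbb{R}^{N}$, and one must carefully isolate the ``self-energy'' contributions that assemble $\bold{\Pi}_{M},\bold{\Gamma}_{M},\bold{\Xi}_{M},\widetilde{\bold{\Lambda}}_{M}$ from the genuinely negligible ones; moreover the line-of-sight component forces the trace-type and bilinear-type unknowns to be solved jointly, which doubles the dimension and makes the Schur elimination of the column-side unknowns delicate, since the $O(1)$ cross terms must be retained. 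A secondary difficulty is the bookkeeping of stage (iii): one must check that the Cramer numerators for different $j$ assemble into a total derivative of $\log\det$, which hinges on the block structure of $\BB_{M}$ --- in particular on its $(1,1)$ and $(2,2)$ blocks being $\bold{\Pi}_{M}$ and $\bold{\Pi}_{M}^{T}$.
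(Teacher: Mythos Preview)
Your plan has a genuine gap at Stage~(i), and it propagates through the rest of the argument. The quantities $\psi_{j,j}=\tfrac{1}{M}\E\Tr\BD_{j}(\E_{j}\BQ_{j})\BD_{j}\BQ_{j}$ and $\Theta_{j,j}$ depend essentially on the conditional expectation $\E_{j}$, which integrates out $\By_{1},\dots,\By_{j-1}$ while freezing $\By_{j},\dots,\By_{M}$. Replacing $\E_{j}\BQ_{j}$ by $\BT$ is \emph{not} an $o(1)$ perturbation: by the tower property $\psi_{j,j}=\tfrac{1}{M}\E\Tr\BD_{j}(\E_{j}\BQ_{j})\BD_{j}(\E_{j}\BQ_{j})$, so for $j=1$ you recover the full second moment $\tfrac{1}{M}\E\Tr\BD_{1}\BQ_{1}\BD_{1}\BQ_{1}$, while for $j=M$ you get $\tfrac{1}{M}\Tr\BD_{M}\BT\BD_{M}\BT$. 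These differ at order $1$, and it is precisely this $j$-dependence that makes $K_{M}$ sum to the variance. Your ``deterministic functional built from $\tfrac{1}{M}\E\Tr\BD_{k}\BQ\BD_{j}\BQ$'' is the $j=1$ version for every $j$ and does not equal $K_{M}$.

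The paper exploits this $j$-dependence structurally. When one expands $\psi_{j,k}$ and $\theta_{j,k}$ via resolvent identities, the presence of $\E_{j}$ forces the cross terms involving $\By_{i}$ to survive only for $i\le j$ (for $i>j$ they are integrated to zero), so the closed system is $2j$-dimensional with coefficient matrix $\BI_{2j}-\BB_{j}$, not $2M$-dimensional with $\BI_{2M}-\BB_{M}$. Cramer's rule then gives, after a determinant manipulation, the key identity
\[
\frac{\rho^{2}\widetilde{t}_{j,j}^{2}\psi_{j,j}+2\Theta_{j,j}}{M}=\frac{\det(\BI_{2(j-1)}-\BB_{j-1})-\det(\BI_{2j}-\BB_{j})}{\det(\BI_{2j}-\BB_{j})}+o(1),
\]
and the sum over $j$ \emph{telescopes} (via $x\approx\log(1+x)$) to $-\log\det(\BI_{2M}-\BB_{M})$. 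Your interpolation-in-$t$ scheme is not needed and, as written, cannot work: the differential identity $\tfrac{d}{d\rho}\BQ=-\BQ^{2}$ does not translate into a clean identity for the $\E_{j}$-conditioned quantities, and without the correct $2j\times 2j$ system you have no handle on $K_{M}(t)$ to differentiate. The nested family $\BB_{1}\subset\BB_{2}\subset\cdots\subset\BB_{M}$ and the telescoping ratio of determinants are the missing ideas.
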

\begin{proof}
The proof of Lemma~\ref{sum_mart} is given in Appendix~\ref{proof_sum_mart}.
\end{proof}
By~(\ref{sum_asymp_var}) and Lemma~\ref{sum_mart}, we have 
\begin{equation}
\sum_{j=1}^{M}\rho^2\widetilde{t}_{j,j}^2 \E_{j-1} (\E_{j}e_{j})^2 \xlongrightarrow[M\rightarrow \infty]{\mathcal{P}} -\log\det(\bold{I}_{2M}-\BB_M),
\end{equation}
which indicates that the asymptotic variance is $V_{M}(\sigma^2)=-\log\det(\bold{I}_{2M}-\BB_M)$. By far, we have finished Step 2.

\subsection{Step 3: The lower bound of the asymptotic variance}
In this section, we will verify $\inf\limits_{M} V_{M}(\sigma^2)>0$. By Lemma~\ref{sum_mart}, we can obtain that for large $M$, there holds true
\begin{equation} 
\label{V_M_lower}
\begin{aligned}
&
V_M(\sigma^2) -\sum_{j=1}^{M}\frac{\rho^2\widetilde{t}_{j,j}^2\sum_{i=1}^{N}\sigma_{ij}^2t_{i}^2}{M^2} 
\\
&
\ge \sum_{j=1}^{M}\left( \frac{\rho^2\widetilde{t}_{j,j}^2\Tr\BD_{j}(\E_j \BQ_j)\BD_j\BQ_j}{M^2}-\frac{\rho^2\widetilde{t}_{j,j}^2\sum_{i=1}^{N}\sigma_{ij}^2t_{i,i}^2}{M^2}\right)
\\
&
\ge \frac{\rho^2\widetilde{t}_{j,j}^2\Tr\BD_{j}(\E_j \BQ_j)\BD_j\BQ_j}{M^2}-\frac{\rho^2\widetilde{t}_{j,j}^2}{M}\frac{\Tr\BD_j\BT\BD_j\BT}{M}
\xlongrightarrow[]{M \xlongrightarrow{d}\infty} \sum_{i=1}^{M} \frac{\rho^2 \widetilde{t}_{j,j}^2}{M}[[\left(\bold{I}-\bold{B}_{j} \right)^{-1}-\bold{I}_{2j}]\bold{\Gamma}_{j}^{[j]}  ]_{j}
\\
&
=\sum_{j=1}^{M} \frac{\rho^2\widetilde{t}_{j,j}^2}{M}[\left(\bold{I}-\bold{B}_{j} \right)^{-1}\bold{B}_{j} \bold{\Gamma}_{j}^{[j]} ]_{j}\overset{(a)}{>}0,
%\\
%&
%\ge  \sum_{i=1}^{M} \frac{1}{M}\widetilde{t}_{i}^2\sum_{k=1}^{i} \frac{\rho^2 \widetilde{t}_{k}^2}{M^2}\Tr\BT\BD_{i}\BT\BD_{k}
\end{aligned}
\end{equation}
where Step (a) in (\ref{V_M_lower}) follows from the fact that entries of $\left(\bold{I}-\bold{B}_{i} \right)^{-1}$ and $\bold{B}_{i}$ are positive, which is given in the analysis above~(\ref{G_other}) in Appendix~\ref{proof_sum_mart}.

By the trace inequality in~(\ref{trace_inq}), we can obtain $\delta_{i} < c \sigma^2_{max} \rho^{-1}$ and $\widetilde{\delta}_{j} <  \sigma^2_{max} \rho^{-1}$ so that
\begin{equation}
\widetilde{t}_{j,j} > \frac{1}{\rho( 1+ c  \sigma^2_{max} \rho^{-1}+ a_{max}^2 \rho^{-1}  )}
\end{equation}
is bounded away from $0$, which also holds true for $t_{i,i}$. Therefore, we have 
\begin{equation} 
\label{V_M_last}
V_M(\rho)> \sum_{j=1}^{M} \frac{\rho^2\widetilde{t}_{j,j}^2\sum_{i=1}^{N}\sigma_{ij}^2t_{i}^2}{M^2}  >  \frac{K \sum_{i,j} \sigma_{i,j}^2 }{M^2}.
\end{equation}
By~\textbf{A.2}, we can obtain that the right hand side of~(\ref{V_M_last}) is bounded away from $0$ to conclude step 3. According to Lemma~\ref{clt_lemma}, we conclude Theorem~\ref{main_clt}.

\section{Numerical Results}
\label{sec_simu}
In this section, we will validate the theoretical results by numerical simulations. Given the equivalence of the Weichselberger and Holographic MIMO channels, we only focus on the latter. In particular, we consider the Fourier based holographic channel~\cite{pizzo2022fourier}.

\subsection{Simulation Settings}
Given the equivalence of the MI in the spatial~(\ref{mi_first}) and angular domain~(\ref{mi_second}), we only consider the channel in the angular domain, i.e., $\BH_{a}$. For the NLoS component with separable model, we consider the isotropic model, i.e., $S(k_x,k_y,\kappa_{x},\kappa_y)=1$. The variance profile for the separable case can be computed by~\cite[Eq. (70)]{marzetta2018spatially} and generated by the code in~\cite{gitluca}. Since there is no existing models for $S(k_x,k_y,\kappa_{x},\kappa_y)$ of the non-separable case, we construct one based on the product of the Gaussian kernel and the separable variance profile. The non-separable variance profile is given by
\begin{equation}
\label{nonsep_def}
\begin{aligned}
&\sigma^2(l_x,l_y,m_x,m_y)=\sigma^2_{R}(l_x,l_y)\sigma^2_{S}(m_x,m_y)
 \exp(- \frac{(l_x-m_x)^2+(l_y-m_y)^2}{a} ),
\end{aligned}
\end{equation}
where $a$ is a scaling factor and set as $a=1$. $\sigma^2_{R}(l_x,l_y)$ and $\sigma^2_{S}(m_x,m_y)$ can be computed by~\cite[Eq. (70)]{pizzo2020spatially}\footnote{The code is available at: https://github.com/lucasanguinetti/Holographic-MIMO-Small-Scale-Fading.}.
The holographic MIMO channel is generated by
\begin{equation}
\BH_{h}=\sqrt{\frac{k}{n_s}}\BA_h+\bold{\Sigma}^{\circ\frac{1}{2}}_h \odot \bold{X}_h, 
\end{equation}
where $\BX_h \in \mathbb{C}^{n_R\times n_S}$ is an i.i.d. random matrix with entries $[\BX_h]_{i,j} \sim \mathcal{CN}(0,\frac{1}{n_S})$. The LoS component $\BA_h$ is obtained by~\cite[Eq. (12)]{pizzo2020holographic} and we introduce a factor $k$ to indicate the power ratio of the LoS and non-LoS component and set $k=10$ by default. The frequency is $3\times 10^{10}$ Hz with wavelength $\lambda=0.01$ m. The SNR $\frac{1}{\sigma^2}$ is set as $10$ dB and spacing is set to be $\frac{\lambda}{4}$. The antenna efficiency is set to be $\tau=0.6$ and the antenna area is $l_a\times l_a= \frac{\lambda^2}{64}$ with $l_a=\frac{\lambda}{8} $.

%which is generated by $10^{6}$ Monte-Carlo realizations
\begin{comment}
\subsection{Gaussianity}
In Fig~\ref{clt_4}, we compare the empirical probability density function (PDF) (normalized histogram in blue) of the normalized MI with the PDF of the standard Gaussian distribution (in red). he physical sizes of the planar arrays are set to be the same with $L_{S,x}=L_{S,y}=L_{R,x}=L_{R,y}=10\lambda$, unless specified otherwise. The normalized MI, i.e., $\frac{C_{M}(\zeta)-\overline{C}_{M}(\zeta)}{\sqrt{V_M(\zeta)}}$, is generated by $10^{6}$ Monte-Carlo realizations. It can be observed that the empirical PDF fits the Gaussian distribution well, which validates the asymptotic Gaussianity.
\begin{figure}[t!]
\centering\includegraphics[width=0.4\textwidth]{clt_4.eps}
\caption{Validation of Gaussianity.}
\label{clt_4}
\end{figure}
\end{comment}

\subsection{Gaussianity}
In Fig.~\ref{qqplots}, the normal quantile-quantile-plots (QQ-plots) for the normalized MI, $\frac{{C}_{M}(\zeta)- \overline{{C}}_{M}(\zeta)}{\sqrt{V_{M}(\zeta)}}$ with $\zeta=\frac{\sigma^2}{G_{R}G_{S} N_R N_S}$, are plotted by blue plus sign with settings $n_R=n_S=\{4,16,36,60 \}$.  The number of samples of the MI is $10^5$ and the red line is the quantile of the Gaussian distribution. It can be observed that the QQ plot of the normalized MI is closer to that of the QQ plot of the normalized MI as $n_R$ increases, which validates the Gaussianity of the MI.

\begin{figure*}[!htb] 
    \centering
  \subfloat[\label{qqplot4}]{
       \includegraphics[width=0.32\linewidth]{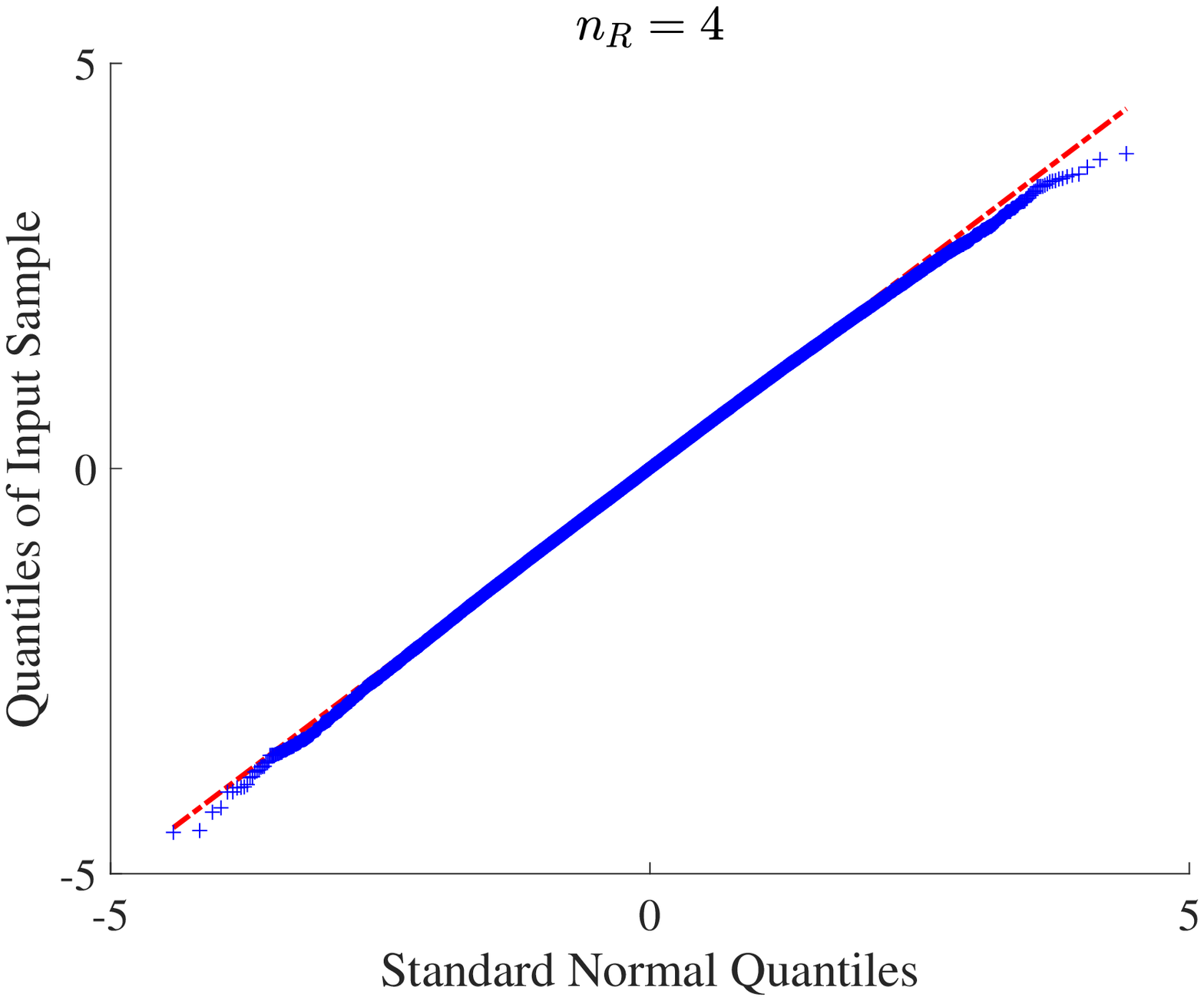}}
  \subfloat[\label{qqplot16}]{
        \includegraphics[width=0.32\linewidth]{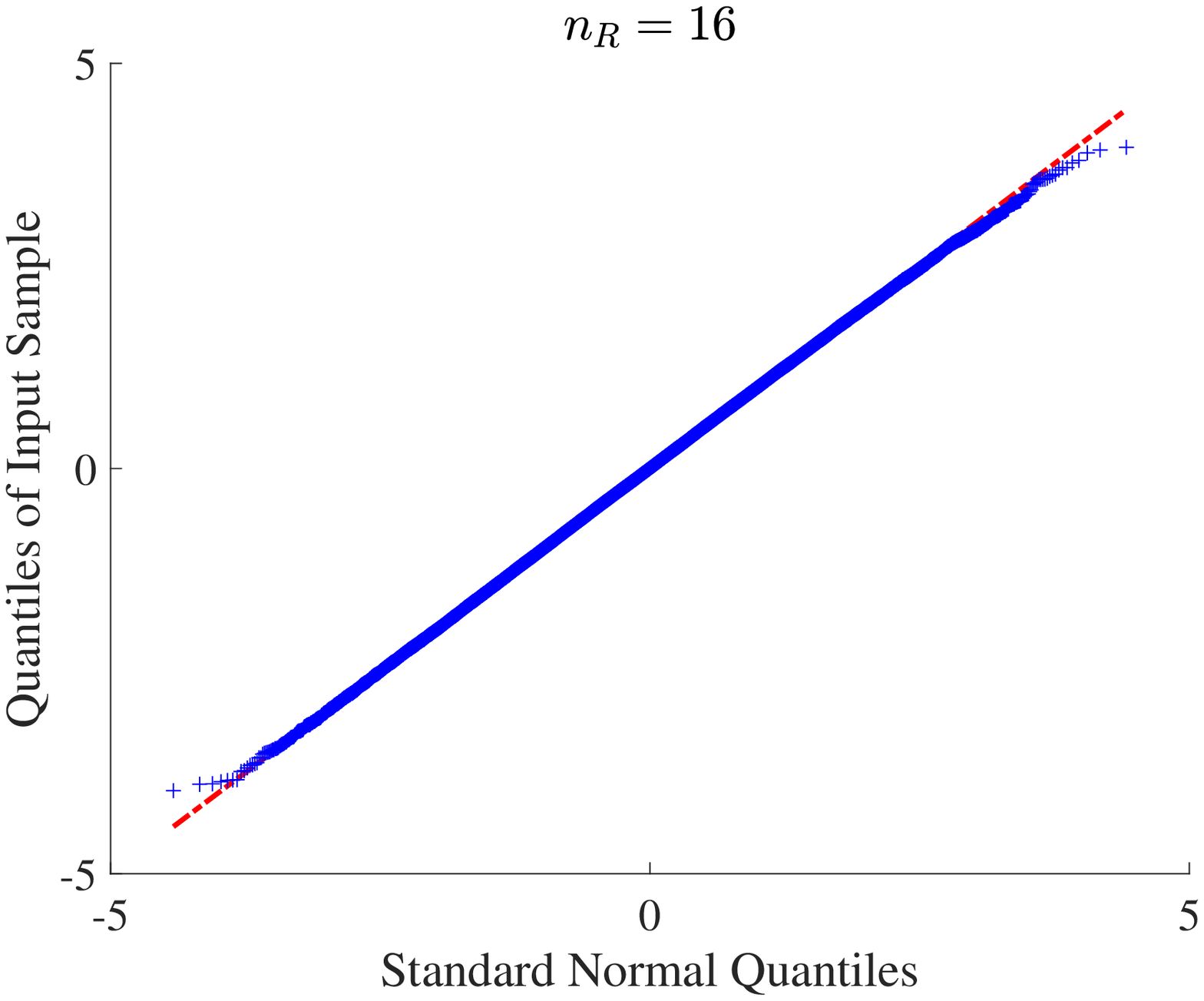}}
        %\hfill
  \subfloat[\label{qqplot32}]{
        \includegraphics[width=0.32\linewidth]{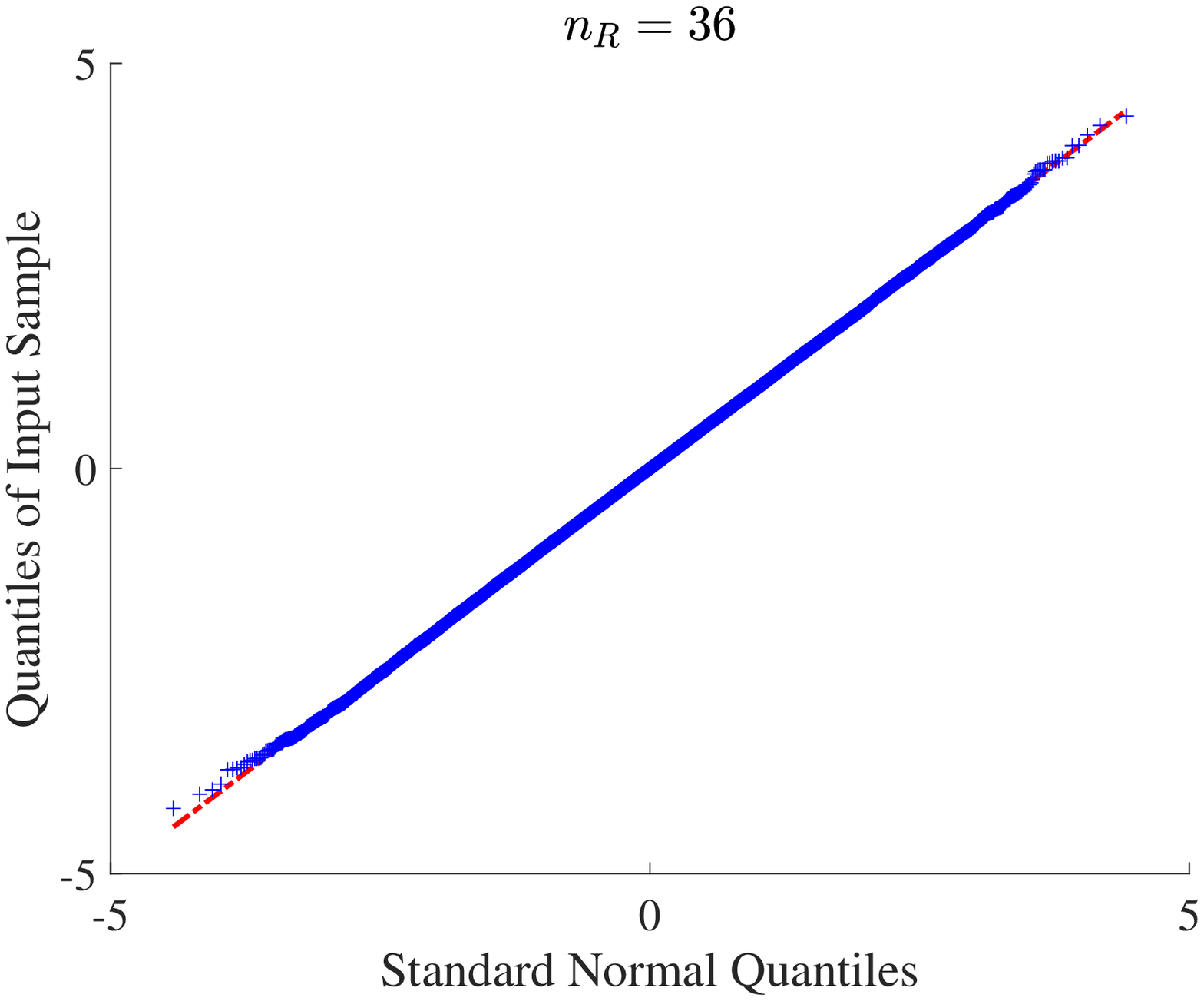}}
 % \subfloat[\label{qqplot60}]{
  %      \includegraphics[width=0.4\linewidth]{qq_plot60.eps}}    
  \caption{Validation of Gaussianity for the normalized MI $\frac{{C}_{M}(\zeta)- \overline{{C}}_{M}(\zeta)}{\sqrt{V_{M}(\zeta)}}$.}
 \label{qqplots}
\end{figure*}

\subsection{Accuracy of the Theoretical Analysis}
The mean and variance of $C_{M}(\zeta)$ are plotted in Figs.~\ref{fig_EMI} and~\ref{fig_variance}, respectively, where the analytical values (Ana.) of the mean and variance are obtained by~(\ref{mean_exp}) in Lemma~\ref{mean_the} and~(\ref{variance_exp}) in Theorem~\ref{main_clt}, respectively. The simulation values (Sim.) are obtained by $10^{4}$ samples of $C_{M}(\zeta)$ with $\frac{1}{\sigma^2}=30$ dB. It can be observed from Figs.~\ref{fig_EMI} and~\ref{fig_variance} that the results for the mean and variance are accurate for different SNRs.% and both the mean and variance increase when the antenna spacing decreases. 

The outage probability for $\Delta=\frac{\lambda}{4}$, $L_x=L_y=10\lambda$ with $\frac{1}{\sigma^2}=\{30, 31\}$ dB is plotted in Figs.~\ref{fig_out_4}, where the analytical values are computed by~(\ref{outage_exp}) in Proposition~\ref{outage_prop} and the simulation values are generated by $5\times 10^{6}$ realizations. It can be observed from Figs.~\ref{fig_out_4} that the analytical results match the simulation values well, which validates the accuracy of the proposed outage probability approximation.

%The outage probability with $\Delta=\frac{\lambda}{6}$ and $\frac{\lambda}{8}$ is plotted in Figs.~\ref{fig_out_6} and~\ref{fig_out_8}, respectively, where the analytical values are computed by~(\ref{outage_exp}) in Proposition~\ref{outage_prop} and the simulation values are generated by $10^{6}$ realizations. It can be observed from Figs.~\ref{fig_out_6} and~\ref{fig_out_8} that the analytical rsults match the simulation values well, which validates the accuracy of the proposed outage probability approximation.

\begin{figure}[t!]
\centering\includegraphics[width=0.5\textwidth]{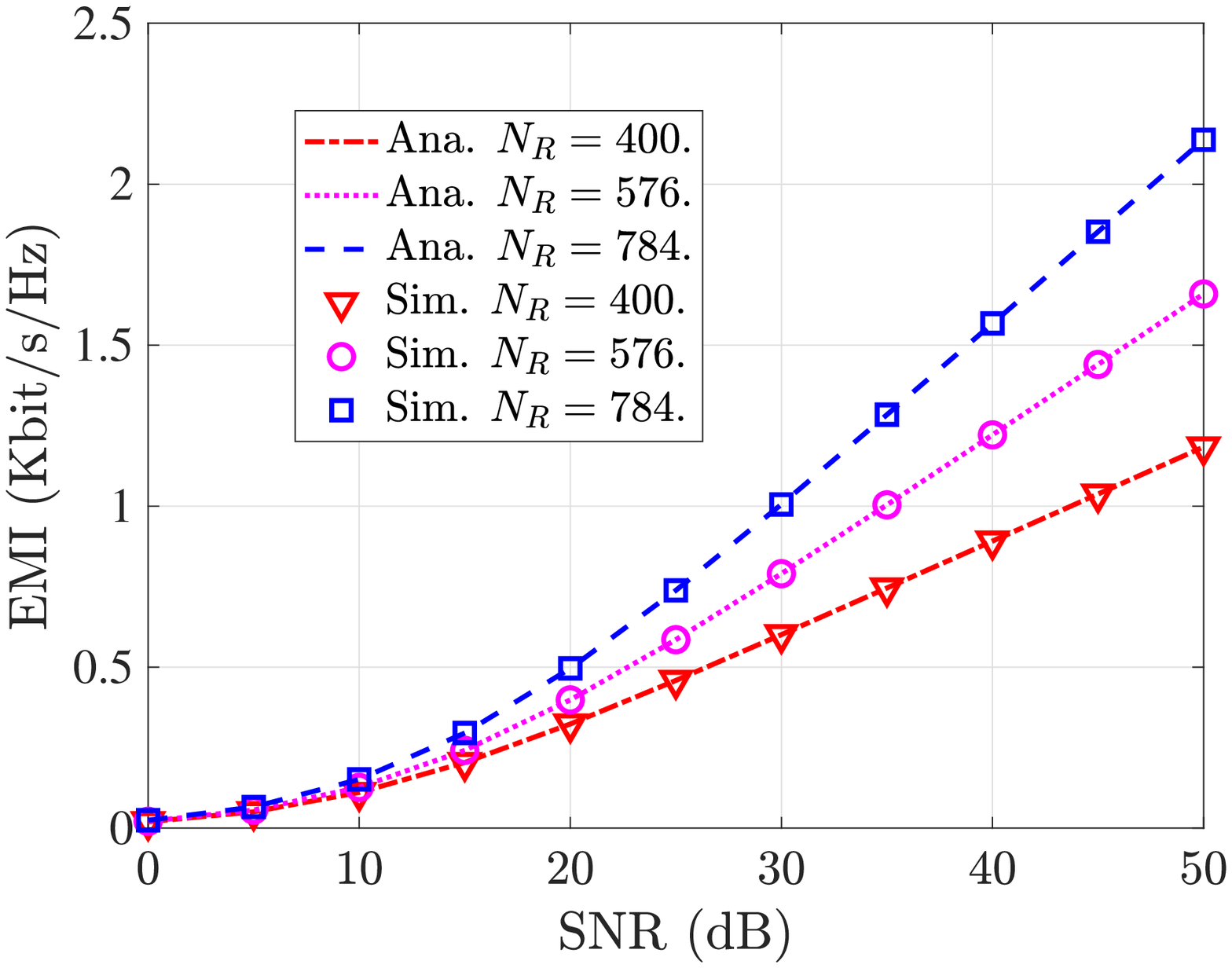}
\caption{EMI}
\label{fig_EMI}
\vspace{-0.3cm}
\end{figure}

\begin{figure}[t!]
\centering\includegraphics[width=0.5\textwidth]{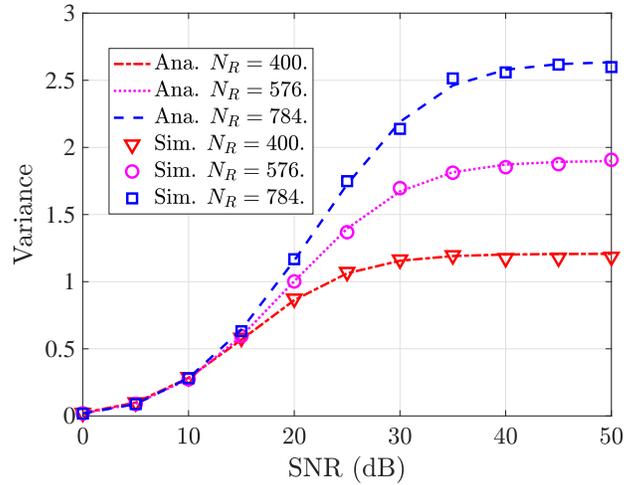}
\caption{Variance}
\label{fig_variance}
\vspace{-0.3cm}
\end{figure}

\begin{figure}[t!]
\centering\includegraphics[width=0.5\textwidth]{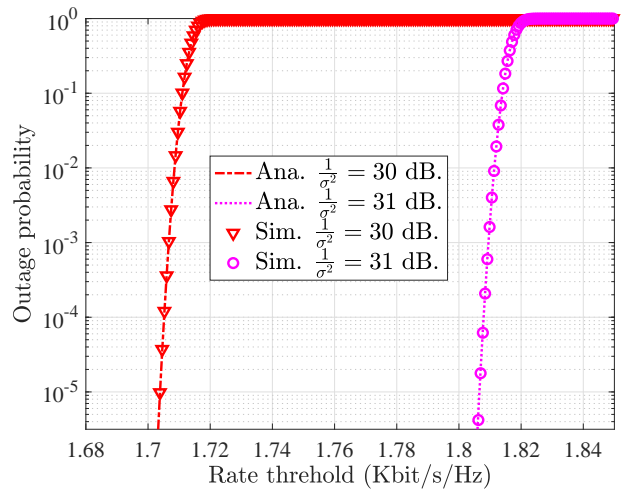}
\caption{Outage probability with $\Delta=\frac{\lambda}{4}$}
\label{fig_out_4}
\vspace{-0.3cm}
\end{figure}

\subsection{Impact of Non-Separable Correlation}
Now we compare the EMI with the non-separable correlation structure in~(\ref{nonsep_def}) and that with the separable correlation structure
\begin{equation}
\label{sep_def}
\begin{aligned}
&\sigma^2_{sep}(l_x,l_y,m_x,m_y)=m\sigma^2_{R}(l_x,l_y)\sigma^2_{S}(m_x,m_y),
\end{aligned}
\end{equation}
where $m$ is a scaling factor. For the sake of fairness,  we keep $\E \Tr\BH_h\BH_h^{H}$ the same for both cases, which is implemented by sharing the common $\BA_h$ and guaranteeing $\Tr(\bold{\Sigma}_{non-sep})= \Tr(\bold{\Sigma}_{sep})$ with appropriate $m$. The settings are $k=0.2$, $a=0.5$, and $L_x=5\lambda$. Fig.~\ref{fig_correlation_EMI} and~\ref{fig_correlation_out} depict the EMI and outage probability with separable and non-separable correlation structure, respectively. It can be observed from Fig.~\ref{fig_correlation_EMI} that the EMI of separable case is smaller than that of non-separable case, which coincides with that in~\cite{weichselberger2006stochastic,zheng2023}. It can be observed from Fig.~\ref{fig_correlation_out} the outage probability with separable correlation is greater  than that with non-separable correlation. 
\begin{figure}[t!]
\centering\includegraphics[width=0.5\textwidth]{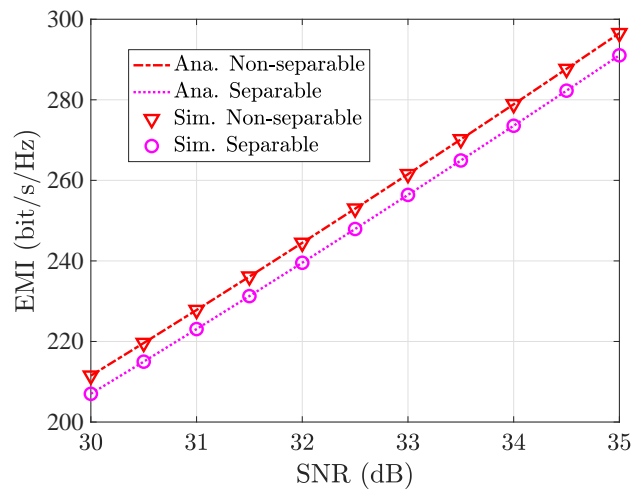}
\caption{Comparison of EMI with separable and non-separable correlations}
\label{fig_correlation_EMI}
\end{figure}

\begin{figure}[t!]
\centering\includegraphics[width=0.5\textwidth]{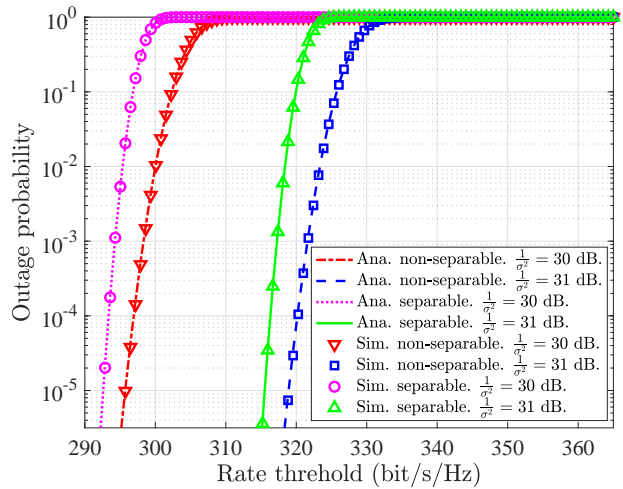}
\caption{Comparison of outage probability with separable and non-separable correlations}
\label{fig_correlation_out}
\end{figure}

\section{Conclusion and Future Works}
\label{sec_con}
In this paper, we investigated the MI over non-centered and non-separable MIMO channels. Specifically, we set up a CLT for the MI and gave the closed-form expressions for the mean and variance. The results can be utilized for evaluating the outage probability of the Rician Weichselberger~\cite{kermoal2002stochastic} and holographic MIMO channels~\cite{pizzo2022fourier}. As far as the authors know, these theoretical results are the first closed-form performance evaluation in the literature for holographic MIMO systems with the non-separable correlation structure. Numerical results validated the accuracy of the evaluation. 

The theoretical results in this paper set up a framework for the analysis of electromagnetically large holographic MIMO channels. The results and approach can be used to perform the finite-blocklength analysis~\cite{zhang2022second} and secrecy analysis~\cite{zhang2022secrecy} for holographic MIMO systems. It is worth mentioning that the analysis in this paper was performed based on the assumption that perfect CSI is available at the receiver and there is no noise among the coupling antennas. It is of practical interest to investigate the impact of the imperfect CSI~\cite{an2023tutorial} and noise in the coupling antennas~\cite{morris2005network } on the holographic MIMO systems, which will be considered in future works.

\newpage

\appendices

\section{Fourier Plane-wave Representation of Electromagnetic Channels}
\label{fourier_details}
%\subsubsection{Fourier Plane-Wave Representation of Electromagnetic Channels }
The authors of \cite{pizzo2022spatial} showed that when the reactive propagation in the proximity (i.e., at a distance of few wavelengths) of the source and scatterers are excluded, the channel response between the transmit antenna at $\bold{s}$ and the receive antenna at $\bold{r}$, i.e., $h_s(\bold{r},\bold{s})$, of two infinitely large arrays can be modeled as a spatially-stationary electromagnetic random field
\begin{equation}
\label{contin_exp}
\begin{aligned}
&h_{s}(\bold{r},\bold{s})=\frac{1}{4\pi^2} \iiiint_{\mathcal{D}_{k}\times\mathcal{D}_{\kappa}} a_{R}(k_x,k_y,\bold{r})
%\\
%&\times 
H_{a}(k_x,k_y,\kappa_x,\kappa_y)a_{S}^{*}(\kappa_x,\kappa_y,\bold{r})\mathrm{d} k_x\mathrm{d} k_y \mathrm{d} \kappa_x \mathrm{d}\kappa_y,
\end{aligned}
\end{equation}
where $a_{S}(\kappa_x,\kappa_y,\bold{s})=e^{\jmath \boldsymbol{\kappa}^{T}\bold{s}}=e^{-\jmath(\kappa_x s_x+\kappa_y s_y +\kappa_z s_z)}$ and $a_{R}(k_x,k_y,\bold{r})=e^{\jmath \bold{k}^{T}\bold{r}}=e^{\jmath(k_x r_x+k_y r_y +k_z r_z)}$
%\begin{equation}
%\label{ortho_bas}
%\begin{aligned}
%&a_{S}(\kappa_x,\kappa_y,\bold{s})=e^{-\jmath \boldsymbol{\kappa}^{T}\bold{s}}=e^{-\jmath(\kappa_x s_x+\kappa_y s_y +\kappa_z s_z)},
%\\
%&
%a_{R}(k_x,k_y,\bold{r})=e^{\jmath \bold{k}^{T}\bold{r}}=e^{-\jmath(k_x r_x+k_y r_y +k_z r_z)},
%\end{aligned}
%\end{equation}
represent the source response at $\bold{s}$ and the receive response at $\bold{r}$, respectively. Here, $\boldsymbol{\kappa}=[\kappa_x,\kappa_y,\kappa_z]^{T}$ with $\kappa_z=\gamma(\kappa_x,\kappa_y)=(\kappa^2-\kappa_x^2-\kappa_y^2)^{\frac{1}{2}}$ and the wave number $\kappa$ is given by $\kappa=\frac{2\pi}{\lambda}$ and $\boldsymbol{\kappa}/\|\boldsymbol{\kappa} \|$ represents the propagation direction of the transmit field. Similarly, $\boldsymbol{k}=[k_x,k_y,k_z]^{T}$ with $k_z=\gamma(k_x,k_y)=(\kappa^2-k_x^2-k_y^2)^{\frac{1}{2}}$ and $\boldsymbol{k}/\|\boldsymbol{k} \|$ represents the propagation direction of the receive field. The integration region $\mathcal{D}_{k}$ is given by the circular area with radius $\kappa$,
\begin{equation}
\label{region1}
\mathcal{D}_{k}=\left\{ (k_x,k_y)\in \mathbb{R}^{2}  | k_x^2+k_y^2\le \kappa^2   \right\}.
\end{equation}  %$h_{s}(\bold{r},\bold{s})$ behaves as a low-pass filter in the electromagnetic channel
This is because only the propagating waves are considered and the evanescent waves are neglected~\cite{pizzo2022spatial,pizzo2022fourier}. $H_{a}(k_x,k_y,\kappa_x,\kappa_y)$ represents the angular response from the transmit direction $\boldsymbol{\kappa}/\|\boldsymbol{\kappa}\|$ to the receive direction $\bold{k}/ \|\bold{k}\|$, which is determined by the scattering environment and the array geometry. In (\ref{contin_exp}), $[\BH_s]_{i,j}=h_s(\bold{r}_i,\bold{s}_{j})$ is obtained by the continuous Fourier spectral representation. When both the LoS and non-LoS components are considered, the Fourier coefficient ${H}(l_x,l_y,m_x,m_y)$ can be written as
\begin{equation}
\label{Ha_ele}
\begin{aligned}
H(l_x,l_y,m_x,m_y)&={A}(l_x,l_y,m_x,m_y)
%\\
%&
+{Y}(l_x,l_y,m_x,m_y),
\end{aligned}
\end{equation}
where ${A}(l_x,l_y,m_x,m_y)$ and ${Y}(l_x,l_y,m_x,m_y)$ denote the coefficients of the LoS and NLoS component, respectively. Under such circumstances, the channel matrix in spatial domain $\BH_s$ can be represented by~(\ref{hs_fourier}).

\section{Proof of Lemma~\ref{sum_mart}}
\label{proof_sum_mart}
The proof of Lemma~\ref{sum_mart} relies on a system of equations, which can be derived by resolvent computations. The proof idea can be summarized as:

1. We first evaluate each term at the left hand side of~(\ref{sum_pjj}), which consists of $\Theta_{j,i}$ and $\psi_{j,i}$. Specifically, we set up two groups of system of equations to solve $\Theta_{j,i}$ and $\psi_{j,i}$. The first group of equations is obtained by $\E\Ba^{H}_k (\E_{j}\BQ)\BD_{j}\BQ\Ba_{k} $ using two approaches and which are given in Appendix~\ref{sec_first_eq}.

2. The second group of equations is obtained by evaluating $\frac{1}{M}\E \Tr(\E_j\BQ)\BD_j\BQ\BD_i$ with rank-one perturbations, which is given in Appendix~\ref{sec_second_eq}.

3. By combining the two groups of equations in 1 and 2, we can obtain the closed-form approximation for $\Theta_{j,j}$ and $\psi_{j,j}$ based on the Crammer's rule so that we can compute $\sum_{j=1}^{M} (\frac{\rho^2\widetilde{t}_{j,j}^2 \psi_{j,j}}{M}+\frac{2\Theta_{j,j}}{M} )$. The details are given in Appendix~\ref{sec_eva_sum_mart}.

Before we proceed, we first introduce some useful results that will be used in the following derivation. The resolvent related results are given below
\begin{equation}
\label{ide_Q_QI}
\BQ=\BQ_{i}-\frac{\BQ_{i}\Bh_i\Bh_i^{H}\BQ_{i}}{1+\Bh_i^{H}\BQ_{i}\Bh_i}
\end{equation}
\begin{equation}
\label{h_Q_QI}
\bold{h}_i^{H}\BQ=\rho\widetilde{q}_{i,i}\BQ_{i}=\frac{\Bh_i^{H}\BQ_i}{1+\Bh_i^{H}\BQ_i\Bh_i},
\end{equation}
where $\widetilde{q}_{i,i}$ is the $i$-th diagonal entry of $\widetilde{\BQ}$ and can be computed by~(\ref{qii_exp}). Define
\begin{equation}
\BT_i=\left(  \boldsymbol{\psi}(z) -z\bold{A}_i  \widetilde{\boldsymbol{\psi}}_i(z)   \bold{A}_i^{H}     \right)^{-1},
\end{equation}
where $\widetilde{\boldsymbol{\psi}}_i(z)$ is obtained by removing the $i$-th row and the $i$-th column from $\widetilde{\boldsymbol{\psi}}(z)$ and $\BA_i$ is obtained by removing the $i$-th column from $\BA$. There holds true 
\begin{equation}
\label{rho_aTb}
\rho\widetilde{t}_{i,i}\Ba_i^{H}\bold{T}_i \bold{b}=\frac{\Ba_i^{H}\BT\bold{b}}{1+\delta_i},
\end{equation}
\begin{equation}
\label{t_jj_exp}
\widetilde{t}_{i,i}=\rho^{-1}\left( 1 +  \delta_i + \Ba_i^{H}\BT_i\Ba_i \right)^{-1},
\end{equation}
\begin{equation}
\label{t_jj_exp2}
\rho\widetilde{t}_{i,i} \Ba_i^{H}\BT_i\Ba_i =1-\rho(1+\delta_i)\widetilde{t}_{i,i}.
\end{equation}
(\ref{rho_aTb}) can be obtained by similar steps in~\cite[Appendix A]{hachem2012clt}.

\subsection{The First Group of Equations}
\label{sec_first_eq}
In this part, we will utilize two approaches to evaluate $\E\Ba^{H}_k (\E_{j}\BQ)\BD_{j}\BQ\Ba_{k} $ to set up the first group of equations by resolvent computations.
\subsubsection{First evaluation of $\E\Ba^{H}_k (\E_{j}\BQ)\BD_{j}\BQ\Ba_{k} $}
By applying~(\ref{ide_Q_QI}) for both $\BQ$ in $\E\Ba^{H}_k (\E_{j}\BQ)\BD_{j}\BQ\Ba_{k} $, $\E\Ba^{H}_k (\E_{j}\BQ)\BD_{j}\BQ\Ba_{k} $ can be rewritten as
\begin{equation}
\begin{aligned}
\E\Ba^{H}_k (\E_{j}\BQ)\BD_{j}\BQ\Ba_{k}&=\E\Ba^{H}_k (\E_{j}\BQ_k)\BD_{j}\BQ_k\Ba_{k}
%\\
%&
-\rho\E\Ba^{H}_k (\E_{j}\widetilde{q}_{k,k} \BQ_k \Bh_k\Bh_k^{H} \BQ_k)\BD_{j}\BQ_j\Ba_{k}
\\
&-\rho\E\widetilde{q}_{k,k}\Ba^{H}_k (\E_{j} \BQ_k )\BD_{j}\BQ_k\Bh_k\Bh_k^{H} \BQ_k\Ba_{k}
%\\
%&
+\rho^2\E\widetilde{q}_{k,k}^2\Ba^{H}_k (\E_{j} \BQ_k\Bh_k\Bh_k^{H} \BQ_k )\BD_{j}\BQ_k\Bh_k\Bh_k^{H} \BQ_k\Ba_{k}
\\
&=\theta_{j,k}+U_1+U_2+U_3,
\end{aligned}
\end{equation}
where $\theta_{j,k}=\E\Ba_{k}^{H}(\E_j \BQ_k)\BD_j\BQ_k\Ba_k$. We can safely replace $\widetilde{q}_{k,k}$ by $\widetilde{t}_{k,k}$ in $U_1$ based on the following fact
\begin{equation}
\label{aQhhQDQa}
\begin{aligned}
&|\rho\E\Ba^{H}_k (\E_{j}(\widetilde{q}_{k,k}-\widetilde{t}_{k,k}) \BQ_k \Bh_k\Bh_k^{H} \BQ_k)\BD_{j}\BQ_k\Ba_{k}|
%\\
%&
\overset{(a)}{\le} \E \rho K \E^{\frac{1}{2}}_{j} (\widetilde{q}_{k,k}-\widetilde{t}_{k,k})^2 \E^{\frac{1}{2}}_j (\Bh_k^{H}\Bh_k)^2
\\  
%&\le \rho K \E^{\frac{1}{2}} (\widetilde{q}_{k,k}-\widetilde{t}_{k,k})^2 \E^{\frac{1}{2}} [|\Ba^{H}_k  \BQ_k \Bh_k|^2\Bh_k^{H}\Bh_k]  
&\le \rho K' \E^{\frac{1}{2}} (\widetilde{q}_{k,k}-\widetilde{t}_{k,k})^2 \E^{\frac{1}{2}} (\Bh_k^{H}\Bh_k)^2\overset{(b)}{=}o(1),
%\\
%&
%\le 4 \rho K \E^{\frac{1}{2}} (\widetilde{q}_{k,k}-\widetilde{t}_{k,k})^2 \E^{\frac{1}{2}} (\Ba_k^{H}\Ba_k+\By_k^{H}\By_k)^2
%\overset{(b)}{=}o(1),
\end{aligned}
\end{equation}
where step $(a)$ in~(\ref{aQhhQDQa}) follows from the Cauchy-Schwarz inequality and 
\begin{equation}
\begin{aligned}
& \Bh_k^{H} \BQ_k\BD_{j}\BQ_k\Ba_{k}\Ba_{k}^{H} \BQ_k\BD_{j}\BQ_k\Bh_k
%\\
%&
 \le \Bh_k^{H}\Bh_k \| \BQ_k\BD_{j}\BQ_k\Ba_{k}\Ba_{k}^{H} \BQ_k\BD_{j}\BQ_k \|
  \le \rho^{-4}\sigma^4_{max}a_{max}^2\Bh_k^{H}\Bh_k.
\end{aligned}
 \end{equation}
 Step $(b)$ in~(\ref{aQhhQDQa}) follows from the approximation of $\widetilde{t}_{k,k}$ in~(\ref{qjj_approx}) of Lemma~\ref{appro_diag_ele}. Therefore, we have
\begin{equation}
\label{U_1_approx}
\begin{aligned}
U_1&=-\rho\widetilde{t}_{k,k}\E\Ba^{H}_k (\E_{j} \BQ_k \Bh_k\Bh_k^{H} \BQ_k)\BD_{j}\BQ_j\Ba_{k}
%\\
%&
\overset{(a)}{=}-\rho\widetilde{t}_{k,k}\Ba^{H}_k\BT_k \Ba_k\Ba_k^{H}(\E \BQ_k)\BD_{j}\BQ_k\Ba_{k}+o(1)
\\
&
=-\rho\widetilde{t}_{k,k}\Ba^{H}_k\BT_k \Ba_k\theta_{k,j}+o(1),
\end{aligned}
\end{equation}
where step $(a)$ in~(\ref{U_1_approx}) follows from $\E|  \Ba_k\BQ_k\Ba_k- \Ba_k\BT_k\Ba_k |^2=o(1)$, which is obtained by~(\ref{uqvcon}) in Lemma~\ref{appro_diag_ele}. Similarly, we can replace $\widetilde{q}_{k,k}$ by $\widetilde{t}_{k,k}$ in $U_2$ and $U_3$ to obtain
\begin{equation}
\begin{aligned}
U_2&=-\rho\E\widetilde{q}_{k,k}\Ba^{H}_k (\E_{j} \BQ_k )\BD_{j}\BQ_k\Bh_k\Bh_k^{H} \BQ_k\Ba_{k}
\\
&
=-\rho\widetilde{t}_{k,k}\Ba^{H}_k\BT_k \Ba_k\E\Ba_k^{H}(\E \BQ_k)\BD_{j}\BQ_k\Ba_{k}+o(1)
\\
&
=-\rho\widetilde{t}_{k,k}\Ba^{H}_k\BT_k \Ba_k\theta_{j,k}+o(1)
\end{aligned}
\end{equation}
and
\begin{equation}
\begin{aligned}
U_3&=\rho^2\widetilde{t}_{k,k}^2(\Ba^{H}_k \BT_k\Ba_k)^2 (\frac{\E \Tr  (\E_j \BQ_k) \BD_{j}\BQ_k\BD_{k}}{M}
%\\
%&
+\E \Ba_k^{H}(\E_j \BQ_k) \BD_{j}\BQ_k\Ba_k ),~k \le j,
\end{aligned}
\end{equation}
Therefore, by $1-\rho\widetilde{t}_{k,k}\Ba_k\BT_k\Ba_k=\rho\widetilde{t}_{k,k}(1+\delta_k)$, we have 
\begin{equation}
\label{eva_aqdq_1}
\begin{aligned}
&\E\Ba^{H}_k (\E_{j}\BQ)\BD_{j}\BQ\Ba_{k}=(1+\delta_k)^2\rho^2\widetilde{t}_{k,k}^2 \theta_{j,k} 
%\\
%&
+\frac{(\Ba_k^{H}\BT\Ba_k)^2}{(1+\delta_k)^2} \psi_{j,k}+o(1),~k\le j.
\end{aligned}
\end{equation}
By far, we have obtained the first evaluation for $\E\Ba^{H}_k (\E_{j}\BQ)\BD_{j}\BQ\Ba_{k}$.

\subsubsection{Second evaluation of $\E\Ba^{H}_k (\E_{j}\BQ)\BD_{j}\BQ\Ba_{k} $}
In this part, we will $\E\Ba^{H}_k (\E_{j}\BQ)\BD_{j}\BQ\Ba_{k}$ by another approach. First, by the identity $\BA-\BB=\BB(\BB^{-1}-\BA^{-1}) \BA$, we can obtain
\begin{equation}
\label{aqdqak}
\begin{aligned}
\E\Ba^{H}_k (\E_{j}\BQ)\BD_{j}\BQ\Ba_{k} 
&=\E\Ba^{H}_k \BT\BD_{j}\BQ\Ba_{k}
%\\
%&
+\rho\E\Ba^{H}_k  \BT\bold{\Upsilon}(\E_{j} \BQ) \BD_{j}\BQ\Ba_{k}
%\\
%&
+\E\Ba^{H}_k\BT \BA \widetilde{\bold{\Psi}}\BA^{H} (\E_{j}\BQ) \BD_{j}\BQ\Ba_{k}
\\
&
-\E\Ba^{H}_k \BT (\E_{j}\BH\BH^{H} \BQ)\BD_{j}\BQ\Ba_{k} 
\\
&=W_1+W_2+W_3+W_4,
\end{aligned}
\end{equation}
where $\bold{\Upsilon}=\diag(\widetilde{\delta}_{1},\widetilde{\delta}_{2},...,\widetilde{\delta}_{N})$. By applying~(\ref{ide_Q_QI}) to $(\E_j \BQ)$, $W_3$ can be rewritten as 
\begin{equation}
\begin{aligned}
W_3&=\sum_{i=1}^{M} \frac{ \E\Ba^{H}_k \BT\Ba_{i}\Ba_{i}^{H} (\E_{j}\BQ_{i})\BD_{j}\BQ\Ba_{k}}{1+\delta_{i}} 
%\\
%&
- \frac{\rho  \Ba^{H}_k \BT\Ba_{i}\E \Ba_{i}^{H}   (\E_{j}\widetilde{q}_{i,i}\BQ_{i}\Bh_{i}\Bh_{i}^{H} \BQ_{i})\BD_{j}\BQ\Ba_{k} }{ 1+\delta_{i} }
\\
&=\sum_{i=1}^{M} \frac{\Ba^{H}_k \BT\Ba_{i}\E \Ba_{i}^{H}(\E_{j}\BQ_{i})\BD_{j}\BQ\Ba_{k}}{1+\delta_{i}} 
%\\
%&
- \frac{\rho\widetilde{t}_{i,i}    \Ba^{H}_k \BT  \Ba_{i} \E\Ba_{i}^{H} (\E_{j}\BQ_{i}\Bh_{i}\Bh_{i}^{H} \BQ_{i})\BD_{j}\BQ\Ba_{k}}{1+\delta_{i}}+\varepsilon_{3,1}
\end{aligned}
\end{equation}
where
\begin{equation}
\label{eps_31}
\begin{aligned}
&\varepsilon_{3,1}=\E\sum_{i=1}^{M}\frac{ \rho  \Ba_{i}   (\E_{j}(\widetilde{t}_{i,i}-\widetilde{q}_{i,i})\BQ_{i}\Bh_{i}\Bh_{i}^{H} \BQ_{i})\BD_{j}\BQ\Ba_{k}}{1+\delta_{i}}
 \Ba^{H}_k \BT\Ba_{i} 
= \rho  \Ba^{H}_k \BT\BA\bold{\Upsilon}\bold{\Lambda}_{t,q}\BH^{H}\BQ\BD_j\BQ\Ba_{k},
\end{aligned}
\end{equation}
and $\bold{\Lambda}_{t,q}=\diag((\widetilde{t}_{i,i}-\widetilde{q}_{i,i})(1+\Bh_i^{H}\BQ_i\Bh_i)\BQ_i\Bh_i )$, $i=1,2,...,M$. By~(\ref{qjj_approx}) in Lemma~\ref{appro_diag_ele} and the definition of the matrix norm $ \|\BA\bold{a}\|_2 \le \| \BA\| \| \bold{a}\|_{2}$, we can obtain 
\begin{equation}
\begin{aligned}
\label{eps_31}
|\varepsilon_{3,1}|&\le \rho\E (\E_{j}  \|   \Ba^{H}_k \BT\BA\bold{\Upsilon}\bold{\Lambda}_{t,q} \BH^{H}\BQ\| )\| \BD_j\BQ\Ba_{k} \|_{2}
\\
&
\overset{(a)}{\le} L \E  \|   \Ba^{H}_k \BT\BA\bold{\Upsilon}\bold{\Lambda}_{t,q} \BH^{H}\BQ\| 
\\
&
\overset{(b)}{\le}  L  \left(\sum_{l=1}^{M}|\Ba^{H}_k \BT\Ba_{l}|^{2} \E(\widetilde{t}_{i,i}-\widetilde{q}_{i,i})^2   \right)^{\frac{1}{2}}
\\
&\overset{(c)}{\le} L  (\Ba^{H}_k \BT\BA\BA^{H}\BT\Ba_k  )^{\frac{1}{2}} \times o(1),
\end{aligned}
\end{equation}
where step $(a)$ follows from the evaluation $\| \BD_j\BQ\Ba_{k} \|_{2}\le \| \BD_j\BQ \|  \|\Ba_{k}\|_{2} \le \frac{\sigma_{max}^2 a_{max}}{\rho}:=L$ and $\|\BH^{H}\BQ \| <1$. Steps $(b)$ and $(c)$ follow from the Cauchy-Schwarz inequality and~(\ref{qjj_approx}) in Lemma~\ref{appro_diag_ele}. $X_3$ can be further rewritten as
\begin{equation}
\begin{aligned}
W_3&=\sum_{i=1}^{M} \frac{\E  \Ba^{H}_k \BT\Ba_{i} \Ba_{i}(\E_{j}\BQ_{i})\BD_{j}\BQ\Ba_{k} }{1+\delta_{i}}
%\\
%&
- \frac{\E \rho \widetilde{t}_{i,i}   \Ba^{H}_k \BT  \Ba_{i} \Ba_{i} (\E_{j}\BQ_{i}\Ba_{i}\Bh_{i}^{H} \BQ_{i})\BD_{j}\BQ\Ba_{k}}{1+\delta_{i}}
%\\
%&
+\varepsilon_{3,2}+o(1)
\\
&
=\sum_{i=1}^{M} \frac{ \Ba^{H}_k \BT\Ba_{i} \E\Ba_{i}^{H}(\E_{j}\BQ_{i})\BD_{j}\BQ\Ba_{k} }{1+\delta_{i}}
%\\
%&
- \frac{\E\rho   \widetilde{t}_{i,i}\Ba^{H}_k \BT \Ba_{i} \Ba_{i} \BT_{i}\Ba_{i}(\E_{j}\Bh_{i}^{H} \BQ_{i})\BD_{j}\BQ\Ba_{k} }{1+\delta_{i}}
+\varepsilon_{3,2}
+\varepsilon_{3,3}+o(1)
\\
&
=W_{3,1}+W_{3,2}+\varepsilon_{3,2}+\varepsilon_{3,3}+o(1),
\end{aligned}
\end{equation}
where 
\begin{equation}
\begin{aligned}
\varepsilon_{3,3} &=-\sum_{i=1}^{M}\rho(1+\delta_{i})^{-1} \widetilde{t}_{i,i}  
%\\
%&
\E\Ba^{H}_k \BT  \Ba_{i} \Ba_{i}^{H} (\E_{j}\BQ_{i}\By_{i}\Bh_{i}^{H} \BQ_{i})\BD_{j}\BQ\Ba_{k}
\\
\varepsilon_{3,2}&=-\sum_{i=1}^{M}\rho(1+\delta_{i})^{-1}   \widetilde{t}_{i,i}\Ba^{H}_k \BT \Ba_{i} 
%\\
%&
\Ba_{i}^{H}(\E_{j}( \BT_{i}-\BQ_i)\Ba_{i}\Bh_{i}^{H} \BQ_{i})\BD_{j}\BQ\Ba_{k},
\end{aligned}
\end{equation}
which can be shown to be $o(1)$ by a similar approach for $\varepsilon_{3,1}$ shown in~(\ref{eps_31}). Now we turn to evaluate $W_{3,2}$ as
\begin{equation}
\begin{aligned}
W_{3,2} & \overset{(a)}{=}-\sum_{i=1}^{M}  \frac{\rho   \widetilde{t}_{i,i}  \Ba^{H}_k \BT \Ba_{i} \Ba_{i}^{H} \BT_{i}\Ba_{i}\E\Ba_{i}^{H}(\E_{j} \BQ_{i})\BD_{j}\BQ\Ba_{k} }{1+\delta_{i}}
-\sum_{i=1}^{j} \frac{\rho  \widetilde{t}_{i,i}  \Ba^{H}_k \BT \Ba_{i} \Ba_{i}^{H} \BT_{i}\Ba_{i}\E\By_{i}^{H}(\E_{j} \BQ_{i})\BD_{j}\BQ\Ba_{k} }{1+\delta_{i}}
\\
&=W_{3,2,1}+W_{3,2,2}
%\\
%&=X_{3,2,1}+\sum_{i=1}^{j} \rho(1+\delta_{i})^{-2}  \widetilde{t}_{i,i}  \Ba^{H}_k \BT \Ba_{i} \Ba_{i} \BT_{i}\Ba_{i}\BT\Ba_{k}\kappa_{j,i} 
\end{aligned}
\end{equation}
where step $(a)$ follows from the fact that for $i=j+1,...,M$,
\begin{equation}
\widetilde{t}_{i,i}  \Ba^{H}_k \BT \Ba_{i} \Ba_{i} \BT_{i}\Ba_{i}(\E_{j}\By_{i}^{H} \BQ_{i})\BD_{j}\BQ\Ba_{k}=0.
\end{equation}
By applying~(\ref{ide_Q_QI}) to $\BQ$, we have
\begin{equation}
\label{W_322}
\begin{aligned}
W_{3,2,2} &=-\sum_{i=1}^{j} \E[\frac{\rho   \widetilde{t}_{i,i}  \Ba^{H}_k \BT \Ba_{i} \Ba_{i} \BT_{i}\Ba_{i} }{1+\delta_{i}} \By_{i}^{H}(\E_{j} \BQ_{i})\BD_{j}(\BQ_i- \rho\widetilde{q}_{i,i}\BQ_i\Bh_i \Bh_i^{H}\BQ_i)\Ba_{k}]
\\
&=\sum_{i=1}^{j}\frac{  \rho^2 \widetilde{t}_{i,i}^2  \Ba^{H}_k \BT \Ba_{i} \Ba_{i}^{H} \BT_{i}\Ba_{i}\Ba_{i}^{H} \BT_{i}\Ba_{k}   }{ (1+\delta_{i})} \psi_{j,i}+o(1)
\\
&\overset{(a)}{=}\sum_{i=1}^{j} \frac{\rho^2   \widetilde{t}_{i,i}^2  \Ba^{H}_k \BT_i \Ba_{i} \Ba_{i}^{H} \BT_{i}\Ba_{i}\Ba_{i}^{H} \BT\Ba_{k}}{(1+\delta_{i})^2}\psi_{j,i} +o(1),
\end{aligned}
\end{equation}
where step $(a)$ in~(\ref{W_322}) follows from the adaptation of~\cite[Lemma 5.2]{hachem2012clt}. By~(\ref{h_Q_QI}) and $(\E_{j}\Bh_{i}\Bh_{i}^{H} \BQ_{i})=(\E_{j}\Ba_{i}\Ba_{i}^{H} \BQ_{i} + \frac{1}{M}\BD_j\BQ_i  ),~j< i \le M$, we have the evaluation for $W_4$ as
\begin{equation}
\label{w4_eva}
\begin{aligned}
W_4&=-\sum_{i=1}^{M}\E \rho  \Ba^{H}_k \BT (\E_{j}\widetilde{q}_{i,i}\Bh_{i}\Bh_{i}^{H} \BQ_{i})\BD_{j}\BQ\Ba_{k} 
%\\
%&
=-\sum_{i=1}^{M}\E \rho \widetilde{t}_{i,i} \Ba^{H}_k \BT (\E_{j}\Bh_{i}\Bh_{i}^{H} \BQ_{i})\BD_{j}\BQ\Ba_{k}+o(1)
\\
&=-\sum_{i=1}^{M}\E \rho \widetilde{t}_{i,i} \Ba^{H}_k \BT \Ba_{i}\Ba_{i}^{H} (\E_{j}\BQ_{i})\BD_{j}\BQ\Ba_{k}
%\\
%&
-(\sum_{i=1}^{j} \rho \widetilde{t}_{i,i}\E  \Ba^{H}_k  \BT\By_i\By_i^{H} (\E_{j}\BQ_i)\BD_{j}\BQ\Ba_{k}
\\
&
+\sum_{i=j+1}^{M}\frac{ \rho \widetilde{t}_{i,i}\E  \Ba^{H}_k  \BT \BD_{i} (\E_{j}\BQ_i)\BD_{j}\BQ\Ba_{k}}{M})
%\\
%&
-\sum_{i=1}^{j} \rho \widetilde{t}_{i,i}\E  \Ba^{H}_k  \BT\By_i\Ba_i^{H} (\E_{j}\BQ_i)\BD_{j}\BQ\Ba_{k}
\\
&
-\sum_{i=1}^{j} \rho \widetilde{t}_{i,i}\E  \Ba^{H}_k  \BT\Ba_i\By_i^{H} (\E_{j}\BQ_i)\BD_{j}\BQ\Ba_{k}
+o(1)
\\
&=W_{4,1}+W_{4,2}+W_{4,3}+W_{4,4}+o(1).
\end{aligned}
\end{equation}
By~(\ref{h_Q_QI}), we have
\begin{equation}
\label{W42_eva}
\begin{aligned}
W_{4,2} &=-\sum_{i=1}^{j} \rho \widetilde{t}_{i,i}\E  \Ba^{H}_k  \BT\By_i\By_i^{H} (\E_{j}\BQ_i)\BD_{j}\BQ_i\Ba_{k}
%\\
%&
+\sum_{i=1}^{j} \rho^2 \widetilde{t}_{i,i}^2\E  \Ba^{H}_k  \BT\By_i\By_i^{H} (\E_{j}\BQ_i)\BD_{j}\BQ_i \Bh_i\Bh_i^{H} \BQ_i\Ba_{k} 
\\
&
-\sum_{i=j+1}^{M}\frac{ \rho \widetilde{t}_{i,i}\E  \Ba^{H}_k  \BT \BD_{i} (\E_{j}\BQ_i)\BD_{j}\BQ\Ba_{k}}{M}
+o(1)
\\
&
\overset{(a)}{=}\sum_{i=1}^{j} \frac{\rho^2 \widetilde{t}_{i,i}^2\E  \Ba^{H}_k  \BT\BD_i  \BT\Ba_{k}  \Tr\BD_i(\E_{j}\BQ_i)\BD_{j}\BQ_i  }{M^2}
%\\
%&
-\sum_{i=1}^{M}\frac{ \rho \E  \Ba^{H}_k  \BT{\bold{\Upsilon}} (\E_{j}\BQ_i)\BD_{j}\BQ\Ba_{k}}{M}
+\varepsilon_{4,2}+o(1)
\\
&
=\underbrace{\sum_{i=1}^{j}  \frac{\rho^2 \widetilde{t}_{i,i}^2  \Ba^{H}_k  \BT\BD_i  \BT\Ba_{k} }{M}\psi_{j,i}}_{W_{4,2,1}}+W_{4,2,2}
+\varepsilon_{4,2}+o(1)
,
\end{aligned}
\end{equation}
where step $(a)$ in~(\ref{W42_eva}) is obtained by $\frac{1}{M}\sum_{i}^{M}\widetilde{t}_{i,i}[\BD_i]_{k,k}=\frac{1}{M}\Tr\widetilde{\BD}_k\widetilde{\BT} $. We can observed $W_{4,2,2}=-\rho\Ba^{H}_k  \BT\bold{\Upsilon}(\E_{j} \BQ) \BD_{j}\BQ\Ba_{k}=-W_{2}$. Here, $\varepsilon_{4,2}$ can be bounded by the Cauchy-Schwarz inequality as
\begin{equation}
\label{ep_42}
\begin{aligned}
|\varepsilon_{4,2}| &=|\sum_{i=1}^{j}\frac{ \rho \widetilde{t}_{i,i}}{M}\E  \Ba^{H}_k  \BT\BD_i (\E_{j}\BQ_i)\BD_{j}(\BQ-\BQ_i)\Ba_{k}|
%\\
%&
\le \frac{K}{M}\sum_{i=1}^{j} \E^{\frac{1}{2}} (1+\Bh_i\BQ_i\Bh_i)^2  \E^{\frac{1}{2}} | \Ba_i^{H}\BQ\Bh_i |^2
\\
&
\overset{(a)}{\le} \frac{K' \sqrt{j}}{M}    \E^{\frac{1}{2}} \Ba_i^{H}\BQ\BH_{[1:j]}\BH_{[1:j]}^{H}\BQ\Ba_i=\BO(M^{-\frac{1}{2}}),
\end{aligned}
\end{equation}
with $\BH_{[1:j]}$ being the matrix consisting of the first $j$ columns of $\BH$. By similar analysis of~(\ref{ep_42}), we can evaluate $W_{4,3}$ as
\begin{equation}
\begin{aligned}
W_{4,3}&=\sum_{i=1}^{j} \frac{\rho^2 \widetilde{t}_{i,i}^2\Ba^{H}_k  \BT \BD_i \BT\Ba_{k} }{M}
%\\
%&
 \E\Ba_i^{H} (\E_{j}\BQ_i)\BD_{j}\BQ_i\Ba_i  +o(1)
\\
&
=\sum_{i=1}^{j}  \frac{\rho^2 \widetilde{t}_{i,i}^2\Ba^{H}_k  \BT \BD_i \BT\Ba_{k} }{M}\theta_{j,i} +o(1).
\end{aligned}
\end{equation}
By~(\ref{ide_Q_QI}), $W_{4,4}$ can be evaluated as
\begin{equation}
\begin{aligned}
W_{4,4}&=\sum_{i=1}^{j} \frac{\rho \widetilde{t}_{i,i} \Ba^{H}_k  \BT\Ba_i  \Ba^{H}_i  \BT_i\Ba_k  }{M}\E  \Tr(\E_{j}\BQ_i)\BD_{j}\BQ_i\BD_i
+o(1)
\\
&
=\sum_{i=1}^{j} \frac{\rho \widetilde{t}_{i,i} \Ba^{H}_k  \BT\Ba_i  \Ba^{H}_i  \BT_i\Ba_k  }{(1+\delta_i)}\psi_{j,i} +o(1).
\end{aligned}
\end{equation}
By~(\ref{t_jj_exp2}), we can obtain
\begin{equation}
\begin{aligned}
W_{3,2,1}+W_{4,1}
&=-\sum_{i=1}^{M}\E \rho \widetilde{t}_{i,i}(\frac{\Ba_{i}^{H} \BT_{i}\Ba_{i}}{1+\delta_{i}}+1) \Ba^{H}_k \BT \Ba_{i}\Ba_{i}^{H} (\E_{j}\BQ_{i})\BD_{j}\BQ\Ba_{k}
\\
&
=\sum_{i=1}^{M}\E \rho \widetilde{t}_{i,i}(1+\delta_{i})^{-1} \Ba^{H}_k \BT \Ba_{i}\Ba_{i}^{H} (\E_{j}\BQ_{i})\BD_{j}\BQ\Ba_{k}
\\
&
=-W_{3,1}.
\end{aligned}
\end{equation}
According to the evaluation from~(\ref{aqdqak}) to~(\ref{w4_eva}), $\E\Ba^{H}_k (\E_{j}\BQ)\BD_{j}\BQ\Ba_{k} $ can be represented by
\begin{equation}
\label{eva_aqdq_2}
\begin{aligned}
\E\Ba^{H}_k (\E_{j}\BQ)\BD_{j}\BQ\Ba_{k} & =W_{4,3}+(W_{4,2,1}
+W_{3,2,2}+W_{1}
%\\
%&
+W_{4,4})+o(1)
\\
& =\sum_{i=1}^{j}\frac{\rho^2 \widetilde{t}_{i,i}^2  \Ba^{H}_k \BT\BD_{i}\BT\Ba_{k}}{M}\theta_{j,i}
+\sum_{i=1}^{j} [\frac{\rho^2 \widetilde{t}_{i,i}^2 \Ba^{H}_k \BT\BD_{i}\BT\Ba_{k}}{M}   +   \frac{ \Ba^{H}_k \BT_i \Ba_{i} \Ba_{i}^{H} \BT_{i}\Ba_{k}}{(1+\delta_{i})^{2} }  ] \psi_{j,i}
+\Ba^{H}_k \BT\BD_{j}\BT\Ba_{k}
+o(1).
\end{aligned}
\end{equation}
%\begin{equation}
%\begin{aligned}
%&\E\Ba^{H}_k (\E_{j}\BQ)\BD_{j}\BQ\Ba_{k}=(1+\delta_k)^2\rho^2\widetilde{t}_{k,k}^2 \theta_{j,k}
%\\
%&
% +(1+\delta_k)^{-2}(\Ba_k^{H}\BT\Ba_k)^2 \psi_{j,k}+o(1),~k\le j.
%\end{aligned}
%\end{equation}
By the evaluation for $\E\Ba^{H}_k (\E_{j}\BQ)\BD_{j}\BQ\Ba_{k}$ in~(\ref{eva_aqdq_1}) and~(\ref{eva_aqdq_2}), we can set up the first group equations
\begin{equation}
\begin{aligned}
\label{eq_first}
 &  \sum_{i=1}^{j} (-\rho^2 \widetilde{t}_{i,i}^2\Pi_{i,k}  -\mathbbm{1}_{k\neq i}\Xi_{i,k}  )   \psi_{j,i}
   %\\
   %&
    +  \sum_{i=1}^{j}( \mathbbm{1}_{k=i} -\Pi_{i,k})\rho^2 \widetilde{t}_{i,i}^2\theta_{j,i} 
   =M\Pi_{j,k}.
\end{aligned}
\end{equation}
Next, we will set up the second group of equations by evaluating $\frac{1}{M}\E \Tr(\E_j \BQ) \BD_{j}\BQ\BD_i$.

\subsection{The Second Group of Equations}
\label{sec_second_eq}
%\subsubsection{The evaluation of $\frac{1}{M}\E \Tr(\E_j \BQ) \BD_{j}\BQ\BD_i$}
To set up the second group of equations, we first evaluate $\frac{1}{M}\E \Tr(\E_j \BQ) \BD_{j}\BQ\BD_i$ by resolvent computations, which can be approximated by a linear combination of $\Theta_{j,k}$ and $\psi_{j,k}$. Then we approximate $\frac{1}{M}\E \Tr(\E_j \BQ) \BD_{j}\BQ\BD_i$ by $\psi_{j,i}$ to obtain the system of equations.

By applying the identity $\BA-\BB=\BB(\BB^{-1}-\BA^{-1}) \BA$, we can obtain
\begin{equation}
\begin{aligned}
\label{psi_first}
%\psi_{j,i}
\frac{1}{M}\E \Tr(\E_j \BQ) \BD_{j}\BQ\BD_i &=
%\\
%&
\frac{\E\Tr\BT\BD_j\BQ\BD_i}{M}+ \frac{\rho \E\Tr \BT\bold{\Upsilon}(\E_{j} \BQ) \BD_{j}\BQ\BD_i}{M}
\\
&+\frac{\E\Tr\BT \BA \widetilde{\bold{\Psi}}\BA^{H} (\E_{j}\BQ) \BD_{j}\BQ\BD_i}{M}
%\\
%&
-\frac{\E\Tr \BT (\E_{j}\BH \BH^{H}\BQ) \BD_{j}\BQ\BD_i}{M}
\\
&=X_{1}+X_{2}+X_{3}+X_{4}.
\end{aligned}
\end{equation}
First we can obtain $X_{1}=M\Gamma_{i,j}+o(1)$ from~(\ref{utraceq}) in Lemma~\ref{appro_diag_ele}. Then we will evaluate $X_{3}$. By~(\ref{ide_Q_QI}) and~(\ref{qjj_approx}) in Lemma~\ref{appro_diag_ele}, we can obtain
\begin{equation}
\begin{aligned}
X_{3} & = \sum_{i=1}^{M}\frac{\E\Ba_k^{H}(\E_{j}\BQ) \BD_{j}\BQ\BD_i\BT\Ba_k}{M(1+\delta_k)}
\\
&
=\E \sum_{i=1}^{M}\frac{\Ba_k^{H}(\E_{j}\BQ_i) \BD_{j}\BQ\BD_i\BT\Ba_k}{M(1+\delta_k)}
- \sum_{i=1}^{M}\frac{\rho\widetilde{t}_{k,k}\Ba_k^{H}(\E_{j} \BQ_k\Bh_k\Bh_k^{H} \BQ_k) \BD_{j}\BQ\BD_i\BT\Ba_k}{M(1+\delta_k)} +o(1)
\\
&=%M^{-1}\sum_{i=1}^{M}(1+\delta_i)^{-1}\Ba_i^{H}(\E_{j}\BQ_i) \BD_{j}\BQ\BD_i\BT\Ba_i
X_{3,0}
+X_{3,1}+X_{3,2}+X_{3,3}+X_{3,4}+o(1),
\end{aligned}
\end{equation}
where $X_{3,1}$, $X_{3,2}$, $X_{3,3}$, and $X_{3,4}$ are obtained by the expansion of $\Bh_i\Bh_i^{H}=\Ba_i\Ba_i^{H}+\By_i\Ba_i^{H}+\Ba_i\By_i^{H}+\By_i\By_i^{H}$. By applying~(\ref{ide_Q_QI}) to $\BQ$, $X_{3,1}$ can be evaluated by
\begin{equation}
\begin{aligned}
X_{3,1} &= \sum_{i=1}^{M}\frac{-\E\rho\widetilde{t}_{k,k}\Ba_k^{H}(\E_{j} \BQ_k\Ba_k\Ba_k^{H} \BQ_k) \BD_{j}\BQ\BD_i\BT\Ba_k}{M(1+\delta_k)}
\\
&=-  \sum_{i=1}^{M}\!\frac{\rho\widetilde{t}_{k,k}\Ba_k^{H} \BT_k\Ba_i \E \Ba_k^{H} (\E_{j}\BQ_k) \BD_{j}\BQ\BD_i\BT\Ba_k}{M(1+\delta_k)}\!+\! o(1).
\end{aligned}
\end{equation}
By the Cauchy-Schwarz inequality, $\| \BQ\| \le \rho^{-1}$, and $\|\bold{a}_{k} \|_2<a_{max}$, we can show $X_{3,2}$ is $o(1)$ with
\begin{equation}
\label{X_32_eva}
\begin{aligned}
|X_{3,2}| & =  | \sum_{i=1}^{M}\frac{\rho\widetilde{t}_{k,k}\E \Ba_k^{H}(\E_{j} \BQ_k\By_k\Ba_k^{H} \BQ_k) \BD_{j}\BQ\BD_i\BT\Ba_k|}{M(1+\delta_k)}
\\
&\le  |\E  \sum_{i=1}^{M}\frac{K \widetilde{t}_{k,k}\E_{j}^{\frac{1}{2}}( \By_k^{H}\BQ_k\Ba_k\Ba_k^{H}\BQ_k\By_k) | }{M(1+\delta_k)}
=\BO(M^{-1}),
\end{aligned}
\end{equation}
where $K$ is a constant independent of $M$ and $N$. By~(\ref{ide_Q_QI}) and the definition of $\E_j$, $X_{3,3}$ can be evaluated by
\begin{equation}
\label{eva_X33}
\begin{aligned}
X_{3,3} &=-\E  \sum_{k=1}^{M}\frac{\rho\widetilde{t}_{k,k}\Ba_k^{H}(\E_{j} \BQ_k\Ba_k\By_k^{H} \BQ_k) \BD_{j}\BQ\BD_i\BT\Ba_k}{M(1+\delta_k)}
\\
&
=-\E \sum_{k=1}^{j}\frac{\rho\widetilde{t}_{k,k}\Ba_k^{H} \BT_k\Ba_k(\E_{j}\By_k^{H} \BQ_k) \BD_{j}\BQ_i\BD_i\BT\Ba_k}{M(1+\delta_k)}+
%\\
%&
  \sum_{k=1}^{j}\frac{\E\rho^2\widetilde{t}_{k,k}^2\Ba_k^{H} \BT_k\Ba_k(\E_{j}\By_k^{H} \BQ_k) \BD_{j}\BQ_k \Bh_k  \Bh_k^{H}\BQ_k \BD_i\BT\Ba_k}{M(1+\delta_k)}
 %\\
%&
%\end{aligned}
%\end{equation}
%\begin{equation}
%\nonumber
%\begin{aligned}
 + o(1) 
 \\
 &
 =  \sum_{k=1}^{j}\frac{\rho^2\widetilde{t}_{k,k}^2\Ba_k^{H} \BT_k\Ba_k \Ba_k^{H}\BT_k \BD_i\BT\Ba_k \psi_{j,k}}{M(1+\delta_k)}
 + o(1)
 \\
 &
 \overset{(a)}{=}\sum_{i=1}^{j}\frac{\Ba_k^{H} \BT\Ba_k \Ba_k^{H}\BT \BD_i\BT\Ba_k \psi_{j,k}}{M(1+\delta_k)^{3}}
 + o(1),
 \end{aligned}
\end{equation}
where step $(a)$ in~(\ref{eva_X33}) follows by~(\ref{rho_aTb}). By similar steps in~(\ref{X_32_eva}), we can also show $X_{3,4}=o(1)$ as follows
 \begin{equation}
\begin{aligned}
| X_{3,4}|&=|\E \sum_{k=1}^{M}\frac{\widetilde{t}_{k,k}\Ba_i^{H}(\E_{j} \BQ_k\By_k\By_k^{H} \BQ_k) \BD_{j}\BQ\BD_i\BT\Ba_k}{M(1+\delta_k)}|
\\
&
 \le \E  K M^{-1} \E_{j}^{\frac{1}{2}}  (|\Ba_k^{H}\BQ_k\By_k|^{2}  \By_k^{H}\By_k )
 \\
 &
 \le K' M^{-1} \sum_{i=1}^{M} ( \E (\By_k^{H}\By_k )^2  )^{\frac{1}{2}}=\BO(M^{-\frac{1}{2}}).
\end{aligned}
\end{equation}
Next, we turn to evaluate $X_4$. By using~(\ref{h_Q_QI}), $X_4$ can be evaluated by
\begin{equation}
\begin{aligned}
X_4 &=-\sum_{k=1}^{M}\frac{ \Tr \BT (\E_{j}\Bh_k \Bh_k^{H}\BQ) \BD_{j}\BQ\BD_i}{M}
%\\
%&
=-\sum_{k=1}^{M} \frac{ \rho \widetilde{t}_{k,k}}{M} \E\Tr \BT (\E_{j}\Bh_k \Bh_k^{H}\BQ_k) \BD_{j}\BQ\BD_i+o(1)
\\
&
=(-\sum_{k=1}^{M}  \frac{ \rho \widetilde{t}_{k,k}}{M} \E\Tr \BT (\E_{j}\Ba_k \Ba_k^{H}\BQ) \BD_{j}\BQ\BD_i)
%\\
%&
+(-\sum_{k=1}^{M} \frac{ \rho \widetilde{t}_{k,k}}{M} \E\Tr \BT (\E_{j}\By_k \By_k^{H}\BQ_k) \BD_{j}\BQ\BD_i)
\\
&
+(-\sum_{k=1}^{M}  \frac{ \rho \widetilde{t}_{k,k}}{M}\E \Tr \BT (\E_{j}\Ba_k \By_k^{H}\BQ_k) \BD_{j}\BQ\BD_i)
%\\
%&
+(-\sum_{k=1}^{M} \frac{ \rho \widetilde{t}_{k,k}}{M}  \E\Tr \BT (\E_{j}\By_k \Ba_k^{H}\BQ_k) \BD_{j}\BQ\BD_i)+o(1)
\\
&
=X_{4,1}+X_{4,2}+X_{4,3}+X_{4,4}+o(1).
\end{aligned}
\end{equation}
 By~(\ref{t_jj_exp2}), we can obtain the evaluation for $X_{4,1}$, which can be cancelled by $X_{3,0}+X_{3,1}$ with
\begin{equation}
\label{eva_X41}
\begin{aligned}
X_{4,1} &=-\sum_{k=1}^{M}  \frac{\rho\widetilde{t}_{k,k}\E\Tr \BT (\E_{j}\Ba_k \Ba_k^{H}\BQ) \BD_{j}\BQ\BD_i}{M}
%\\
%&
=-\sum_{k=1}^{M}  \frac{\rho \widetilde{t}_{k,k}}{M}  \E\Ba_k^{H}(\E_{j}\BQ) \BD_{j}\BQ\BD_i\BT \Ba_k
\\
&
\overset{(a)}{=}-(X_{3,0}+X_{3,1})+o(1),
\end{aligned}
\end{equation}
where step $(a)$ in~(\ref{eva_X41}) is obtained by~(\ref{t_jj_exp2}). By~(\ref{ide_Q_QI}), we can obtain the evaluation for $X_{4,2}$ as
\begin{equation}
\label{X42_eva}
\begin{aligned}
X_{4,2} & =-\sum_{k=1}^{j} \frac{  \rho\widetilde{t}_{k,k} \E \By_k^{H}(\E_{j}\BQ_k) \BD_{j}\BQ\BD_i\BT \By_k}{M}
%\\
%&
-\sum_{k=j+1}^{M}\frac{\E  \rho \widetilde{t}_{k,k} \Tr(\E_{j}\BQ) \BD_{j}\BQ\BD_i\BT\BD_k}{M^2}
\\
&=-\sum_{k=1}^{j}\frac{\E  \rho \widetilde{t}_{k,k}  \By_k^{H}(\E_{j}\BQ_k) \BD_{j}\BQ_k\BD_i\BT \By_k}{M}
%\\
%&
+(\sum_{k=1}^{j}\frac{\E  \rho^2 \widetilde{t}_{k,k}^2  \By_k^{H}(\E_{j}\BQ_k) \BD_{j}\BQ_k\Bh_k\Bh_k^{H} \BQ_k\BD_i\BT \By_k}{M}
\\
&
-\sum_{k=j+1}^{M}\frac{\E  \rho\widetilde{t}_{k,k} \Tr(\E_{j}\BQ) \BD_{j}\BQ\BD_i\BT\BD_k}{M^2})+o(1)
\\
&=-\sum_{k=1}^{M}\frac{\E  \rho\widetilde{t}_{k,k} \Tr(\E_{j}\BQ_k) \BD_{j}\BQ_k\BD_i\BT \BD_k}{M^2}
%\\
%\end{aligned}
%%\end{equation}
%\begin{equation}
%\nonumber
%\begin{aligned}
%\\
%&
+\sum_{k=1}^{j}\frac{\rho^2 \widetilde{t}_{k,k}^2\E\Tr \BT\BD_i\BT\BD_k \E \Tr (\E_{j}\BQ_k) \BD_{j}\BQ_k\BD_k}{M^3}+o(1)
\\
&\overset{(a)}{=}-\sum_{k=1}^{M}  \frac{\rho\E \Tr\BT \bold{\Upsilon}(\E_{j}\BQ_k) \BD_{j}\BQ_k\BD_i}{M}
%\\
%&
+\sum_{k=1}^{j} \rho^2\widetilde{t}_{k,k}^2  \Gamma_{i,k}  \psi_{j,k} +o(1)
=X_{4,2,1}+X_{4,2,2}+o(1),
\end{aligned}
\end{equation}
where step $(a)$ in~(\ref{X42_eva}) follows by $\frac{1}{M}\sum_{k=1}^{M}\sigma^2_{i,k}\widetilde{t}_{k,k}=\frac{\Tr\widetilde{\BD}_i\widetilde{\BT}}{M}=\widetilde{\delta}_i$. By applying~(\ref{ide_Q_QI}) to $\BQ$, we have the following evaluate for $X_{4,3}$ and $X_{4,4}$.
\begin{equation}
\label{X_43}
\begin{aligned}
X_{4,3}&=
%\\
%&
\sum_{k=1}^{M}\frac{ \rho^2 \widetilde{t}_{k,k}^2\E\Tr \BT (\E_{j}\Ba_k \By_k^{H}\BQ_k) \BD_{j}\BQ_k\Bh_k\Bh_k^{H} \BQ_k\BD_i}{M}+o(1)
\\
&=\sum_{k=1}^{j} \frac{ \rho^2 \widetilde{t}_{k,k}^2 \Ba_k^{H} \BT_k\BD_i\BT\Ba_k \Tr (\E_{j} \BQ_k) \BD_{j}\BQ_k\BD_k}{M^2}+o(1)
\\
&\overset{(a)}{=}\sum_{k=1}^{j}  \frac{\rho \widetilde{t}_{k,k} \Ba_k^{H} \BT_k\BD_i\BT\Ba_k \psi_{j,k}}{M(1+\delta_k)}+o(1),
%\\
%&\overset{(a)}{=}\sum_{k=1}^{j}  \frac{ \Ba_k^{H} \BT_k\BD_i\BT\Ba_k  \psi_{j,k}}{M(1+\delta_k)^2}+o(1)
%\\
%&=\sum_{k=1}^{j} \Pi_{i,k}  \psi_{j,k}+o(1),
\end{aligned}
\end{equation}
where step $(a)$ in~(\ref{X_43}) follows from~(\ref{rho_aTb}) and
\begin{equation}
\begin{aligned}
X_{4,4} &=\sum_{k=1}^{M}\frac{ \rho^2 \widetilde{t}_{k,k}^2}{M} \E\Tr \BT (\E_{j}\By_k \Ba_k^{H}\BQ_k) \BD_{j}\BQ_k\Bh_k\Bh_k^{H} \BQ_k\BD_i+o(1)
\\
&=\sum_{k=1}^{j}  \frac{\rho^2 \widetilde{t}_{k,k}^2 }{M^2}\Tr  \BT\BD_{i}\BT\BD_k \E \Ba_k^{H} (\E_{j} \BQ_k) \BD_j\BQ_k\Ba_k +o(1)
\\
&=\sum_{k=1}^{j}  \rho^2 \widetilde{t}_{k,k}^2 \Gamma_{k,i}\theta_{j,k} +o(1).
\end{aligned}
\end{equation}
Noticing that $X_{4,2,1}+X_{2}=0$ and $X_{4,1}+X_{3,0}+X_{3,1}=0$,~(\ref{psi_first}) can be rewritten as
\begin{equation}
\begin{aligned}
\label{eq_second_pre}
\psi_{j,i}&=X_{1}+(X_{3,3}+X_{4,3}+X_{4,2,2})+X_{4,4}
\\
&=
M\Gamma_{i,j}+ \sum_{i=1}^{j}[ \frac{\Ba_k^{H} \BT\Ba_k \Ba_k^{H}\BT \BD_i\BT\Ba_k}{M(1+\delta_k)^3}
%\\
%&
+\frac{\rho\widetilde{t}_{k,k} \Ba_k^{H} \BT_k\BD_i\BT\Ba_k}{(1+\delta_k)}+\rho^2\widetilde{t}_{k,k}^2\Gamma_{i,k}  ]\psi_{j,k}
%\\
%&
+\sum_{k=1}^{j}  \rho^2\widetilde{t}_{k,k}^2 \Gamma_{k,i}\theta_{j,k}
\\
&=
\Gamma_{i,j}+ \sum_{k=1}^{j}(\Pi_{i,k} +\rho^2\widetilde{t}_{k,k}^2\Gamma_{i,k}  )\psi_{j,k}
%\\
%&
+\sum_{k=1}^{j}  \rho^2 \widetilde{t}_{k,k}^2 \Gamma_{k,i}\theta_{j,k}+o(1).
\end{aligned}
\end{equation}
By the bound for the rank-one perturbation in~(\ref{rank_one_per}), we can obtain $\frac{\E\Tr (\E_j \BQ)\BD_j\BQ\BD_i}{M}=\psi_{j,i}+\BO(M^{-1})$ and rewrite~(\ref{eq_second_pre}) as
\begin{equation}
\begin{aligned}
\label{eq_second}
& \sum_{k=1}^{j}[ \mathbbm{1}_{k=i} -(\Pi_{i,k} +\rho^2\widetilde{t}_{k,k}^2\Gamma_{i,k}  )]\psi_{j,k}
 %\\
 %&
-\sum_{k=1}^{j}  \rho^2 \widetilde{t}_{k,k}^2 \Gamma_{k,i}\theta_{j,k}=M\Gamma_{i,j} +o(1).
\end{aligned}
\end{equation}
Recall $\Theta_{j,k}=\rho^2\widetilde{t}_{k,k}^2 \theta_{j,k}$ and define $\bold{p}_{j}\in \mathbb{R}^{2j}$, $\bold{q}_{j}\in \mathbb{R}^{2j} $ as
\begin{equation}
\begin{aligned}
\bold{p}_{j}&=[\psi_{j,1},\psi_{j,2},...,\psi_{j,j}, \rho^2\widetilde{t}_{1,1}^2 \psi_{j,1}+\Theta_{j,1},
%\\
%&
 \rho^2\widetilde{t}_{2,2}^2 \psi_{j,2}+\Theta_{j,2},..., \rho^2\widetilde{t}_{j,j}^2 \psi_{j,j}+\Theta_{j,j}]^{T},
\\
\bold{q}_{j}&=M[\Gamma_{j,1},\Gamma_{j,2},...,\Gamma_{j,j},\Pi_{j,1},\Pi_{j,2},...,\Pi_{j,j}]^{T}.
\end{aligned}
\end{equation}
By~(\ref{eq_first}) and~(\ref{eq_second}), we can set up a system of equations with respect to $\Theta_{j,i}$ and $\psi_{j,i}$ as follows
\begin{equation}
\bold{p}_{j}=\bold{B}_{j}\bold{p}_{j}+\bold{q}_{j}+\bold{v}_{j},
\end{equation}
where $\bold{B}_{j}$ is given in~(\ref{def_BB}) and $| [v_{j}]_{i} |=o(1)$, $i=1,2,...,2j$.
\begin{comment}
 \begin{equation}
\bold{F}_{j}=
\begin{bmatrix}
\bold{\Pi}_{j}+\bold{\Gamma}_{j}\widetilde{\bold{\Lambda}}_{j} & \bold{\Gamma}_{j}
\\
  \bold{\Pi}_{j}^{T} \widetilde{\bold{\Lambda}}_{j}+\bold{\Xi}_{j} &\bold{\Pi}_{j}^{T}
\end{bmatrix},
\end{equation}
, $i=1,2,...,j-1$ and the $2j$-th column to the $j$-th column of
\end{comment}
Therefore, we have 
\begin{equation}
\bold{p}_{j}=(\BS_{j})^{-1}(\bold{q}_{j}+\bold{v}_{j}),
\end{equation}
with $\BS_{j}=\bold{I}_{2j}-\BB_j$. We define $\bold{\Omega}_j=\bold{\Xi}_j+\widetilde{\bold{\Lambda}}_{j}$ for the following derivations.
\subsection{Evaluation of $\frac{1}{M}(\sum_{j=1}^{M}-\rho^2\widetilde{t}_{j,j}^2 p_{j,(j)}+2 p_{j,(2j)})$.}
\label{sec_eva_sum_mart}
By Cramer's rule, $p_{j,(j)}$ and $p_{j,(2j)}$ can be represented by
\begin{equation}
\begin{aligned}
\label{gram_eqs}
p_{j,(j)}&=\psi_{j,j}=\frac{\det(\BS_{j,(j)})}{\det(\BS_{j})}+[\BS_{j}^{-1}\bold{v}_{j}]_{(j)},
\\
p_{(j),2j}&=\rho^2\widetilde{t}_{j,j}\psi_{j,j}+\Theta_{j,j}=\frac{\det(\BS_{j,(2j)})}{\det(\BS_{j})}+[\BS_{j}^{-1}\bold{v}_{j}]_{(2j)},
\end{aligned}
\end{equation}
where $\BS_{j,(j)}$ and $\BS_{j,(2j)}$ are obtained by replacing the $j$-th and $2j$-th columns of $\BS_{j}$ with $\bold{q}_{j}$, respectively. 
Now we will show that 
\begin{equation}
-\rho^2\widetilde{t}_{j,j}^2 p_{j,(j)}+2 p_{j,(2j)}
=\frac{\det(\BS_{j-1})-\det(\BS_{j})}{\det(\BS_{j})}+o(1).
\end{equation}
First, by adding $\rho^2\widetilde{t}_{j,j}^2$ times of the $2j$-th column to the $j$-th column $\BS_{j,(j)}$, we can obtain
\begin{equation}
\label{psi_jj_first}
\small
\begin{aligned}
&\frac{\rho^2\widetilde{t}_{j,j}^2}{M} \det( \BS_{j,(j)})
=- \det(
%\\
%&
\begin{bmatrix}
 \bold{I}_{j-1}-\bold{\Pi}_{j-1} & \bold{0} & -\bold{\Gamma}_{j}^{(1:j-1)}
 \\
 -\bold{\Pi}_{j}^{(j)}& 0 & -\bold{\Gamma}_{j}^{(j)}
 \\
-\bold{\Omega}_{j-1} & \bold{0} &\bold{I}_{j}^{(1:j-1)}-\bold{\Pi}_{(j)}^{[1:1-j],T}
\\
-\bold{\Xi}_{j}^{(j)}  & -\rho^2\widetilde{t}_{j,j}^2   &   \bold{e}_{j}^{T}-\bold{\Pi}_{(j)}^{[j],T}
 \end{bmatrix}
    ),
    \end{aligned}
\end{equation}
without changing $\det(\BS_{j,(j)})$, where $\bold{e}_{j}\in\mathbb{R}^{j}=\bold{I}_{j}^{[j]} =(0,0,...,1)^{T}$. %Similarly, by adding $\rho^2\widetilde{t}_{i,i}^2$ times of the $(j+i)$-th column to the $i$-th column, $i=1,2,...,j-1$ and the $2j$-th column to the $j$-th column of $\BS_{j,(2j)}$, we have

\begin{figure*}
\begin{equation}
\begin{aligned}
\small
\nonumber
%\label{BA_2j_first}
%\begin{aligned}
&\frac{\det( \BS_{j,(2j)})}{M}=
\\
&
 -\det(
\begin{bmatrix}
 \bold{I}_{j-1}-\bold{\Pi}_{j-1} & -\bold{\Pi}_{j}^{[j]} & -\bold{\Gamma}_{j}^{(1:j-1)}
 \\
 -\bold{\Pi}_{j}^{(j)}& 1-\Pi_{j,j} & -\bold{\Gamma}_{j}^{(j)}
 \\
-\bold{\Xi}_{j-1}-\widetilde{\bold{\Lambda}}_{j-1} &-\bold{\Xi}_{j}^{[j]} &\bold{I}_{j}^{(1:j-1)}-\bold{\Pi}_{(j)}^{[1:1-j],T}
\\
-\bold{\Xi}_{j}^{(j)}  & -\rho^2\widetilde{t}_{j,j}^2   &   -\bold{\Pi}_{j}^{[j],T}
 \end{bmatrix})
\!=\!-\det(
\begin{bmatrix}
 \bold{I}_{j-1}-\bold{\Pi}_{j-1} & -\bold{\Pi}_{j}^{[j]} & -\bold{\Gamma}_{j}^{(1:j-1)}
 \\
 -\bold{\Pi}_{j}^{(j)}& 1-\Pi_{j,j} & -\bold{\Gamma}_{j}^{(j)}
 \\
-\bold{\Xi}_{j-1}-\widetilde{\bold{\Lambda}}_{j-1} &-\bold{\Xi}_{j}^{[j]} &\bold{I}_{j}^{(1:j-1)}-\bold{\Pi}_{(j)}^{[1:1-j],T}
\\
-\bold{\Xi}_{j}^{(j)}  & -\rho^2\widetilde{t}_{j,j}^2   &   \bold{e}_{j}^{T}-\bold{\Pi}_{j}^{[j],T}
 \end{bmatrix})
\end{aligned}
\end{equation}
 \begin{equation}
 \small
 \begin{aligned}
 \label{BA_2j_first}
 & +\det(\begin{bmatrix}
 \bold{I}_{j-1}-\bold{\Pi}_{j-1} & -\bold{\Pi}_{j}^{[j]} & -\bold{\Gamma}_{j}^{(1:j-1)}
 \\
 -\bold{\Pi}_{j}^{(j)}& 1-\Pi_{j,j} & -\bold{\Gamma}_{j}^{(j)}
 \\
-\bold{\Xi}_{j-1}-\widetilde{\bold{\Lambda}}_{j-1} &-\bold{\Xi}_{j}^{[j]} &\bold{I}_{j}^{(1:j-1)}-\bold{\Pi}_{j}^{[1:1-j],T}
\\
%-\bold{\Xi}_{j}^{(j)}  
\bold{0}& 0   &   \bold{e}_{j}^{T}
 \end{bmatrix})
 \overset{(a)}{=}-\det(
\begin{bmatrix}
 \bold{I}_{j-1}-\bold{\Pi}_{j-1} & -\bold{\Pi}_{j}^{[j]} & -\bold{\Gamma}_{j}^{(1:j-1)}
 \\
 -\bold{\Pi}_{j}^{(j)}& 1-\Pi_{j,j} & -\bold{\Gamma}_{j}^{(j)}
 \\
-\bold{\Xi}_{j-1} -\widetilde{\bold{\Lambda}}_{j}&-\bold{\Xi}_{j}^{[j]} &\bold{I}_{j}^{(1:j-1)}-\bold{\Pi}_{j}^{[1:1-j],T}
\\
-\bold{\Xi}_{j}^{(j)}  & -\rho^2\widetilde{t}_{j,j}^2   &   \bold{e}_{j}^{T}-\bold{\Pi}_{j}^{[j],T}
 \end{bmatrix})
 \\
 &
 +\det(
\begin{bmatrix}
 \bold{I}_{j-1}-\bold{\Pi}_{j-1} & \bold{0} & -\bold{\Gamma}_{j}^{(1:j-1)}
 \\
 -\bold{\Pi}_{j}^{(j)}& 1 & -\bold{\Gamma}_{j}^{(j)}
 \\
-\bold{\Xi}_{j-1} -\widetilde{\bold{\Lambda}}_{j-1}&
\bold{0}
%-\bold{\Xi}_{j}^{[j]} 
&\bold{I}_{j}^{(1:j-1)}-\bold{\Pi}_{j}^{[1:1-j],T}
\\
%-\bold{\Xi}_{j}^{(j)}  
\bold{0}
& 0   &    \bold{e}_{j}^{T}
 \end{bmatrix})
+ \det(
\begin{bmatrix}
 \bold{I}_{j-1}-\bold{\Pi}_{j-1} & -\bold{\Pi}_{j}^{[j]} & -\bold{\Gamma}_{j}^{(1:j-1)}
 \\
 -\bold{\Pi}_{j}^{(j)}& -\Pi_{j,j} & -\bold{\Gamma}_{j}^{(j)}
 \\
-\bold{\Xi}_{j-1}-\widetilde{\bold{\Lambda}}_{j-1} &-\bold{\Xi}_{j}^{[j]} &\bold{I}_{j}^{(1:j-1)}-\bold{\Pi}_{j}^{[1:1-j],T}
\\
\bold{0}  & 0   &    \bold{e}_{j}^{T}
 \end{bmatrix})
 \\
 &=- \det(\BS_{j})+\det(\BS_{j-1})   +C_1.
\end{aligned}
\end{equation}
\hrulefill
\end{figure*}
On one hand, we can decompose $\frac{\det( \BS_{j,(2j)})}{M}$ as $\frac{\det( \BS_{j,(2j)})}{M}=- \det(\BS_{j})+\det(\BS_{j-1})   +C_1$ in~(\ref{BA_2j_first}) at the top of the next page, where step $(a)$ in~(\ref{BA_2j_first}) is obtained by decomposing the $j$-the column of the second term in the previous step. On the other hand, we can decompose $\frac{\det( \BS_{j,(2j)})}{M}$ as $\frac{\det( \BS_{j,(2j)})}{M}=\frac{\rho^2\widetilde{t}_{j,j}^2}{M} \det( \BS_{j,(j)})-C_1-C_2$ in~(\ref{BA_2j_second}) on next page,
\begin{figure*}
\begin{equation}
\small
\label{BA_2j_second}
\begin{aligned}
\frac{\det( \BS_{j,(2j)}) }{M} &\overset{(a)}{=} -(-1)^{j} \det(\begin{bmatrix}
-\bold{\Gamma}_{j}^{(1:j-1)}&   \bold{I}_{j-1}\!-\!\bold{\Pi}_{j-1} & -\bold{\Pi}_{j}^{[j]}
 \\
-  \bold{\Gamma}_{j}^{(j)} &  -\bold{\Pi}_{j}^{(j)}&   1-\Pi_{j,j} 
 \\
\bold{I}_{j}^{(1:j-1)}\!-\!\bold{\Pi}_{j}^{[1:1-j],T}&-\bold{\Xi}_{j-1}-\widetilde{\bold{\Lambda}}_{j} &  -\bold{\Xi}_{j}^{[j]} 
\\
-\bold{\Pi}_{j}^{[j],T} &-\bold{\Xi}_{j}^{(j)} & -\rho^2\widetilde{t}_{j,j}^2
\end{bmatrix})
\\
&
\overset{(b)}{=}-(-1)^{j} \det(\begin{bmatrix}
-\bold{\Gamma}_{j}^{[1:j-1]}&  -  \bold{\Gamma}_{j}^{[j]} &\bold{I}_{j}^{(1:j-1)}\!-\!\bold{\Pi}_{j}^{[1:1-j]} & -\bold{\Pi}_{j}^{[j]}
 \\
 \bold{I}_{j-1}\!-\!\bold{\Pi}_{j-1}^{T} &  -\bold{\Pi}_{j}^{(j),T}&  -\bold{\Xi}_{j-1} -\widetilde{\bold{\Lambda}}_{j} & -\bold{\Xi}_{j}^{(1:j-1),[j]}
 \\
 -\bold{\Pi}_{j}^{[j],T}&1-\Pi_{j,j} &  -\bold{\Xi}_{j}^{(j),[1:j-1]} & -\rho^2\widetilde{t}_{j,j}^2
\end{bmatrix})
\\
&
\overset{(c)}{=}-\det(\begin{bmatrix}
\bold{I}_{j}^{(1:j-1)}\!-\!\bold{\Pi}_{j}^{[1:1-j]} & -\bold{\Pi}_{j}^{[j]}&-\bold{\Gamma}_{j}^{[1:j-1]}&  -  \bold{\Gamma}_{j}^{[j]} 
 \\
  -\bold{\Xi}_{j-1} -\widetilde{\bold{\Lambda}}_{j} & -\bold{\Xi}_{j}^{[j]}&  \bold{I}_{j-1}\!-\!\bold{\Pi}_{j-1}^{T} &  -\bold{\Pi}_{j}^{(j),T}
 \\
  -\bold{\Xi}_{j}^{(j)} & -\rho^2\widetilde{t}_{j,j}^2&  -\bold{\Pi}_{j}^{[j],T}&1-\Pi_{j,j}
\end{bmatrix})
=\frac{\rho^2\widetilde{t}_{j,j}^2}{M} \det( \BS_{j,(j)})-C_1-C_2.
\end{aligned}
\end{equation}
\hrulefill
\end{figure*}
where 
\begin{equation}
\small
\begin{aligned}
&C_2=\det(
%\\
%&
\begin{bmatrix}
\bold{I}_{j}^{[1:j-1]}\!-\!\bold{\Pi}_{j}^{[1:1-j]} & -\bold{\Pi}_{j}^{[j]}&-\bold{\Gamma}_{j}^{[1:j-1]}&  -  \bold{\Gamma}_{j}^{[j]} 
 \\
  -\bold{\Xi}_{j-1}-\widetilde{\bold{\Lambda}}_{j-1}  & -\bold{\Xi}_{j}^{[j]}&  \bold{I}_{j-1}\!-\!\bold{\Pi}_{j-1}^{T} &  -\bold{\Pi}_{j}^{(j),T}
 \\
  -\bold{\Xi}_{j}^{(j)} & 0&  -\bold{\Pi}_{j}^{[j],T}&-\Pi_{j,j}
\end{bmatrix}).
\end{aligned}
\end{equation}
Step $(a)$ in~(\ref{BA_2j_second}) is obtained by exchanging the $i$-th column with $j+i$-th column of the determinant in the first line of~(\ref{BA_2j_first}), step (b) follows from the identity $|\bold{M}|=|\bold{M}^{T}|$, and step $(c)$ follows by exchanging $i$-th column with $(j+i)$-th column. 
By~(\ref{psi_jj_first}), ~(\ref{BA_2j_first}), and~(\ref{BA_2j_second}), we can obtain
\begin{equation}
\frac{\rho^2\widetilde{t}_{j,j}^2 \det(\BS_{j,(j)})}{M}+\frac{ 2\det( \BS_{j,(2j)})}{M}
=\det(\BS_{j-1})-\det(\BS_j)-C_2.
\end{equation}
Next, we will show that
\begin{equation}
C_2=\BO(M^{-2}).
\end{equation}
By moving the $j$-th column to the $(2j-1)$-th column and the $j$-th row to the $(2j-1)$-th row, $C_2$ can be rewritten as
\begin{equation}
\label{C_3_eva}
\begin{aligned}
C_2=\det(\begin{bmatrix}
\BS_{j-1} &\bold{G}_j\\
\widetilde{\bold{G}}_{j}&\bold{E}_j
\end{bmatrix}),
\end{aligned}
\end{equation}
where
\begin{equation}
\label{g_def}
\begin{aligned}
\bold{G}_j \!\!&=-\!\!
\begin{bmatrix}
\bold{\Pi}_{j}^{(1:j-1),[j]} & \bold{\Gamma}_{j}^{(1:j-1),[j]}
\\
\bold{\Xi}_{j}^{(1:j-1),[j]}  & \bold{\Pi}_{j}^{(1:j-1),[j]}
\end{bmatrix},
\bold{E}_j \!\!=\!\!
-\begin{bmatrix}
\Pi_{j,j} & \Gamma_{j,j}
\\
0 & \Pi_{j,j}
\end{bmatrix}
\\
\widetilde{\bold{G}}_j \!\!&=-\!\!
\begin{bmatrix}
\bold{\Pi}_{j}^{(j),[1:j-1]} & \bold{\Gamma}_{j}^{(j),[1:j-1]}
\\
\bold{\Xi}_{j}^{(j),[1:j-1]}  & \bold{\Pi}_{j}^{(j),[1:j-1]}
\end{bmatrix}
.
\end{aligned}
\end{equation}
By the trace inequality~(\ref{trace_inq}), we have 
\begin{equation}
\label{lbnd_gamma}
\begin{aligned}
&\frac{\Tr\BD_i \sigma_{min}^2}{M} \le \frac{\Tr\BD_i\BD_j}{M}=\frac{\Tr\BD_i\FT\FT^{-1}\BD_j}{M}
%\\
%&
\le \frac{\Tr\FT^{\frac{1}{2}}\BD_j\BD_i\FT^{\frac{1}{2}}}{M} \|\FT^{-1} \| \le \frac{\Tr\BD_i\FT\BD_j\FT}{M} \|\FT^{-1} \|^2.
\end{aligned}
\end{equation}
According to~(\ref{lbnd_gamma}) and~\textbf{A.2}, we can obtain $\inf\limits_{i,j} M\Gamma_{i,j} >0$. Similarly, we have
\begin{equation}
\begin{aligned}
&\sigma^2_{min}\Ba_i^{H}\Ba_i =\sigma^2_{min}\Tr\Ba_i \Ba_i^{H}\FT\FT^{-1}
%\\
%&
\le \sigma^2_{min}\Tr\Ba_i \Ba_i^{H}\FT \| \FT^{-1} \|
%\\
%&
\le\sigma^2_{min} \Tr\Ba_i \Ba_i^{H}\FT \| \FT^{-1} \|^2 \le \Ba_i^{H}\FT\BD_{j} \FT\Ba_i\| \FT^{-1} \|^2,
\end{aligned}
\end{equation}
such that $\inf\limits_{i,j}M \Pi_{i,j}>0$. Therefore, we can obtain $\inf\limits_{i \le j} [\bold{q}_{j}]_i >0$. By~\cite[Proposition 5.5, Lemma 5.1]{hachem2008clt}, we have: (i). $\bold{S}_j$ is invertible; (ii). $\det(\BS_j)$ is bounded; (iii). the $1$-norm for the rows of $\BS_j^{-1}$ and $[\BS_j^{(p:j),[p:j]})]^{-1}$ are bounded, i.e., $\max_{i\le j} \| [\bold{S}_{j}^{-1}]^{(i)}\|_{1} < K_s $ and $\max_{p\le j}\| [\BS_j^{(p:j),[p:j]})]^{-1} \|_{1} <K_s$, where $K_s$ is independent of $M$, $N$, and $j$; and (iv) $[\bold{S}_{j}^{-1}]_{m,n}>0$. We also have $|[\BE]_{i,j}|\le K_e M^{-1}$. Therefore, for $m=1$ or $n=2$, $|[\BG]_{i,j}|\le K_{g} M^{-1}$, there holds
\begin{equation}
\begin{aligned}
\label{G_other}
&[\widetilde{\bold{G}}_j\BS_{j-1}^{-1}\bold{G}_j]_{m,n} \le  \|\widetilde{\BG}_{j}^{(m)} \|_1 \max_{i\le j} \| [\bold{S}_{j}^{-1}]^{(i)}\|_{1}
%\\
%&
%\times
 \| \bold{G}_{j}^{[n]} \|_{0} 
\le  K_g^2 K_e M^{-1}. 
\end{aligned}
\end{equation}
By~(\ref{g_def}), $[\widetilde{\bold{G}}_j\BS_{j-1}^{-1}\bold{G}_j]_{2,1}$  can be evaluated by 
\begin{equation}
\begin{aligned}
&[\widetilde{\bold{G}}_j\BS_{j-1}^{-1}\bold{G}_j]_{2,1}=\bold{\Xi}_{j}^{(j),[1:j-1]}[\BS_{j-1}^{-1}]^{(1,2)}\bold{\Xi}_{j}^{(1:j-1),[j]}
%\\
%&
+\bold{\Pi}_{j}^{(j),[1:j-1]}[\BS_{j-1}^{-1}]^{(2,1)}\bold{\Pi}_{j}^{(1:j-1),[j]}
\\
&
+\bold{\Xi}_{j}^{(j),[1:j-1]}   [\BS_{j-1}^{-1}]^{(1,1)}\bold{\Pi}_j^{(1:j-1),[j]}.
%\\
%&
+\bold{\Pi}_{j}^{(j),[1:j-1]}   [\BS_{j-1}^{-1}]^{(1,1)}\bold{\Xi}_{j}^{(1:j-1),[j]}
\\
&
=E_1+E_2+E_3+E_4,
\end{aligned}
\end{equation}
where $[\bold{S}_{j-1}^{-1}]^{(m,n)}$ represents the $(m,n)$-th $(j-1)\times (j-1)$ block of $[\bold{S}_{j-1}^{-1}]$, $m,n=1,2$. By similar analysis as~(\ref{G_other}), we can show that $E_2$, $E_3$, and $E_4$ are $\BO(M^{-1})$. Next we show that $E_1$ is also $\BO(M^{-1})$. By the block matrix inversion formula~(\ref{block_inv}) in Appendix~\ref{appendix_lemmas}, $E_1$ can be computed by
\begin{equation}
\begin{aligned}
&[\bold{S}_{j-1}^{-1}]^{(1,2)}=[\bold{I}_{j-1}-\bold{\Pi}_{j-1} -\bold{\Gamma}_{j-1}%\times
%\\
%&
(\bold{I}_{j-1}-\bold{\Pi}_{j-1}^{T})^{-1}\bold{\Xi}_{j-1} ]^{-1}\bold{\Gamma}_{j-1}(\bold{I}_{j-1}-\bold{\Pi}_{j-1}^{T})^{-1},
\end{aligned}
\end{equation}
so that the $(m,n)$-th entry of $[\bold{S}_{j-1}^{-1}]^{(1,2)}$ is bounded by
\begin{equation}
\begin{aligned}
\label{S_j_1_bnd}
&|[\bold{S}_{j-1}^{-1}]^{(1,2)}_{m,n}| \le  \max_{a,b}| [\bold{\Gamma}_{j-1}]_{a,b} |    \| (\bold{I}_{j-1}-\bold{\Pi}_{j-1} )^{-1}_{(n)}\|_1
%\\
%&
 \| (\bold{I}_{j-1}-\bold{\Pi}_{j-1}-\bold{\Xi}_{j-1} (\bold{I}_{j-1}-\bold{\Pi}_{j-1}^{T})^{-1} \bold{\Gamma}_{j-1})^{-1}_{(m)} \|_{1} 
% \\
 %&
  \le \frac{K}{M},
\end{aligned}
\end{equation}
where the boundness of the two $1$-norms in~(\ref{S_j_1_bnd}) follows from boundness in (iii) above~(\ref{G_other}) as
\begin{equation}
 \begin{aligned}
& \| (\bold{I}_{j-1}-\bold{\Pi}_{j-1}-\bold{\Xi}_{j-1} (\bold{I}_{j-1}-\bold{\Pi}_{j-1}^{T})^{-1} \bold{\Gamma}_{j-1})^{-1}_{(m)} \|_{1} 
 %\\
%&
 \| [\bold{S}_{j}^{-1}]^{(1,1)}_{(m)} \|_{1} \le \| [\bold{S}_{j}^{-1}]^{(m)} \|_{1} < \infty.
 \end{aligned}
 \end{equation}
According to the boundness of $\| \bold{\Xi}_{j,(j)}\|_{1} $ in
\begin{equation}
\| \bold{\Xi}_{j}^{(j)}\|_{1} \le K \Tr\BA\BT\Ba_j\Ba_j^{H}\BA \le K a_{max}^4\rho^{-2}, 
\end{equation}
we can obtain
\begin{equation}
\begin{aligned}
&| \bold{\Pi}_{j}^{(j),T}[\bold{S}_{j-1}^{-1}]_{(1,2)} \bold{\Pi}_{j}^{(j)}| \le \| \bold{\Pi}_{j}^{(j)}  \|_{1}^2 \max_{a,b} | [\bold{S}_{j-1}^{-1}]^{(1,2)}_{a,b}  | 
%\\
%&
= \BO(M^{-1}).
\end{aligned}
\end{equation}
By the determinant formula for block-matrix with invertible $\BA$, i.e.,
\begin{equation}
\begin{aligned}
\det(\begin{bmatrix}
\BA &\BB\\
\BC&\BD
\end{bmatrix})
=\det(\BA)\det(\BD-\BB\BA^{-1}\BD),
\end{aligned}
\end{equation}
we can obtain the following evaluation for $|C_2|$
\begin{equation}
\begin{aligned}
&|C_2|=|\det(\BS_{j-1}) | | \det( \bold{E}_j - \widetilde{\bold{G}}_j\BS_{j-1}^{-1}\bold{G}_j) | 
%\\
%&
\le  K  ( K_g M^{-1}+ K_e K_g^2  M^{-1}  )^2 =\BO(M^{-2}),
\end{aligned}
\end{equation}
which proves~(\ref{C_3_eva}). By $\max_{i\le j} \| [\bold{S}_{j}^{-1}]^{(i)}\|_{1} < K_s $, there holds true that $  \|\BS_{j}^{-1}\bold{v}_{(j)}\|_{0}=o(1) $. Therefore, we have
\begin{equation}
\begin{aligned}
\frac{1}{M}(\sum_{j=1}^{M}\rho^2\widetilde{t}_{j,j}^2 p_{j,(j)}+2 p_{j,(2j)})
%\\
%&
&=\sum_{j=1}^{M}\frac{\det(\BS_{j-1})-\det(\BS_{j})}{\det(\BS_{j})} + o(1)
\\
&
=\sum_{j=1}^{M}\log\left(1+\frac{\det(\BS_{j-1})-\det(\BS_{j})}{\det(\BS_{j})}\right) + o(1)
\\
&
=-\log\det(\BS_{M})+ o(1)
\\
&
=-\log\det(\bold{I}_{2j}-\bold{B}_j)+o(1),
\end{aligned}
\end{equation}
which concludes~(\ref{sum_pjj}).

\section{Useful Results}
\label{appendix_lemmas}
\textit{1.  Expansion of the covariance of two quadratic forms, Eq. (3.20) in~\cite{hachem2012clt}.}
\begin{lemma}
\label{lemma_ext}
 Define $\bold{z}=\bold{a}+ \frac{1}{\sqrt{N}}\bold{D}^{\frac{1}{2}}\bold{x} $, where $\bold{x}=(X_{1},X_{2},...,X_{N})^{T}$, which is a random vector with i.i.d circularly Gaussian entries with unit variance, $\bold{D}$ is a diagonal non-negative matrix and $\bold{a} \in \mathbb{C}^{N}$ is a deterministic vector. Assuming that $\bold{\Gamma}  $ and $\bold{\Lambda}$ are two $N\times N$ deterministic matrices, the covariance of the quadratic forms $\bold{z}^{H}\bold{\Gamma}\bold{z}$ and $\bold{z}^{H}\bold{\Lambda}\bold{z}$ is given by
\begin{equation}
\label{qua_ext}
\begin{aligned}
& \E (\bold{z}^{H}\bold{\Gamma}\bold{z} - \E \bold{z}^{H}\bold{\Gamma}\bold{z}  ) 
(\bold{z}^{H}\bold{\Lambda}\bold{z} - \E \bold{z}^{H}\bold{\Lambda}\bold{z}  )  
\\
&=\frac{1}{N^2}\Tr \bold{\Gamma}\bold{D}\bold{\Lambda}\bold{D}
+\frac{1}{N} \bold{a}^{H}\bold{\Gamma}\bold{D}\bold{\Lambda}\bold{a}
+\frac{1}{N} \bold{a}^{H}\bold{\Lambda}\bold{D}\bold{\Gamma}\bold{a}.
%\\
%&+\frac{|\vartheta|^2}{N^2}\Tr \bold{\Gamma}\bold{D}\bold{\Lambda}^{T}\bold{D}
%+\frac{\vartheta}{N} \bold{a}^{H}\bold{\Lambda}\bold{D}\bold{\Gamma}^{T}\overline{\bold{a}}
%+\frac{\overline{\vartheta}}{N} \bold{a}^{T}\bold{\Lambda}^{T}\bold{D}\bold{\Gamma}\bold{a}
%\\
%&+\frac{\zeta}{N^{\frac{3}{2}}} \bold{a}^{H} \bold{\Lambda} \bold{D}^{\frac{3}{2}}\mathrm{vdiag}(\bold{\Gamma}) 
%+\frac{\overline{\zeta}}{N^{\frac{3}{2}}} \bold{a}^{H} \bold{\Gamma} \bold{D}^{\frac{3}{2}}\mathrm{vdiag}(\bold{\Lambda}) 
%\\
%&+\frac{\kappa}{N^2}\Tr\bold{D}^2 \mathrm{diag}(\bold{\Lambda} )\mathrm{diag}(\bold{\Gamma} ),
\end{aligned}
\end{equation}
%where $\vartheta$, $\kappa$ and $\zeta$ are given in~(\ref{moments}).
\end{lemma}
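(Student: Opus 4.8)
The plan is to reduce everything to elementary Gaussian moment computations by expanding the quadratic forms explicitly. First I would write $\bold{z}=\bold{a}+\bold{w}$ with $\bold{w}=N^{-\frac{1}{2}}\bold{D}^{\frac{1}{2}}\bold{x}$, so that $\bold{w}$ is a zero-mean circularly symmetric complex Gaussian vector with independent entries obeying $\E[\bold{w}\bold{w}^{H}]=N^{-1}\bold{D}$ and $\E[\bold{w}\bold{w}^{T}]=\bold{0}$. Substituting into $\bold{z}^{H}\bold{\Gamma}\bold{z}$ and discarding the deterministic constant $\bold{a}^{H}\bold{\Gamma}\bold{a}$ (which is annihilated by centering), I obtain the decomposition
\begin{equation}
\bold{z}^{H}\bold{\Gamma}\bold{z}-\E\,\bold{z}^{H}\bold{\Gamma}\bold{z}=\bold{a}^{H}\bold{\Gamma}\bold{w}+\bold{w}^{H}\bold{\Gamma}\bold{a}+\big(\bold{w}^{H}\bold{\Gamma}\bold{w}-\E\,\bold{w}^{H}\bold{\Gamma}\bold{w}\big),
\end{equation}
together with the analogous identity for $\bold{\Lambda}$. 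Thus each centered form is the sum of a term linear in $\bold{w}$, a term linear in $\overline{\bold{w}}$, and a centered quadratic term.

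Next I would expand the product of the two centered forms into its nine cross terms and evaluate each expectation using the moment structure of $\bold{w}$. The facts I rely on are that every odd moment vanishes, that $\E[w_iw_j]=0$ and $\E[\overline{w}_i\,\overline{w}_j]=0$ for all $i,j$ by circular symmetry, and that $\E[\overline{w}_iw_j]=\delta_{ij}D_{ii}/N$. The two ``linear-times-linear'' terms of equal type (both in $\bold{w}$, or both in $\overline{\bold{w}}$) vanish since they reduce to $\E[w_iw_j]$ or $\E[\overline{w}_i\overline{w}_j]$; the four terms pairing a linear factor with a centered quadratic factor vanish as third-order Gaussian moments. Only three cross terms survive: the two ``mixed linear'' terms and the ``quadratic-times-quadratic'' term.

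For the two surviving linear terms, the pairing $\E[w_j\overline{w}_k]=\delta_{jk}D_{jj}/N$ directly gives $\E[(\bold{a}^{H}\bold{\Gamma}\bold{w})(\bold{w}^{H}\bold{\Lambda}\bold{a})]=N^{-1}\bold{a}^{H}\bold{\Gamma}\bold{D}\bold{\Lambda}\bold{a}$ and, symmetrically, $\E[(\bold{w}^{H}\bold{\Gamma}\bold{a})(\bold{a}^{H}\bold{\Lambda}\bold{w})]=N^{-1}\bold{a}^{H}\bold{\Lambda}\bold{D}\bold{\Gamma}\bold{a}$, which are exactly the last two terms in~(\ref{qua_ext}). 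For the quadratic term I would apply Wick's theorem to $\E[\overline{w}_iw_j\overline{w}_kw_l]$: only pairings of a $w$ with a $\overline{w}$ contribute, yielding $\delta_{ij}\delta_{kl}d_id_k+\delta_{il}\delta_{kj}d_id_k$ with $d_i=D_{ii}/N$. The first pairing cancels against $\E[\overline{w}_iw_j]\E[\overline{w}_kw_l]$ inside the covariance, and the second collapses the double sum to $\sum_{i,j}\Gamma_{ij}\Lambda_{ji}d_id_j=N^{-2}\Tr\bold{\Gamma}\bold{D}\bold{\Lambda}\bold{D}$, which is the first term in~(\ref{qua_ext}).

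Summing the three surviving contributions reproduces~(\ref{qua_ext}) exactly. There is no genuine analytic difficulty: the entire argument is a finite Gaussian moment calculation. The only place that demands care is the bookkeeping of the Wick pairings under circular symmetry — specifically, remembering that $\E[w_iw_j]=0$ eliminates four of the nine cross terms outright, and that in the fourth-order moment only the two cross-type pairings are admissible. This is precisely what produces the clean three-term answer, rather than the six-term expression one would obtain for a real Gaussian vector, and so this step is the main obstacle to stating the identity correctly.
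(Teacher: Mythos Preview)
Your proposal is correct. The paper does not actually prove this lemma: it simply records it as an imported identity, citing Eq.~(3.20) of~\cite{hachem2012clt}. Your direct expansion $\bold{z}=\bold{a}+\bold{w}$ followed by Wick pairing under circular symmetry is the standard self-contained derivation, and every step you describe checks out, including the vanishing of the odd-moment cross terms and the identification of the surviving fourth-order pairing with $N^{-2}\Tr\bold{\Gamma}\bold{D}\bold{\Lambda}\bold{D}$. So there is nothing to compare beyond noting that you supply the argument the paper outsources.
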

%It can be observed from (\ref{qua_ext}) that the $\vartheta$ and $\kappa$ related terms will not vanish if $\vartheta$ and $\kappa$ are {\color{blue}non-zero}. (\ref{qua_ext}) will be used to evaluate small quantities.

\textit{2. Approximations for the bilinear form of resolvents.}
\begin{lemma} 
\label{appro_diag_ele}
Given assumptions~\textbf{A.1}-\textbf{A.3}, $p\in [1,2]$, and $\|\bold{U}\|, \| \bold{u}\|_{2}, \| \bold{v}\|_{2}<\infty$, there holds true that
\begin{subequations}
\label{bilinear_approx}
\begin{align}
\frac{\E\Tr\bold{U}\BQ}{M}-\frac{\Tr\bold{U}\FT}{M}=\varepsilon_{1,M}\label{utraceq},
\\
\E | \bold{u}^{H}(\BQ-\BT)\bold{v}  |^{2p} =\varepsilon_{2,M}\label{uqvcon},
\\
\E | \bold{u}^{H}(\BQ_i-\BT_i)\bold{v}  |^{2p} =\varepsilon_{3,M}\label{uqvcon_},
\end{align}
\end{subequations}
where $\varepsilon_{1,n} \xrightarrow{M\rightarrow \infty} 0$, $\varepsilon_{2,n} \xrightarrow{M\rightarrow \infty} 0$, and $\varepsilon_{3,n} \xrightarrow{M\rightarrow \infty} 0$. In particular, 
\begin{equation}
\label{qjj_approx}
\E | \widetilde{q}_{j,j}-\widetilde{t}_{j,j} |^{2p} =o(1).
\end{equation}
\end{lemma}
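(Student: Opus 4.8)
The plan is to obtain Lemma~\ref{appro_diag_ele} by the standard deterministic‑equivalent machinery for Gram matrices with a variance profile, adapting the arguments of~\cite{hachem2007deterministic,hachem2008clt,hachem2012clt} to the present non‑centered, non‑separable setting. First I would record the a priori bounds that make everything well posed: $\|\BQ\|\le\rho^{-1}$, $\|\BQ_i\|\le\rho^{-1}$, $\|\BT\|\le K$, $\|\BT_i\|\le K$, $\|\BA\|\le a_{max}$ and $\sigma^2_{min}\le\sigma^2_{i,j}\le\sigma^2_{max}$, together with the resolvent identities~(\ref{ide_Q_QI})--(\ref{h_Q_QI}). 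From~(\ref{ide_Q_QI}) one gets the rank‑one bound $|\bold{u}^H(\BQ-\BQ_i)\bold{v}|\le K\rho^{-1}\|\bold{u}\|_2\|\bold{v}\|_2$, which shows that removing one column is an $O(M^{-1})$ perturbation; the same holds for $\BT$ versus $\BT_i$ through~(\ref{rho_aTb}). Hence~(\ref{uqvcon_}) reduces to~(\ref{uqvcon}) up to an $O(M^{-1})$ error, and all later estimates are insensitive to such column removals. Likewise, applying~(\ref{qii_exp}) and~(\ref{t_jj_exp}) reduces~(\ref{qjj_approx}) to showing that $\Bh_j^H\BQ_j\Bh_j$ is close to $\frac1M\Tr\BD_j\BQ_j+\Ba_j^H\BQ_j\Ba_j$, which in turn is close to $\delta_j+\Ba_j^H\BT_j\Ba_j$.

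Second I would establish the fluctuation (concentration) part. For deterministic $\bold{u},\bold{v}$ with bounded Euclidean norm I claim
\[
\E\bigl|\bold{u}^H(\BQ-\E\BQ)\bold{v}\bigr|^{2p}=O(M^{-p}),\qquad p\in[1,2].
\]
Since the entries of $\BX$ are i.i.d.\ standard complex Gaussian, this follows from the Poincaré–Nash inequality after bounding the gradient $\partial(\bold{u}^H\BQ\bold{v})/\partial X_{ij}$ in terms of $\|\BQ\|$ and the column norms of $\BH$; alternatively one may decompose $\bold{u}^H(\BQ-\E\BQ)\bold{v}$ into column‑wise martingale differences exactly as in~(\ref{martin_decomp}) and apply Burkholder's inequality. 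The same argument with $\bold{u}=\bold{v}=\bold{e}_j$, combined with Lemma~\ref{lemma_ext} applied to the quadratic form $\Bh_j^H\BQ_j\Bh_j$ (whose randomness splits into that of $\By_j$ and that of $\BQ_j$), yields the concentration of $\Bh_j^H\BQ_j\Bh_j$ around $\frac1M\Tr\BD_j\BQ_j+\Ba_j^H\BQ_j\Ba_j$ that underlies~(\ref{qjj_approx}).

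Third — and this is the core — I would show that $\E\BQ$ is a bilinear‑form approximation of $\BT$, namely $|\bold{u}^H(\E\BQ-\BT)\bold{v}|\to0$ at a polynomial rate. Using $\E\BQ-\BT=\BT(\BT^{-1}-\E\BQ^{-1})\E\BQ$ with $\E\BQ^{-1}=-z\bold{I}_N+\E\BH\BH^H$ and the expression for $\BT^{-1}$ in~(\ref{tz_def}), I would expand $\E\BH\BH^H$ column by column through $\Bh_j\Bh_j^H=\Ba_j\Ba_j^H+\Ba_j\By_j^H+\By_j\Ba_j^H+\By_j\By_j^H$ and~(\ref{h_Q_QI}): the cross terms vanish in expectation, and the remaining diagonal contributions, after inserting the concentration estimates of the second step, reproduce the coefficients $\psi_i,\widetilde\psi_j$ that define $\BT$ up to $o(1)$ errors. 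This leaves an approximate fixed‑point relation for $\E\BQ$ of the same form as the one satisfied exactly by $\BT$; to conclude one must invert it, i.e.\ show that the linear operator underlying the $(M+N)$‑dimensional system~(\ref{sys_eq}) is uniformly invertible with bounded inverse, so that an $o(1)$ perturbation of the equations gives an $o(1)$ perturbation of the solution in bilinear form. Proving this stability — the non‑singularity and controlled inverse of $\bold{I}$ minus a matrix built from $\{\Tr\BD_j\BT\BD_k\BT/M^2\}$ and $\{\Ba_j^H\BT\Ba_k\}$, analogous to the invertibility of $\bold{I}_{2M}-\bold{B}_M$ exploited elsewhere in the paper — is the main obstacle; it is handled by the nonnegativity/Perron–Frobenius arguments of~\cite[Sec.~5]{hachem2008clt} (already invoked above~(\ref{G_other})), extended to absorb the rank‑bounded LoS term $\BA$ guaranteed by~\textbf{A.3}. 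Combining the fluctuation bound of the second step with this deterministic‑equivalent estimate gives~(\ref{utraceq})--(\ref{uqvcon}); then~(\ref{uqvcon_}) follows from the rank‑one reductions of the first step, and~(\ref{qjj_approx}) follows by feeding $\Bh_j^H\BQ_j\Bh_j$'s concentration into~(\ref{qii_exp}) and comparing with~(\ref{t_jj_exp}).
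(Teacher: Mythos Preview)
Your overall plan matches the paper's: the paper does not give a self-contained argument for Lemma~\ref{appro_diag_ele} but simply refers to the deterministic-equivalent and bilinear-form machinery of~\cite{hachem2007deterministic,hachem2013bilinear}, which is exactly the route you sketch (a priori resolvent bounds, Poincar\'e--Nash/martingale concentration, then comparison of $\E\BQ$ with $\BT$ through the resolvent identity and the stability of the fixed-point system).

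Two technical slips in your sketch are worth flagging, though neither changes the strategy. First, the rank-one bound you quote, $|\bold{u}^H(\BQ-\BQ_i)\bold{v}|\le K\rho^{-1}\|\bold{u}\|_2\|\bold{v}\|_2$, is only $O(1)$, not $O(M^{-1})$; the $O(M^{-1})$ statement holds for \emph{normalized traces} (as in~(\ref{rank_one_per})), not for bilinear forms. So~(\ref{uqvcon_}) does not reduce to~(\ref{uqvcon}) via a rank-one perturbation of $\BQ$; rather, $(\BQ_i,\BT_i)$ is itself the resolvent/deterministic-equivalent pair for the model with column~$i$ removed, and~(\ref{uqvcon_}) is just~(\ref{uqvcon}) applied to that model (assumptions \textbf{A.1}--\textbf{A.3} being stable under removing one column). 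Second, the identity you write, $\E\BQ-\BT=\BT(\BT^{-1}-\E\BQ^{-1})\E\BQ$, is not valid because $(\E\BQ)^{-1}\neq\E(\BQ^{-1})$; the correct starting point is the random identity $\BQ-\BT=\BT(\BT^{-1}-\BQ^{-1})\BQ$, after which one sandwiches by $\bold{u}^H(\cdot)\bold{v}$ and takes expectations, handling the cross term $\E[\BT(\BT^{-1}-\BQ^{-1})(\BQ-\E\BQ)]$ via the concentration estimates of your second step. With these two fixes the argument goes through as you describe.
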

(\ref{utraceq}),~(\ref{uqvcon}), and~(\ref{uqvcon}) can be obtained by a similar approach as~\cite{hachem2007deterministic,hachem2013bilinear}.

\textit{3. Rank-one perturbation~\cite[Lemma 3.1]{hachem2012clt}.} For any matrix $\BA$, the resolvent $\BQ$ and the rank-one perturbation resolvent $\BQ_i$ satisfy
\begin{equation}
\label{rank_one_per}
| \Tr\BA(\BQ-\BQ_i) | \le \frac{\| \BA\|}{-z}.
\end{equation}

\textit{4. Block matrix inverse formula.} For square matrices $\BA$ and $\BB$, the inversion of the block matrix $\begin{bmatrix}
\BA & \BB \\
\BC &\BD
\end{bmatrix}$ is given by~(\ref{block_inv}) at the top of the next page.
\begin{figure*}[t!]
\begin{equation}
\label{block_inv}
\begin{bmatrix}
\BA & \BB \\
\BC &\BD
\end{bmatrix}^{-1}
=\begin{bmatrix}
%\BA^{-1} + \BA^{-1}\BB(\BD-\BC\BA^{-1}\BB )^{-1}\BC\BA^{-1} & -\BA^{-1}\BB(\BD-\BC\BA^{-1}\BB )^{-1} \\
%-(\BD-\BC\BA^{-1}\BB)^{-1}\BC\BA^{-1} &(\BD-\BC\BA^{-1}\BB )^{-1}
\left( \BA-\BB\BD^{-1}\BC \right)^{-1} & -\left( \BA-\BB\BD^{-1}\BC \right)^{-1}\BB\BD^{-1}\\
-\BD^{-1}\BC\left( \BA-\BB\BD^{-1}\BC \right)^{-1} & \BD^{-1}+\BD^{-1}\BC \left( \BA-\BB\BD^{-1}\BC \right)^{-1}\BB\BD^{-1}
\end{bmatrix}.
\end{equation}
\hrulefill
\end{figure*}

5. Trace inequality. If $\BA$ is a non-negative matrix, we have
\begin{equation}
\label{trace_inq}
|\Tr\BA\BB|\le \|\BB \|\Tr\BA.
\end{equation}

\ifCLASSOPTIONcaptionsoff
  \newpage
\fi

% trigger a \newpage just before the given reference
% number - used to balance the columns on the last page
% adjust value as needed - may need to be readjusted if
% the document is modified later
%\IEEEtriggeratref{8}
% The "triggered" command can be changed if desired:
%\IEEEtriggercmd{\enlargethispage{-5in}}

% references section

% can use a bibliography generated by BibTeX as a .bbl file
% BibTeX documentation can be easily obtained at:
% http://mirror.ctan.org/biblio/bibtex/contrib/doc/
% The IEEEtran BibTeX style support page is at:
% http://www.michaelshell.org/tex/ieeetran/bibtex/
%\bibliographystyle{IEEEtran}
% argument is your BibTeX string definitions and bibliography database(s)
%\bibliography{IEEEabrv,../bib/paper}
%
% <OR> manually copy in the resultant .bbl file
% set second argument of \begin to the number of references
% (used to reserve space for the reference number labels box)
\bibliographystyle{IEEEtran}
% argument is your BibTeX string definitions and bibliography database(s)
\bibliography{IEEEabrv,ref}
% biography section
% 
% If you have an EPS/PDF photo (graphicx package needed) extra braces are
% needed around the contents of the optional argument to biography to prevent
% the LaTeX parser from getting confused when it sees the complicated
% \includegraphics command within an optional argument. (You could create
% your own custom macro containing the \includegraphics command to make things
% simpler here.)
%\begin{IEEEbiography}[{\includegraphics[width=1in,height=1.25in,clip,keepaspectratio]{mshell}}]{Michael Shell}
% or if you just want to reserve a space for a photo:

% You can push biographies down or up by placing
% a \vfill before or after them. The appropriate
% use of \vfill depends on what kind of text is
% on the last page and whether or not the columns
% are being equalized.

%\vfill

% Can be used to pull up biographies so that the bottom of the last one
% is flush with the other column.
%\enlargethispage{-5in}

% that's all folks
\end{document}